\newcommand{\Oof}{\mathcal{O}}
\newcommand{\Pp}{\mathcal{P}}
\newcommand{\Qq}{\mathcal{Q}}
\newcommand{\yes}{{\sf  yes}}
\newtheorem{observation}[theorem]{Observation}
\newcounter{claimcounter}
\newtheorem{claim}[claimcounter]{Claim}
\newcommand{\defparprob}[4]{
  \vspace{1mm}
\noindent\fbox{
  \begin{minipage}{0.96\textwidth}
  \begin{tabular*}{\textwidth}{@{\extracolsep{\fill}}lr} #1  & {\bf{Parameter:}} #3 \\ \end{tabular*}
  {\bf{Input:}} #2  \\
  {\bf{Question:}} #4
  \end{minipage}
  }
  \vspace{1mm}
}
\tikzstyle{vertex}=[circle,inner sep=2,minimum size =2mm,semithick,fill=white!80!blue, draw=black]
\title{On the Parameterized Complexity of Reconfiguration of Connected Dominating Sets}
\titlerunning{Reconfiguration of Connected Dominating Set}
\author{Daniel Lokshtanov}{University of California Santa Barbara, Santa Barbara, USA}{daniello@ucsb.edu}{}{}
\author{Amer E. Mouawad}{Department of Computer Science, American University of Beirut, Lebanon}{aa368@aub.edu.lb}{}{}
\author{Fahad Panolan}{Department of Computer Science and Engineering, IIT Hyderabad, India}{fahad@iith.ac.in}{}{}
\author{Sebastian Siebertz}{University of Bremen, Germany}{siebertz@uni-bremen.de}{}{}
\authorrunning{D.\ Lokshtanov et al.} 
\subjclass{CCS $\rightarrow$  Theory of computation $\rightarrow$ Design and analysis of algorithms $\rightarrow$ Parameterized complexity and exact algorithms}
\keywords{reconfiguration, parameterized complexity, connected dominating set, graph structure theory}
\begin{document}
\maketitle

\begin{abstract}
  In a reconfiguration version of an optimization problem
  $\mathcal{Q}$ the input is an instance of $\Qq$ and two feasible
  solutions $S$ and $T$. The objective is to determine whether there exists 
  a step-by-step transformation between $S$ and $T$ such that all
  intermediate steps also constitute feasible solutions. In this work, 
  we study the parameterized complexity of the \textsc{Connected
    Dominating Set Reconfiguration} problem (\textsc{CDS-R)}. It was
  shown in previous work that the \textsc{Dominating Set
    Reconfiguration} problem (\textsc{DS-R}) parameterized by $k$, the
  maximum allowed size of a dominating set in a reconfiguration
  sequence, is fixed-parameter tractable on all graphs that  exclude a biclique $K_{d,d}$ as a subgraph, for some
  constant $d \geq 1$. We show that the
  additional connectivity constraint makes the problem much harder,
  namely, that \textsc{CDS-R} is \textsf{W}$[1]$-hard parameterized by $k+\ell$, the maximum allowed size of a dominating set plus the 
  length of the reconfiguration sequence, already on
  \mbox{$5$-degenerate} graphs.  On the positive side, we show that
  \textsc{CDS-R} parameterized by $k$ is fixed-parameter tractable, and in fact admits a
  polynomial kernel on planar graphs.
\end{abstract}

\pagenumbering{arabic}

\section{Introduction}

In an optimization problem $\Qq$, we are usually asked to determine the
existence of a feasible solution for an instance $\mathcal{I}$ of $\Qq$. In a \emph{reconfiguration version} of $\Qq$, we
are instead given a source feasible solution $S$ and a target feasible
solution $T$ and we are asked to determine whether it is possible to
transform $S$ into $T$ by a sequence of step-by-step transformations
such that after each intermediate step we also maintain feasible solutions. 
Formally, we consider a graph, called the \emph{reconfiguration
  graph}, that has one vertex for each feasible solution and where two
vertices are connected by an edge if we allow the transformation
between the two corresponding solutions. We are then asked to
determine whether $S$ and $T$ are connected in the reconfiguration
graph, or even to compute a shortest path between them.
Historically, the study of reconfiguration questions predates the
field of computer science, as many classic one-player games can be
formulated as such reachability questions~\cite{JS79,KPS08}, e.g., the
$15$-puzzle and Rubik's cube.  More recently, reconfiguration problems
have emerged from computational problems in different areas such as
graph theory~\cite{CHJ08,IDHPSUU11,IKD12}, constraint
satisfaction~\cite{GKMP09,DBLP:conf/icalp/MouawadNPR15} and computational
geometry~\cite{DBLP:books/daglib/0019278,DBLP:conf/stacs/KanjX15,DBLP:journals/comgeo/LubiwP15},
and even quantum complexity
theory~\cite{DBLP:conf/icalp/GharibianS15}.  Reconfiguration problems
have been receiving considerable attention in recent literature, we
refer the reader to~\cite{nishimura2018introduction,
  van2013complexity,mouawad2015reconfiguration} for an extensive
overview.

In this work, we consider the \textsc{Connected Dominating Set
  Reconfiguration} problem (\textsc{CDS-R}) in undirected graphs. A
\emph{dominating set} in a graph $G$ is a set $D\subseteq V(G)$ such
that every vertex of $G$ lies either in $D$ or is adjacent to a vertex
in $D$. A dominating set~$D$ is a \emph{connected dominating set} if
the graph induced by $D$ is connected. The
\textsc{Dominating Set} problem and its connected variant have many
applications, including the modeling of facility location problems,
routing problems, and many more.

We study \textsc{CDS-R} under the \emph{Token Addition/Removal} model
(\textsf{TAR} model).  Suppose we are given a connected dominating
set $D$ of a graph $G$, and imagine that a token/pebble is placed on
each vertex in $D$. The \textsf{TAR} rule allows either the addition or
removal of a single token/pebble at a time from $D$, if this results in a
connected dominating set of size at most a given bound $k\geq 1$.
%
%
%
A sequence $D_1, \ldots, D_\ell$ of connected dominating sets of a
graph $G$ is called a \emph{reconfiguration sequence} between $D_1$
and $D_\ell$ under \textsc{TAR} if the change from~$D_i$ to $D_{i+1}$
respects the \textsf{TAR} rule, for $1\leq i< \ell$.  The
\emph{length} of the reconfiguration sequence is~$\ell-1$.
%
The \textsc{(Connected) Dominating Set Reconfiguration} problem for
\textsf{TAR} gets as input a graph $G$, two (connected) dominating
sets $S$ and $T$ and an integer $k\geq 1$, and the task is to decide
whether there exists a reconfiguration sequence between $S$ and $T$
under \textsf{TAR} using at most $k$ tokens/pebbles.


Structural properties of the reconfiguration graph for $k$-dominating
sets were studied in~\cite{haas2014k,suzuki2016reconfiguration}. The \textsc{Dominating Set Reconfiguration} problem was shown
to be \textsc{PSPACE}-complete in~\cite{mouawad2017parameterized},
even on split graphs, bipartite graphs, planar graphs and graphs of
bounded bandwidth.  Both pathwidth and treewidth of a graph are
bounded by its bandwidth, hence the \textsc{Dominating Set
  Reconfiguration} problem is \textsc{PSPACE}-complete on graphs of
bounded pathwidth and treewidth.  These hardness results motivated the
study of the parameterized complexity of the problem. It was shown
in~\cite{mouawad2017parameterized} that the \textsc{Dominating Set
  Reconfiguration} problem is $\textsf{W}[2]$-hard when parameterized
by $k+\ell$, where $k$ is the bound on the number of tokens and $\ell$
is the length of the reconfiguration sequence. However, the problem
becomes fixed-parameter tractable on graphs that exclude a fixed
complete bipartite graph $K_{d,d}$ as a subgraph, as shown
in~\cite{LokshtanovMPRS18}. Such so-called \emph{biclique-free}
classes are very general sparse graph classes, including in particular
the planar graphs, which are $K_{3,3}$-free.

In this work we study the complexity of \textsc{CDS-R}. The standard
reduction from \textsc{Dominating Set} to \textsc{Connected Dominating
  Set} shows that also \textsc{CDS-R} is \textsc{PSPACE}-complete,
even on graphs of bounded pathwidth (Figure~\ref{fig-ds-red}).

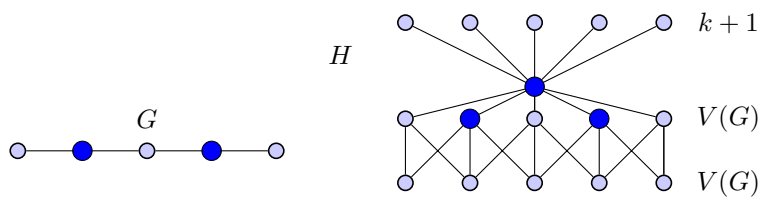
\begin{figure}[ht]
\begin{center}
\begin{tikzpicture}[scale=0.85]
\begin{scope}[yshift=0.5cm]
\node at (2,0.5) {$G$};
\node[vertex] (m0) at (0,0) {}; 
\node[vertex] (m1) at (1,0) {};
\node[vertex] (m2) at (2,0) {};
\node[vertex] (m3) at (3,0) {};
\node[vertex] (m4) at (4,0) {};
\draw[fill=blue] (1,0) circle (1.5mm);
\draw[fill=blue] (3,0) circle (1.5mm);
\draw[-] (m0) -- (m1) -- (m2) -- (m3) -- (m4);
\end{scope}

\begin{scope}[xshift=5cm]
\node[vertex] (n0) at (3,1.5){};

\foreach \i in {1,...,5}{
  
  \node[vertex] (a\i) at (\i,1){};
  \node[vertex] (b\i) at (\i,0){};
  \node[vertex] (n\i) at (\i,2.5){};
}
\foreach \i in {1,...,5}{
  \draw (n0) -- (n\i);
  \draw (n0) -- (a\i);
  \draw (a\i) -- (b\i);
}
\draw[-] (a1) -- (b2) -- (a3) -- (b4) -- (a5) -- (b5) -- (a4) -- (b3) -- (a2) -- (b1);
\draw (n0) -- (a3);
\draw[fill=blue] (2,1) circle (1.5mm);
\draw[fill=blue] (4,1) circle (1.5mm);
\draw[fill=blue] (3,1.5) circle (1.5mm);

\node at (-0.2,0) {\textcolor{white}{$G$}};
\node at (6,0) {$V(G)$};
\node at (6,1) {$V(G)$};
\node at (6,2.5) {$k+1$};
\node at (0,2) {$H$};
\end{scope}
\end{tikzpicture}
\caption{A graph $G$ with a minimum dominating set 
of size $k=2$ marked in dark blue and the graph $H$ 
obtained in the standard reduction from \textsc{Dominating 
Set} to \textsc{Connected 
Dominating Set}. $G$ has a dominating set of size $k$ if and
only if $H$ has a connected dominating set of size $k+1$. 
If $p$ is equal to the pathwidth of $G$ then the 
pathwidth of $H$ is bounded by $2p+1$.}
\end{center}
\end{figure}\label{fig-ds-red}

We hence turn our attention to
the parameterized complexity of the problem.  We first show that the
additional connectivity constraint makes the problem much harder,
namely, that \textsc{CDS-R} parameterized by $k+\ell$ is
\textsf{W}$[1]$-hard already on $5$-degenerate graphs. As
$5$-degenerate graphs exclude the biclique $K_{6,6}$ as a subgraph,
\textsc{Dominating Set Reconfiguration} is fixed-parameter tractable
on much more general graph classes than its connected variant. 
To prove hardness we first introduce an auxiliary problem that
we believe is of independent interest. In the \textsc{Colored
Connected Subgraph} problem we are given a graph $G$, an integer~$k$, and a 
coloring $c\colon V(G)\rightarrow C$, for some color set $C$
with $|C|\leq k$. The question is whether $G$ contains a 
vertex subset $H$ on at most $k$ vertices such that $G[H]$ is connected and $H$ contains at least 
one vertex of every color in $C$ (i.e., $c(V(H))=C$). The reconfiguration 
variant \textsc{Colored Connected Subgraph Reconfiguration (CCS-R)} is defined
as expected. We first prove that \textsc{CCS-R} reduces to 
\textsc{CDS-R} by a parameter preserving reduction (where $k+\ell$ is the parameter)  
and the degeneracy of the reduced graph is at most the degeneracy of the input graph plus one. We then 
prove that the known $\textsf{W}[1]$-hard problem 
\textsc{Multicolored Clique} reduces to
\textsc{CCS-R} on $4$-degenerate graphs. The last reduction 
has the additional property that for an input $(G,c,k)$ of 
\textsc{Multicolored Clique} the resulting instance of 
\textsc{CCS-R} admits either a reconfiguration sequence
of length~$\Oof(k^3)$, or no reconfiguration sequence at
all. Hence, we derive that both \textsc{CDS-R} and 
\textsc{CCS-R} are $\textsf{W}[1]$-hard parameterized
by $k+\ell$ on $5$-degenerate and $4$-degenerate graphs, respectively. 

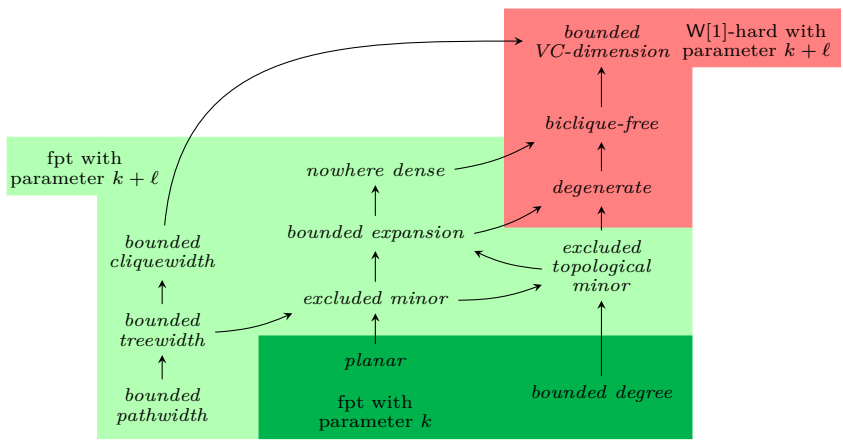
\begin{figure}[!ht]
\begin{center}
  \begin{center}
\begin{tikzpicture}[scale=0.85]
\filldraw[fill=green!30!white, draw=green!30!white] (2.2,-4.7) rectangle (8.5,0);
\filldraw[fill=green!30!white, draw=green!30!white] (8.5,-4.7) rectangle (11.4,-1.4);
\filldraw[fill=green!30!white, draw=green!30!white] (2.2,-0.9) rectangle (0.8,0);
\node[align=center] at (2,-0.5) {\scriptsize fpt with\\[-2mm] \scriptsize parameter $k+\ell$};

\filldraw[fill=red!50!white, draw=red!50!white] (8.5,-1.4) rectangle (11.4,2.0);
\filldraw[fill=red!50!white, draw=red!50!white] (11.4,2.0) rectangle (13.7,1.1);
\node[align=center] at (12.4,1.5) {\scriptsize \textsf{W}$[1]$-hard with\\[-2mm] \scriptsize parameter $k+\ell$};

\filldraw[fill=green!70!blue, draw=green!70!blue] (4.7,-4.7) rectangle (11.4,-3.1);
\node[align=center] at (6.5,-4.3) {\scriptsize fpt with\\[-2mm] \scriptsize parameter $k$};

\node (nd) at (6.5,-0.5) {\scriptsize\textit{nowhere dense}};
\node (biclique) at (10,0.2) {\scriptsize\textit{biclique-free}};
\node[align=center] (bd-VC) at (10,1.5) {\scriptsize\textit{bounded}\\[-2mm]
\scriptsize\textit{VC-dimension}};

\node (bd-deg) at (10,-4) {\scriptsize\textit{bounded degree}};
\node[align=center] (topminor) at (10,-2) {\scriptsize\textit{excluded}\\[-2mm]\scriptsize\textit{topological}\\[-2mm] \scriptsize\textit{minor}};
\node (bd-exp) at (6.5,-1.5) {\scriptsize\textit{bounded expansion}};
\node (degenerate) at (10,-0.8) {\scriptsize\textit{degenerate}};
\node (planar) at (6.5,-3.5) {\scriptsize\textit{planar}};
\node (minor) at (6.5,-2.5) {\scriptsize\textit{excluded minor}};
\node[align=center] (bd-td) at (3.2,-4.2) 
{\scriptsize\textit{bounded}\\[-2mm]\scriptsize\textit{pathwidth}};
\node[align=center] (bd-tw) at (3.2,-3) {\scriptsize\textit{bounded} \\[-2mm] \scriptsize\textit{treewidth}};
\node[align=center] (bd-cw) at (3.2,-1.8) {\scriptsize\textit{bounded} \\[-2mm] \scriptsize\textit{cliquewidth}};


\draw[->,>=stealth] (bd-td) to (bd-tw);
\draw[->,>=stealth] (bd-tw) to (bd-cw);

\draw[->,>=stealth] (planar) to (minor);
\draw[->,>=stealth] (bd-tw) to[bend right=10] (minor.south west);
\draw[->,>=stealth] (minor) to (bd-exp);
\draw[->,>=stealth] (bd-exp) to (nd);

\draw[->,>=stealth] (bd-deg) to (topminor);
\draw[->,>=stealth] (minor) to[bend right=10] (topminor);
\draw[->,>=stealth] (topminor) to[bend left=10] (bd-exp.south east);
\draw[->,>=stealth] (bd-exp.east) to[bend right=10] (degenerate.south west);
\draw[->,>=stealth] (topminor) to (degenerate);
\draw[->,>=stealth] (nd.east) to[bend right=10] (biclique.south west);
\draw[->,>=stealth] (degenerate) to (biclique);
\draw[->,>=stealth] (biclique) to (bd-VC);
\draw[->,>=stealth] (bd-cw)  .. controls (3.5,1.5) and (6, 1.5) ..  (bd-VC.west);

\end{tikzpicture}
\end{center}
\end{center}
\caption{The map of tractability for \textsc{Connected Dominating Set Reconfiguration}. 
The classes colored in dark green admit
an fpt algorithm with parameter $k$, the classes colored in light
green admit an FPT algorithm with parameter $k+\ell$. On the classes colored in red the problem is $\mathsf{W}[1]$-hard with
respect to the parameter $k+\ell$.
}
\end{figure}\label{fig-complex}

The existence of a reconfiguration sequence of length
at most $\ell$ with connected
dominating sets of size at most $k$ can be expressed by a first-order
formula of length depending only on~$k$ and $\ell$. It follows from~\cite{grohe2017deciding} that the
problem is fixed-parameter tractable parameterized by $k+\ell$ on
every nowhere dense graph class and the same is implied
by~\cite{courcelle2000linear} for every class of bounded cliquewidth.
Nowhere dense graph classes are very general classes of uniformly
sparse graphs, in particular the class of planar graphs is nowhere
dense. Nowhere dense classes are themselves biclique-free, but are not
necessarily degenerate.  Hence, our hardness result on degenerate
graphs essentially settles the question of fixed-parameter
tractability for the parameter $k+\ell$ on sparse graph classes.  It
remains an interesting open problem to find dense graph classes beyond
classes of bounded cliquewidth on which the problem is fixed-parameter
tractable.

We then turn our attention to the smaller parameter $k$ alone. We show
that \textsc{CDS-R} parameterized by $k$ is fixed-parameter tractable
on the class of planar graphs.  Our approach is as follows. We first
compute a small \emph{domination core} for $G$, a set of vertices that
captures exactly the domination properties of $G$ for dominating sets
of sizes not larger than $k$. The notion of a domination core was
introduced in the study of the \textsc{Distance-$r$ Dominating Set}
problem on nowhere dense graph classes~\cite{DawarK09}.  While the
classification of interactions with the domination core would suffice
to solve \textsc{Dominating Set Reconfiguration} on nowhere dense
classes, additional difficulties arise for the connected variant. In a
second step we use planarity to identify large subgraphs that have
very simple interactions with the domination core and prove that they
can be replaced by constant size gadgets such that the reconfiguration
properties of $G$ are preserved. 

Observe that \textsc{CDS-R} parameterized by $k$
is trivially fixed-parameter tractable on every class of bounded degree. 
The existence of a connected dominating set of size $k$ implies
that the diameter of $G$ is bounded by $k+2$, which in every
bounded degree class implies a bound on the size
of the graph depending only on the degree and $k$. 
We conjecture that \textsc{CDS-R} is
fixed-parameter tractable parameterized by $k$ on every nowhere dense
graph class. However, resolving this conjecture remains open for
future work (see Figure~\ref{fig-complex}).

The rest of the paper is organized as follows. We give background on
graph theory and fix our notation in Section~\ref{sec:prelims}.  We
show hardness of \textsc{CDS-R}  on
degenerate graphs in Section~\ref{sec:hardness} and finally show how
to handle the planar case in Section~\ref{sec:planar}. 
Due to space constraints proofs of results 
marked with a $\star$ are deferred to the appendix. 

\section{Preliminaries}\label{sec:prelims}
We denote the set of natural numbers by $\mathbb{N}$. For $n \in \mathbb{N}$, we let $[n] = \{1, 2, \dots, n\}$.  We assume that each graph $G$ is finite, simple, and undirected. We let $V(G)$ and $E(G)$ denote the vertex set and edge set of $G$, respectively. An edge between two vertices $u$ and $v$ in a graph is denoted by $\{u,v\}$ or $uv$.
 The {\em open neighborhood} of a vertex $v$ is denoted by $N_G(v) = \{u \mid \{u,v\} \in E(G)\}$ and the {\em closed neighborhood} by $N_G[v] = N_G(v) \cup \{v\}$. The degree of a vertex $v$, denoted $d_G(v)$, is $|N_G(v)|$.  For a set of vertices $S \subseteq V(G)$, we define $N_G(S) = \{v \not\in S \mid \{u,v\} \in E(G), u \in S\}$ and $N_G[S] = N_G(S) \cup S$.  The subgraph of $G$ induced by $S$ is denoted by $G[S]$, where $G[S]$ has vertex set $S$ and edge set $\{\{u,v\} \in E(G) \mid u,v \in S\}$. We let $G - S = G[V(G) \setminus S]$. A graph~$G$ is $d$-degenerate if every subgraph $H\subseteq G$ has a vertex of degree at most $d$.
For a set $C$, we use $K[C]$ to denote the complete graph on vertex set $C$. For an integer $r\in {\mathbb N}$, an $r$-independent set in a graph $G$ is a subset $U\subseteq V(G)$ such that for any two distinct vertices $u,v\in U$, the distance between~$u$ and $v$ in $G$ is more than $r$. An independent set in a graph is a $1$-independent set. 
A subset of vertices $U$ in $G$ is called a  separator in  $G$  if $G-U$ is has more than one connected component.
For $s,t\in V(G)$, we say that a subset of vertices $U$ in $G$ is an  $(s,t)$-separator in~$G$  if 
there is no path from $s$ to $t$ in $G-U$.


\section{Hardness on degenerate graphs}\label{sec:hardness}
In this section we prove that \textsc{CDS-R} and \textsc{CCS-R} are \textsc{W}$[1]$-hard when parameterized 
by~$k+\ell$ even on $5$-degenerate and $4$-degenerate graphs, respectively.  
Towards that, we first give a polynomial-time reduction from the $\textsf{W}[1]$-hard  \textsc{Multicolored Clique} problem
to \textsc{CCS-R} on $4$-degenerate graphs with the property that for an input $(G,c,k)$ of 
\textsc{Multicolored Clique} the resulting instance of \textsc{CCS-R} admits either a reconfiguration sequence
of length~$\Oof(k^3)$ or no reconfiguration sequence at all. 
As a result, we conclude that \textsc{CCS-R} is \textsc{W}$[1]$-hard when parameterized by $k+\ell$ on $4$-degenerate graphs. 
Then, we give a parameter-preserving polynomial-time reduction from \textsc{CCS-R} to \textsc{CDS-R}. 
Let us first formally define the \textsc{CCS} problem. 


\defparprob{{\sc Colored Connected Subgraph (CCS)}}
{A graph $G$, $k\in \mathbb{N}$, and a vertex-coloring $c\colon V(G)\rightarrow C$, where $|C|\leq k$}{$k$}
{Is there a vertex subset $S\subseteq V(G)$ of at most $k$ vertices 
with at least one vertex from every color class such that $G[S]$ is connected?}

\subparagraph*{Reduction from Multicolored Clique to CCS-R.}
We now present the reduction from \textsc{Multicolored Clique}
to \textsc{CCS-R}, which we believe to be of independent interest. 
We can assume, without loss of generality, that for an input $(G,c,k)$ of \textsc{Multicolored Clique}, $G$ is connected 
and $c$ is a proper vertex-coloring, i.e., for any two distinct vertices $u,v\in V(G)$ with $c(u)=c(v)$ we have $\{u,v\}\notin E(G)$. 
Before we proceed let us define a graph operation. 

\begin{definition}
Let $G$ be a graph and let $c\colon V(G)\rightarrow\{1,\ldots, k\}$ be a proper vertex coloring of $V(G)$. Let $H$ be a graph on the vertex set
$\{1,\ldots,k\}$. We define the graph~$G\upharpoonright_c H$ as follows. 
We remove all edges $\{u,v\}\in E(G)$
such that $c(u)=i$ and $c(v)=j$ and $\{i,j\}\not\in E(H)$. 
We subdivide every remaining edge, i.e.\ for every remaining edge $\{u,v\}$
we introduce a new vertex $s_{uv}$, remove the edge $\{u,v\}$ and
introduce instead the two edges $\{u,s_{uv}\}$ and $\{v,s_{uv}\}$. We write
$W(G\upharpoonright_c H)$ for the set of all subdivision vertices $s_{uv}$ (see Figure~\ref{fig:Gcomp}).
\end{definition}

\begin{figure}[!ht]
    \centering
    \begin{subfigure}[b]{0.30\textwidth}
       \begin{tikzpicture}[scale=0.83]
 \draw[fill=yellow!30] (-2.5,0) ellipse (1cm and 0.35cm);
 \draw[fill=red!30] (-2.5,-2) ellipse (1cm and 0.35cm);
 \draw[fill=green!30] (-2.25,-4) ellipse (0.7cm and 0.35cm);
 \draw[fill=blue!20] (-2.5,-6) ellipse (1cm and 0.35cm);
 \node[]  at (-3.1,0) (a) {$\bullet$};
 \node[]  at (-2.5,0) (a) {$\bullet$};

\node at (-2.3,0) {$u$}; 
 
 \node[]  at (-1.75,0) (a) {$\bullet$};
 \node[]  at (-3,-2) (a) {$\bullet$};

\node at (-2.8,-2) {$v$}; 
 
 \node[]  at (-2,-2) (a) {$\bullet$};
\node[]  at (-2.5,-4) (a) {$\bullet$};
 \node[]  at (-3.2,-6) (a) {$\bullet$};
 \node[]  at (-2,-6) (a) {$\bullet$};
\draw (-2.5,0) to[out=190,in=90] (-3,-2);
\draw (-2.5,0)--(-2,-2);
\draw (-1.75,0)--(-2,-2);
\draw (-2.5,-4)--(-2,-2);       
\draw (-2.5,-4)--(-2,-6);
\draw (-2.5,-4) to[out=0,in=-80] (-1.75,0);
\draw (-2,-6) to[out=40,in=-70] (-1.75,0);
\draw (-2,-6) to[out=180,in=180] (-2,-2);
\draw (-3.2,-6) to[out=100,in=220] (-3.1,0);
 \node[]  at (-5,0) (a) {$1$};
 \node[]  at (-5,-2) (a) {$2$};
\node[]  at (-5,-4) (a) {$3$};
 \node[]  at (-5,-6) (a) {$4$};
       \end{tikzpicture}
        \caption{A graph $G$ and a proper coloring $c \colon V(G)\rightarrow \{1,\ldots,4\}$}
        \label{fig:G}
    \end{subfigure}~~~~~~ 
    \begin{subfigure}[b]{0.22\textwidth}
       \begin{tikzpicture}[scale=0.83]
 \node[]  at (-2.5,0) (a) {$\bullet$};
 \node[]  at (-2,-2) (a) {$\bullet$};
\node[]  at (-2.5,-4) (a) {$\bullet$};
 \node[]  at (-2,-6) (a) {$\bullet$};
\draw (-2.5,0)--(-2,-2);
\draw (-2,-6)--(-2,-2);
\draw (-2.5,-4)--(-2,-2);       
\node[]  at (-2.75,0) (a) {$1$};
 \node[]  at (-2.25,-2) (a) {$2$};
\node[]  at (-2.75,-4) (a) {$3$};
 \node[]  at (-2.25,-6) (a) {$4$};
       \end{tikzpicture}
        \caption{A graph $H$ on the vertex set $\{1,\ldots,4\}$}
        \label{fig:H}
    \end{subfigure}~~~~~~
        \begin{subfigure}[b]{0.33\textwidth}
       \begin{tikzpicture}[scale=0.83]
 \draw[fill=yellow!30] (-2.5,0) ellipse (1cm and 0.35cm);
 \draw[fill=red!30] (-2.5,-2) ellipse (1cm and 0.35cm);
 \draw[fill=green!30] (-2.25,-4) ellipse (0.7cm and 0.35cm);
 \draw[fill=blue!20] (-2.5,-6) ellipse (1cm and 0.35cm);
 \node[]  at (-3.1,0) (a) {$\bullet$};
 \node[]  at (-2.5,0) (a) {$\bullet$};

\node at (-2.3,0) {$u$}; 
 
 \node[]  at (-1.75,0) (a) {$\bullet$};
 \node[]  at (-3,-2) (a) {$\bullet$};
 
 \node at (-2.8,-2) {$v$};  
 
 \node[]  at (-2,-2) (a) {$\bullet$};
\node[]  at (-2.5,-4) (a) {$\bullet$};
 \node[]  at (-3.2,-6) (a) {$\bullet$};
 \node[]  at (-2,-6) (a) {$\bullet$};
\draw (-2.5,0) to[out=190,in=90] (-3,-2);
\draw (-2.5,0)--(-2,-2);
\draw (-1.75,0)--(-2,-2);
\draw (-2.5,-4)--(-2,-2);       
\draw (-2,-6) to[out=180,in=180] (-2,-2);

\node[]  at (-3.05,-1) (a) {$\bullet$};
 \node[]  at (-2.25,-1) (a) {$\bullet$};
 \node[]  at (-1.87,-1) (a) {$\bullet$};
 \node[]  at (-3.17,-4) (a) {$\bullet$};
 \node[]  at (-2.25,-3) (a) {$\bullet$};

\node[]  at (-3.95,-1) (a) {$s_{uv}=w_1$};
 \node[]  at (-2.55,-1) (a) {$w_2$};
 \node[]  at (-1.5,-1) (a) {$w_3$};
 \node[]  at (-3.45,-4) (a) {$w_4$};
 \node[]  at (-1.9,-3) (a) {$w_5$};

       \end{tikzpicture}
        \caption{The graph $G\upharpoonright_c H$. Here, $W(G\upharpoonright_c H)=\{w_1, \ldots,w_5\}$}
        \label{fig:GtoH}
    \end{subfigure} 
    \caption{Construction of $G\upharpoonright_c H$.}\label{fig:Gcomp}
\end{figure}

Let $(G,c,k)$ be the input instance of \textsc{Multicolored Clique}, where $G$ is a connected 
graph and $c$ is a proper $k$-vertex-coloring of $G$. We construct an instance 
$(H,\widehat{c}\colon V(H)\mapsto [k+1], Q_s, Q_t, 2k)$  of \textsc{CCS-R}. 

We first construct a routing gadget. For $1\leq i\leq k$, let $T^i$ be the star with vertex set $\{1,\ldots, k\}$ having vertex $i$ as the center. 
For any $1\leq i\leq k$ and $1\leq r\leq 20k$, we let~$H^{(i,r)}$ be a copy of the graph
$G\upharpoonright_c T^i$. We let~$c_{(i,r)}$ be the the partial vertex-coloring of~$H^{(i,r)}$ that is naturally inherited from $G$. 
For an illustration, consider the input instance $(G,c,k)$ of \textsc{Multicolored Clique} depicted in Figure~\ref{fig:G}. 
Then, $T^2$ is identical to the graph $H$ in Figure~\ref{fig:H} and Figure~\ref{fig:GtoH} represents $H^{(2,r)}=G\upharpoonright_c T^2$, for any $1\leq r\leq 20k$.
Now, for $1\leq i\leq k$ we define a graph $H^i$ as follows. 
We use $W(H^{(i,r)})$ to denote the set of subdivision vertices in $H^{(i,r)}$. 
For $1\leq r<20k$ and all vertices $u,v$ in $V(H^{(i,r)})\setminus W(H^{(i,r)})$, we connect the copy of 
the subdivision vertex~$s_{uv}$ in $H^{(i,r)}$ (if it exists) with the 
copies of the vertices~$u$ and $v$ in~$H^{(i,r+1)}$ 
(see Figure~\ref{fig:ConstructionHi} for an illustration of a portion of $H^1$). 
We use $W(H^i)$ to denote the set of subdivision vertices $\bigcup_{r\in [20k]} W(H^{(i,r)})$. 

For each $1\leq i\leq k$, we use $c_i$ to denote a coloring on $V(H^i)$ that is a union of 
$c_{(i,1)},c_{(i,2)},\ldots,c_{(i,20k)}$ and we color all the copies of the subdivision vertices using a new color $k+1$. 
In other words, we know that for each $u\in V(H^i)$ we have  $u\in V(H^{(i,r)})$, for 
some $r\in \{1,\ldots,20k\}$. Hence, if $u\in V(H^{(i,r)}) \setminus W(H^{(i,r)})$ then we set $c_i(u)=c_{(i,r)}(u)$. 
For all $s_{uv}\in W(H^i)$, we set $c_i(s_{uv})=k+1$.

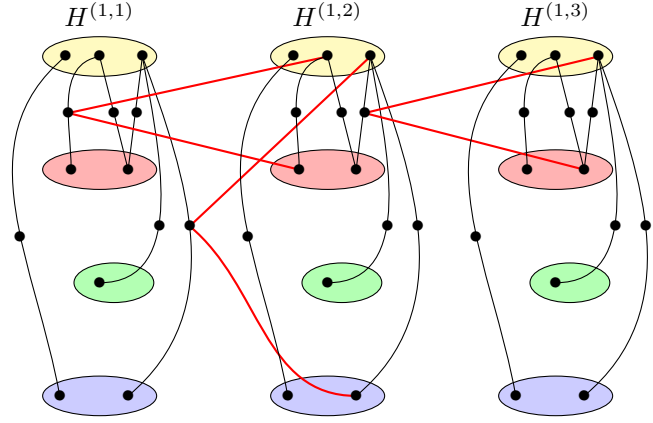
\begin{figure}[t]
    \centering
       \begin{tikzpicture}[scale=0.75]

\node[]  at (-2.5,0.75) (a) {$H^{(1,1)}$};
\node[]  at (1.5,0.75) (a) {$H^{(1,2)}$};
\node[]  at (5.5,0.75) (a) {$H^{(1,3)}$};

 \draw[fill=yellow!30] (-2.5,0) ellipse (1cm and 0.35cm);
 \draw[fill=red!30] (-2.5,-2) ellipse (1cm and 0.35cm);
 \draw[fill=green!30] (-2.25,-4) ellipse (0.7cm and 0.35cm);
 \draw[fill=blue!20] (-2.5,-6) ellipse (1cm and 0.35cm);

\draw[fill=yellow!30] (1.5,0) ellipse (1cm and 0.35cm);
 \draw[fill=red!30] (1.5,-2) ellipse (1cm and 0.35cm);
 \draw[fill=green!30] (1.75,-4) ellipse (0.7cm and 0.35cm);
 \draw[fill=blue!20] (1.5,-6) ellipse (1cm and 0.35cm);

 \draw[fill=yellow!30] (5.5,0) ellipse (1cm and 0.35cm);
 \draw[fill=red!30] (5.5,-2) ellipse (1cm and 0.35cm);
 \draw[fill=green!30] (5.75,-4) ellipse (0.7cm and 0.35cm);
 \draw[fill=blue!20] (5.5,-6) ellipse (1cm and 0.35cm);

 \node[]  at (-3.1,0) (a) {$\bullet$};
 \node[]  at (-2.5,0) (a) {$\bullet$};
 \node[]  at (-1.75,0) (a) {$\bullet$};
 \node[]  at (-3,-2) (a) {$\bullet$};
 \node[]  at (-2,-2) (a) {$\bullet$};
\node[]  at (-2.5,-4) (a) {$\bullet$};
 \node[]  at (-3.2,-6) (a) {$\bullet$};
 \node[]  at (-2,-6) (a) {$\bullet$};
\draw (-2.5,0) to[out=190,in=90] (-3,-2);
\draw (-2.5,0)--(-2,-2);
\draw (-1.75,0)--(-2,-2);
\draw (-2.5,-4) to[out=0,in=-80] (-1.75,0);
\draw (-2,-6) to[out=40,in=-70] (-1.75,0);
\draw (-3.2,-6) to[out=100,in=220] (-3.1,0);


\draw[thick, red] (-3.05,-1) --(1.5,0);
\draw[thick, red] (-3.05,-1)  --(1,-2);
\draw[thick, red] (-0.92,-3) --(2.25,0);
\draw[thick, red] (-0.92,-3) to[out=-40,in=180] (2,-6);

\node[]  at (-3.05,-1) (a) {$\bullet$};
 \node[]  at (-2.25,-1) (a) {$\bullet$};
 \node[]  at (-1.85,-1) (a) {$\bullet$};
 \node[]  at (-1.45,-3) (a) {$\bullet$};
 \node[]  at (-0.92,-3) (a) {$\bullet$};
\node[]  at (-3.9,-3.2) (a) {$\bullet$};

  \node[]  at (0.9,0) (a) {$\bullet$};
 \node[]  at (1.5,0) (a) {$\bullet$};
 \node[]  at (2.25,0) (a) {$\bullet$};
 \node[]  at (1,-2) (a) {$\bullet$};
 \node[]  at (2,-2) (a) {$\bullet$};
\node[]  at (1.5,-4) (a) {$\bullet$};
 \node[]  at (0.8,-6) (a) {$\bullet$};
 \node[]  at (2,-6) (a) {$\bullet$};
\draw (1.5,0) to[out=190,in=90] (1,-2);
\draw (1.5,0)--(2,-2);
\draw (2.25,0)--(2,-2);
\draw (1.5,-4) to[out=0,in=-80] (2.25,0);
\draw (2,-6) to[out=40,in=-70] (2.25,0);
\draw (0.8,-6) to[out=100,in=220] (0.9,0);


\draw[thick, red] (2.15,-1) --(6.25,0);
\draw[thick, red] (2.15,-1)  --(6,-2);

\node[]  at (0.95,-1) (a) {$\bullet$};
 \node[]  at (1.75,-1) (a) {$\bullet$};
 \node[]  at (2.15,-1) (a) {$\bullet$};

\node[]  at (0.1,-3.2) (a) {$\bullet$};

 \node[]  at (2.55,-3) (a) {$\bullet$};
 \node[]  at (3.08,-3) (a) {$\bullet$};

 \node[]  at (4.9,0) (a) {$\bullet$};
 \node[]  at (5.5,0) (a) {$\bullet$};
 \node[]  at (6.25,0) (a) {$\bullet$};
 \node[]  at (5,-2) (a) {$\bullet$};
 \node[]  at (6,-2) (a) {$\bullet$};
\node[]  at (5.5,-4) (a) {$\bullet$};
 \node[]  at (4.8,-6) (a) {$\bullet$};
 \node[]  at (6,-6) (a) {$\bullet$};
\draw (5.5,0) to[out=190,in=90] (5,-2);
\draw (5.5,0)--(6,-2);
\draw (6.25,0)--(6,-2);
\draw (5.5,-4) to[out=0,in=-80] (6.25,0);
\draw (6,-6) to[out=40,in=-70] (6.25,0);
\draw (4.8,-6) to[out=100,in=220] (4.9,0);

\node[]  at (4.95,-1) (a) {$\bullet$};
 \node[]  at (5.75,-1) (a) {$\bullet$};
 \node[]  at (6.15,-1) (a) {$\bullet$};



\node[]  at (4.1,-3.2) (a) {$\bullet$};

 \node[]  at (6.55,-3) (a) {$\bullet$};
 \node[]  at (7.08,-3) (a) {$\bullet$};

 \end{tikzpicture}
    \caption{Construction of $H^1$ from  the instance $(G,c)$ depicted in Figure~\ref{fig:G}. The red edges are some of the ``crossing'' edges but not all of them. 
%
}
\label{fig:ConstructionHi}
\end{figure}

Now, define a graph~$R$, which is super graph of $H^1\cup \ldots \cup H^{k}$, as follows. 
For $1\leq i<k$ and all vertices $u$ and $v$, we connect the copy of 
the subdivision vertex $s_{uv}$ in~$H^{(i,20k)}$ (if it exists) with the 
copies of the vertices $u$ and $v$ in~$H^{(i+1,1)}$ (see Figure~\ref{fig:ConstructionR} for an illustration).

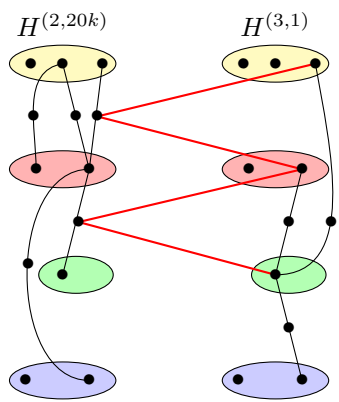
\begin{figure}[ht]
    \centering
       \begin{tikzpicture}[scale=0.7]

\node[]  at (1.5,0.75) (a) {$H^{(2,20k)}$};
\node[]  at (5.5,0.75) (a) {$H^{(3,1)}$};


\draw[fill=yellow!30] (1.5,0) ellipse (1cm and 0.35cm);
 \draw[fill=red!30] (1.5,-2) ellipse (1cm and 0.35cm);
 \draw[fill=green!30] (1.75,-4) ellipse (0.7cm and 0.35cm);
 \draw[fill=blue!20] (1.5,-6) ellipse (1cm and 0.35cm);

 \draw[fill=yellow!30] (5.5,0) ellipse (1cm and 0.35cm);
 \draw[fill=red!30] (5.5,-2) ellipse (1cm and 0.35cm);
 \draw[fill=green!30] (5.75,-4) ellipse (0.7cm and 0.35cm);
 \draw[fill=blue!20] (5.5,-6) ellipse (1cm and 0.35cm);

  \node[]  at (0.9,0) (a) {$\bullet$};
 \node[]  at (1.5,0) (a) {$\bullet$};
 \node[]  at (2.25,0) (a) {$\bullet$};
 \node[]  at (1,-2) (a) {$\bullet$};
 \node[]  at (2,-2) (a) {$\bullet$};
\node[]  at (1.5,-4) (a) {$\bullet$};
 \node[]  at (0.8,-6) (a) {$\bullet$};
 \node[]  at (2,-6) (a) {$\bullet$};
\draw (1.5,0) to[out=190,in=90] (1,-2);
\draw (1.5,0)--(2,-2);
\draw (2.25,0)--(2,-2);
\draw (1.5,-4)--(2,-2);       
\draw (2,-6) to[out=180,in=180] (2,-2);


\draw[thick, red] (2.15,-1) --(6.25,0);
\draw[thick, red] (2.15,-1)  --(6,-2);
\draw[thick, red] (1.8,-3) --(6,-2);
\draw[thick, red] (1.8,-3) --(5.5,-4);

\node[]  at (0.95,-1) (a) {$\bullet$};
 \node[]  at (1.75,-1) (a) {$\bullet$};
 \node[]  at (2.15,-1) (a) {$\bullet$};
 \node[]  at (0.85,-3.8) (a) {$\bullet$};
\node[]  at (1.8,-3) (a) {$\bullet$};

 \node[]  at (4.9,0) (a) {$\bullet$};
 \node[]  at (5.5,0) (a) {$\bullet$};
 \node[]  at (6.25,0) (a) {$\bullet$};
 \node[]  at (5,-2) (a) {$\bullet$};
 \node[]  at (6,-2) (a) {$\bullet$};
\node[]  at (5.5,-4) (a) {$\bullet$};
 \node[]  at (4.8,-6) (a) {$\bullet$};
 \node[]  at (6,-6) (a) {$\bullet$};
 
\node[]  at (5.75,-3) (a) {$\bullet$};
\node[]  at (6.55,-3) (a) {$\bullet$};
\node[]  at (5.75,-5) (a) {$\bullet$};

\draw (5.5,-4)--(6,-2);       
\draw (5.5,-4)--(6,-6);
\draw (5.5,-4) to[out=0,in=-80] (6.25,0);

 \end{tikzpicture}
    \caption{Illustration of the subgraph of $R$ induced on $V(H^{(2,20k)})\cup V(H^{3,1})$ constructed from  the instance $(G,c,k)$ depicted in Figure~\ref{fig:G}. The red edge are some of the ``crossing edges''. 
    }\label{fig:ConstructionR}
\end{figure}

We additionally introduce two subgraphs $H^0$ and 
$H^{k+1}$. The graph $H^0$ is obtained by subdividing each edge of a star on vertex set $\{v_1,\ldots,v_k\}$ 
centered at $v_1$. Here we use $w_2,\ldots,w_k$ to denote the subdivision vertices. 
Similarly, the graph $H^{k+1}$ is obtained by subdividing each edge of star on $\{x_1,\ldots,x_k\}$ 
centered at $x_k$. Here $y_1,\ldots,y_{k-1}$ denote the subdivision vertices.
%
%
%
Let $c_0$ and $c_{k+1}$ be the colorings 
on $\{v_1,\ldots, v_k,w_2,\ldots,w_{k}\}$ and $\{x_1,\ldots, x_k,y_1,\ldots,y_{k-1}\}$, respectively, 
defined as follows.  For all $1\leq i\leq k$, \mbox{$c_0(v_i)=i$} and $c_{k+1}(x_i)=i$. 
For all $2\leq i\leq k$, $c_0(w_i)=k+1$ and for all $1\leq i\leq k-1$, $c_{k+1}(y_i)=k+1$. 
Observe that we may interpret $H^0$ as $K[\{v_1,\ldots,v_k\}]\upharpoonright_{c_{0}} T^0$ and $H^{k+1}$ as $K[\{x_1,\ldots,x_k\}]\upharpoonright_{c_{k+1}} T^{k+1}$, where 
$T^0$ and $T^{k+1}$ are two trees on vertex 
set $\{1,\ldots,k\}$, with $E(T^0)=\{\{1,i\}\colon 2\leq i\leq k\}$ and $E(T^{k+1})=\{\{k,i\}\colon 1\leq i\leq k-1\}$. 

Finally, for each $2\leq i\leq k$, we connect 
the ``subdivision vertex''~$w_i$ (adjacent to $v_1$ and $v_i$) to all vertices $v \in V(H^{(1,1)})$ colored $1$ or $i$, i.e., with $c_{(1,1)}(v)\in \{1,i\}$. 
For each subdivision vertex \mbox{$s_{ab}\in W(H^{(k,20k)})$}, we connect 
$s_{ab}$ to $x_k$ and $x_i$, where $k=c_k(a)=c_{(k,20k)}(a)$ and $i=c_k(b)=c_{(k,20k)}(b)$.  
Recall that $s_{ab}$ is adjacent a vertex of color $k$ and a vertex of color $i$, for some $i < k$.  
This completes the construction of $H$ (see Figure~\ref{fig:ConstructionHR}). We define 
$\widehat{c}\colon V(H)\mapsto [k+1]$ to be the union of $c_0,\ldots,c_{k+1}$.

%
\begin{figure}[h]
    \centering
       \begin{tikzpicture}[scale=0.75]
 
  \draw[fill=yellow!30] (-2.5,0) ellipse (1cm and 0.35cm);
 \draw[fill=red!30] (-2.5,-2) ellipse (1cm and 0.35cm);
 \draw[fill=green!30] (-2.25,-4) ellipse (0.7cm and 0.35cm);
 \draw[fill=blue!20] (-2.5,-6) ellipse (1cm and 0.35cm);

 \draw[fill=yellow!30] (4.5,0) ellipse (1cm and 0.35cm);
 \draw[fill=red!30] (4.5,-2) ellipse (1cm and 0.35cm);
 \draw[fill=green!30] (4.75,-4) ellipse (0.7cm and 0.35cm);
 \draw[fill=blue!20] (4.5,-6) ellipse (1cm and 0.35cm);

 \draw[thick,red]  (-3.1,0)-- (-5.5,-1)--(-3,-2);
\draw[thick,red]  (-5.65,-3)to [out=30,in=-120](-3.1,0); 
\draw[thick,red]  (-5.65,-3)--(-2.5,-4);
\draw[thick,red] (-6.15,-5)--(-3.2,-6);
\draw[thick,red]  (-6.15,-5) to [out=30,in=-100](-3.1,0);
 
 \draw[thick,red]  (-2.5,0) -- (-5.5,-1);

  \node[]  at (-6,0) (a) {$\bullet$};
 \node[]  at (-5,-2) (a) {$\bullet$};
\node[]  at (-5.5,-4) (a) {$\bullet$};
 \node[]  at (-6.2,-6) (a) {$\bullet$};

  \node[]  at (-5.5,-1) (a) {$\bullet$};
 \node[]  at (-5.65,-3) (a) {$\bullet$};
\node[]  at (-6.15,-5) (a) {$\bullet$};

\draw (-6,0)--(-6.2,-6);
\draw (-6,0)--(-5,-2);
\draw (-6,0)--(-5.5,-4);

\node[]  at (-6.3,0) (a) {$v_1$};
 \node[]  at (-5.3,-2) (a) {$v_2$};
\node[]  at (-5.8,-4) (a) {$v_3$};
 \node[]  at (-6.5,-6) (a) {$v_4$};
      
  \node[]  at (-5.65,-1.25) (a) {$w_2$};
 \node[]  at (-5.9,-3.1) (a) {$w_3$};
\node[]  at (-6.5,-5) (a) {$w_4$};
       

\node[]  at (-2.5,0.75) (a) {$H^{(1,1)}$};
\node[]  at (5.5,0.75) (a) {$H^{(4,20k)}$};

 \node[]  at (-3.1,0) (a) {$\bullet$};
 \node[]  at (-2.5,0) (a) {$\bullet$};
 \node[]  at (-1.75,0) (a) {$\bullet$};
 \node[]  at (-3,-2) (a) {$\bullet$};
 \node[]  at (-2,-2) (a) {$\bullet$};
\node[]  at (-2.5,-4) (a) {$\bullet$};
 \node[]  at (-3.2,-6) (a) {$\bullet$};
 \node[]  at (-2,-6) (a) {$\bullet$};
\draw (-2.5,0) to[out=190,in=90] (-3,-2);
\draw (-2.5,0)--(-2,-2);
\draw (-1.75,0)--(-2,-2);
\draw (-2.5,-4) to[out=0,in=-80] (-1.75,0);
\draw (-2,-6) to[out=40,in=-70] (-1.75,0);
\draw (-3.2,-6) to[out=100,in=220] (-3.1,0);

\node[]  at (-3.05,-1) (a) {$\bullet$};
 \node[]  at (-2.25,-1) (a) {$\bullet$};
 \node[]  at (-1.85,-1) (a) {$\bullet$};
 \node[]  at (-1.45,-3) (a) {$\bullet$};
 \node[]  at (-0.92,-3) (a) {$\bullet$};
\node[]  at (-3.9,-3.2) (a) {$\bullet$};


\draw[thick,red] (6.08,-3)--(8,0);
\draw[thick,red] (6.08,-3)to [out=0,in=110](8,-6);

\draw[thick,red] (4.75,-5)--(6.8,-4);
\draw[thick,red] (4.75,-5)--(8,-6);

 \node[]  at (3.9,0) (a) {$\bullet$};
 \node[]  at (4.5,0) (a) {$\bullet$};
 \node[]  at (5.25,0) (a) {$\bullet$};
 \node[]  at (4,-2) (a) {$\bullet$};
 \node[]  at (5,-2) (a) {$\bullet$};
\node[]  at (4.5,-4) (a) {$\bullet$};
 \node[]  at (3.8,-6) (a) {$\bullet$};
 \node[]  at (5,-6) (a) {$\bullet$};
\draw (4.5,-4)--(5,-6);
\draw (5,-6) to[out=40,in=-70] (5.25,0);
\draw (5,-6) to[out=180,in=180] (5,-2);
\draw (3.8,-6) to[out=100,in=220] (3.9,0);

 
 \node[]  at (4.75,-5) (a) {$\bullet$};

\node[]  at (3.1,-3.2) (a) {$\bullet$};

 \node[]  at (4,-3) (a) {$\bullet$};


 \node[]  at (6.08,-3) (a) {$\bullet$};

        
  \node[]  at (8,0) (a) {$\bullet$};
 \node[]  at (7.25,-2) (a) {$\bullet$};
\node[]  at (6.75,-4) (a) {$\bullet$};
 \node[]  at (8,-6) (a) {$\bullet$};

  \node[]  at (8,-1) (a) {$\bullet$};
 \node[]  at (7.45,-3) (a) {$\bullet$};
\node[]  at (7.35,-5) (a) {$\bullet$};

\draw (8,0)--(8,-6);
\draw (7.25,-2)--(8,-6)--(6.75,-4);

\node[]  at (8.3,0) (a) {$x_1$};
 \node[]  at (7.6,-2) (a) {$x_2$};
\node[]  at (6.7,-4.25) (a) {$x_3$};
 \node[]  at (8.3,-6) (a) {$x_4$};
      
  \node[]  at (8.3,-1) (a) {$y_1$};
 \node[]  at (7.8,-3) (a) {$y_2$};
\node[]  at (7.3,-5.3) (a) {$y_3$};

 \end{tikzpicture}
    \caption{Illustration of connection between $H^0$ and $R$, and $H^{k+1}$ and $R$  from the instance $(G,c,k)$ depicted in Figure~\ref{fig:G}. The red edge are some of the ``crossing edges'' between $H^0$ and $H^1$, and $H^{k}$ and $H^{k+1}$.} 

\label{fig:ConstructionHR}
\end{figure}
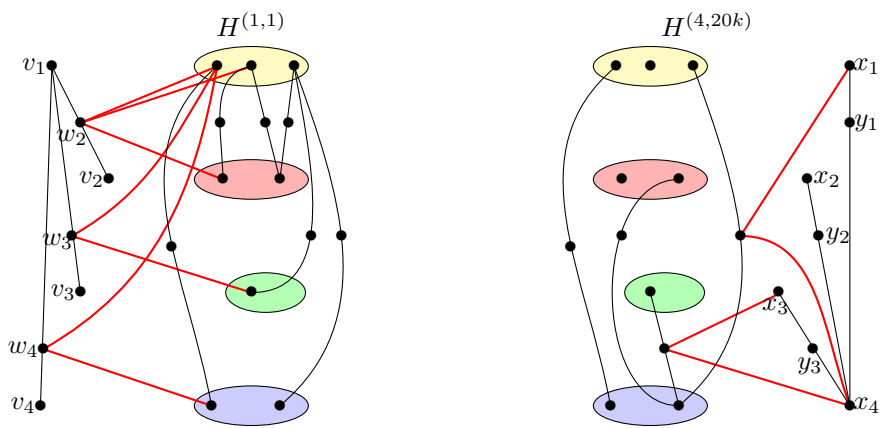


\begin{observation}
The sets $\{ v_1,\ldots, v_k, w_2,\ldots, w_{k}\}$
and $\{x_1,\ldots, x_k$, $y_1,\ldots, y_{k-1}\}$ 
are solutions of size $2k - 1$ of the \textsc{CCS} instance $(H,\widehat{c},2k)$. 
\end{observation}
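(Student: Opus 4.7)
The plan is to verify each of the three requirements of a \textsc{CCS} solution (size bound, color coverage, connectedness) for both candidate sets, exploiting the fact that each set is exactly the vertex set of one of the two subdivided-star gadgets $H^0$, $H^{k+1}$ that were plugged onto the ends of $R$.

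First I would handle the \emph{size} and \emph{color coverage}. Each set has cardinality $k + (k-1) = 2k-1 \le 2k$, so the budget is satisfied. For $S_0 := \{v_1,\ldots,v_k,w_2,\ldots,w_k\}$, the definition of $c_0$ gives $\widehat{c}(v_i)=c_0(v_i)=i$ for $i\in[k]$, so every color in $\{1,\ldots,k\}$ appears, and $\widehat{c}(w_i)=c_0(w_i)=k+1$ for $2\le i\le k$, which supplies the last color $k+1$. An identical argument using $c_{k+1}(x_i)=i$ and $c_{k+1}(y_i)=k+1$ shows that $S_{k+1}:=\{x_1,\ldots,x_k,y_1,\ldots,y_{k-1}\}$ meets every color class.

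Next I would verify \emph{connectedness}. By construction, $H^0$ is obtained by subdividing every edge of the star on $\{v_1,\ldots,v_k\}$ centered at $v_1$, with subdivision vertices $w_2,\ldots,w_k$, so $V(H^0)=S_0$ and $H^0$ is connected (each leaf $v_i$ reaches the center $v_1$ via the subdivision vertex $w_i$). No additional edges of $H$ are incident to this vertex set that could break connectivity, so $H[S_0]$ is connected. The argument for $S_{k+1}$ is symmetric: $H^{k+1}$ is the subdivided star on $\{x_1,\ldots,x_k\}$ centered at $x_k$ with subdivision vertices $y_1,\ldots,y_{k-1}$, and hence $H[S_{k+1}]=H^{k+1}$ is connected.

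There is essentially no obstacle here; the observation is an immediate consequence of the definitions of $H^0$, $H^{k+1}$ and of the colorings $c_0$, $c_{k+1}$. The only point that warrants a sentence of care is confirming that the extra ``crossing'' edges added between $H^0$ and $H^{(1,1)}$ (respectively between $H^{(k,20k)}$ and $H^{k+1}$) do not remove any edge of $H^0$ or $H^{k+1}$; since those extra edges only \emph{add} incidences to vertices already in $S_0$ or $S_{k+1}$, the induced subgraphs $H[S_0]$ and $H[S_{k+1}]$ still contain all edges of $H^0$ and $H^{k+1}$, and hence remain connected.
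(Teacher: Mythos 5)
Your proof is correct and is the natural verification that the paper leaves implicit (the statement appears as an unproved Observation). You check size, color coverage via the definitions of $c_0$ and $c_{k+1}$, and connectedness via the subdivided-star structure of $H^0$ and $H^{k+1}$; the extra remark that the cross-gadget edges can only add edges to the induced subgraphs, and hence cannot harm connectivity, is a sensible point of care.
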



We define the starting configuration $Q_s$ as the set 
$\{v_1,\ldots, v_k, w_2,\ldots, w_{k}\}$
and the target configuration $Q_t$ as the set 
$\{x_1,\ldots, x_k$, $y_1,\ldots, y_{k-1}\}$. 
We now consider the instance $(H,\widehat{c},Q_s,Q_t,2k)$ of the 
\textsc{CCS-R} problem. That is, the bound on the sizes of the solutions in the reconfiguration sequence is at most $2k$. 
Before we analyze the reconfiguration properties of $H$, let us
verify that $H$ is $4$-degenerate. 

\begin{lemma}[$\star$]
\label{lem:fourdeg}
The graph $H$ is $4$-degenerate. 
\end{lemma}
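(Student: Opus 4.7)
I would prove Lemma~\ref{lem:fourdeg} by exhibiting an elimination ordering of $V(H)$ under which every vertex has at most four later neighbors, which is equivalent to $4$-degeneracy. The central structural fact is that $H$ is bipartite: every edge runs between a \emph{subdivision-type} vertex (one of the $s_{uv}$ in $\bigcup_{i,r}W(H^{(i,r)})$, or one of $w_2,\ldots,w_k$ in $V(H^0)$, or one of $y_1,\ldots,y_{k-1}$ in $V(H^{k+1})$) and an \emph{original-type} vertex (a copy of some vertex of $G$ inside some $H^{(i,r)}$, or one of $v_1,\ldots,v_k$, or one of $x_1,\ldots,x_k$). In particular, two original-type vertices are never adjacent, and two vertices of $\bigcup_{i,r}W(H^{(i,r)})$ are never adjacent either.

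\textbf{Step 1.} A direct count shows that every $s_{uv}\in W(H^{(i,r)})$ has degree exactly four in $H$: two neighbors inside $H^{(i,r)}$ (the copies of $u$ and $v$) and two neighbors in the next layer, namely the copies of $u,v$ in $H^{(i,r+1)}$ when $r<20k$, in $H^{(i+1,1)}$ when $r=20k$ and $i<k$, or the vertices $x_k$ and $x_{c(v)}$ of $H^{k+1}$ when $(i,r)=(k,20k)$. Similarly every $y_j$ has degree exactly two, since it only sits inside the subdivided star $H^{k+1}$.

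\textbf{Step 2 --- peeling phases.} (i) Remove every $s_{uv}\in \bigcup_{i,r}W(H^{(i,r)})$; each of these has degree exactly $4$ throughout this phase. (ii) After (i), all non-subdivision vertices of $H^{(i,r)}$ with $(i,r)\neq (1,1)$ are isolated, since their only neighbors in $H$ were subdivision vertices of their own or the preceding layer; remove them. (iii) Inside $H^{k+1}$ peel $x_1,\ldots,x_{k-1}$ (each of degree $1$ to its $y_j$), then $y_1,\ldots,y_{k-1}$ (now of degree $1$ to $x_k$), and finally the now isolated $x_k$. (iv) Peel $v_2,\ldots,v_k$ (each of degree $1$ to its $w_i$) and every non-subdivision vertex of $H^{(1,1)}$ whose color is different from $1$ (each now has degree $1$, namely the edge to the unique matching $w_i$). (v) Finally, peel the remaining bipartite core between $\{v_1\}$ together with the color-$1$ non-subdivision vertices of $H^{(1,1)}$ on one side and $\{w_2,\ldots,w_k\}$ on the other, choosing the order so that the two anchor neighbors $v_1,v_i$ that each $w_i$ retains inside $H^0$ can be used to keep the forward-degree below $5$.

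\textbf{Main obstacle.} The technical heart of the argument is phase~(v): the residual subgraph has a near-complete-bipartite structure, and producing a $4$-degenerate peeling requires choosing the internal order carefully so that the anchor neighbors of each $w_i$ inside $H^0$ are consumed at the right moment. All the other phases are immediate consequences of the layered bipartite structure together with the degree-$4$ property of the standard subdivision vertices established in Step~1.
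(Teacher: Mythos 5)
Your peeling strategy is the same one the paper uses: remove all subdivision vertices $s_{uv}$ (each of degree at most~$4$), then observe that the remaining vertices of the routing layers $H^1,\ldots,H^k$ lose almost all of their degree, then clean up $H^0$ and $H^{k+1}$. Phases (i)--(iv) of your plan are correct and essentially mirror the paper's elimination order, and Step~1 is right: every $s_{uv}\in W(H^{(i,r)})$ has degree exactly $4$ and every $y_j$ has degree $2$.

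The gap is exactly at the place you flag as the ``main obstacle,'' phase (v), and it is not a merely technical issue of ``choosing the internal order carefully.'' After phases (i)--(iv) the residual graph consists of $v_1$, the subdivision vertices $w_2,\ldots,w_k$ of $H^0$, and the color-$1$ non-subdivision vertices of $H^{(1,1)}$, and it is the \emph{complete} bipartite graph between $\{v_1\}\cup\{\text{color-}1\text{ vertices}\}$ and $\{w_2,\ldots,w_k\}$: by construction $w_i$ is adjacent to $v_1$ and to \emph{every} color-$1$ vertex of $H^{(1,1)}$, and each color-$1$ vertex is adjacent to every $w_i$. No elimination order makes $K_{m+1,k-1}$ $4$-degenerate once both sides have at least $5$ vertices, since degeneracy is at least the minimum degree of a subgraph and $K_{5,5}$ has minimum degree $5$. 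Your proposed remedy is also internally inconsistent: you appeal to ``the two anchor neighbors $v_1,v_i$ that each $w_i$ retains inside $H^0$,'' but $v_i$ for $i\geq 2$ was already peeled in phase (iv); after that step $w_i$ has only $v_1$ as an $H^0$-neighbor, so the mechanism you invoke is not available. The paper's proof sidesteps this by asserting that after deleting the subdivision vertices every remaining vertex of $H^i$ has degree at most one, but that assertion also does not account for the color-$1$ vertices of $H^{(1,1)}$ being adjacent to \emph{all} of $w_2,\ldots,w_k$; to repair the argument one either has to rework the $H^0\!-\!H^{(1,1)}$ attachment (e.g., attach $w_i$ only to a constant number of color-$1$ representatives) or change the peeling so that the color-$1$ vertices and the $w_i$'s are never left forming a large complete bipartite block. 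As written, phase (v) of your proposal does not establish $4$-degeneracy.
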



\begin{lemma}[$\star$]\label{lem:forward}
if there exists a $k$-colored clique in $G$ then there is reconfiguration sequence of length $\Oof(k^3)$ from $Q_s$ to $Q_t$ in $(H,\widehat{c},2k)$.
\end{lemma}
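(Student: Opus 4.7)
Fix a multicolored clique $\{u_1,\ldots,u_k\}$ in $G$ with $c(u_j)=j$. For $i\in[k]$ and $r\in[20k]$, let $u_j^{(i,r)}$ and $s_{uv}^{(i,r)}$ denote the copies of $u_j$ and $s_{uv}$ inside $H^{(i,r)}$, and define the \emph{clique configuration}
\[ D_i^{(r)} \;=\; \{u_j^{(i,r)}:j\in[k]\}\cup\{s_{u_iu_j}^{(i,r)}:j\neq i\}. \]
Since $T^i$ is the star centered at $i$, the only surviving edges of $G\upharpoonright_c T^i$ incident to $u_j^{(i,r)}$ (for $j\neq i$) run through $s_{u_iu_j}^{(i,r)}$, so $D_i^{(r)}$ has size $2k-1$, induces a star with hub $u_i^{(i,r)}$, and contains a vertex of every color in $[k+1]$. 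Hence $D_i^{(r)}$ is a valid \textsc{CCS}-solution, leaving exactly one slack token inside the budget~$2k$. The plan is to realize the route
\[ Q_s \;\to\; D_1^{(1)} \;\to\; D_1^{(2)} \;\to\; \cdots \;\to\; D_1^{(20k)} \;\to\; D_2^{(1)} \;\to\; \cdots \;\to\; D_k^{(20k)} \;\to\; Q_t \]
by exhibiting $\Oof(k)$ \textsf{TAR}-moves between each pair of consecutive configurations.

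The workhorse will be the intra-$H^i$ step $D_i^{(r)}\to D_i^{(r+1)}$ (Phase~B). Exploiting the crossing edges $s_{uv}^{(i,r)}\sim u^{(i,r+1)},v^{(i,r+1)}$, I would first swap the hubs: add $u_i^{(i,r+1)}$, which is adjacent to every $s_{u_iu_j}^{(i,r)}$ already present; then remove $u_i^{(i,r)}$, which is no longer a cut vertex because each $s_{u_iu_j}^{(i,r)}$ remains attached to the new hub. After that, for each $j\neq i$ in turn, I would perform a four-move block: add $u_j^{(i,r+1)}$ (adjacent to $s_{u_iu_j}^{(i,r)}$), remove the pendant $u_j^{(i,r)}$, add $s_{u_iu_j}^{(i,r+1)}$ (adjacent to both $u_i^{(i,r+1)}$ and $u_j^{(i,r+1)}$), and remove the now-redundant $s_{u_iu_j}^{(i,r)}$. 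One Phase~B costs $4k-2$ moves, and every intermediate set will have size at most $2k$, be connected, and realize every color.

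The initial jump $Q_s\to D_1^{(1)}$ (Phase~A) and the final jump $D_k^{(20k)}\to Q_t$ (Phase~D) will reuse the same four-move swap pattern, with the edges from each $w_i$ to the color-$1$ and color-$i$ vertices of $H^{(1,1)}$ (respectively the edges from each $s_{u_ku_j}^{(k,20k)}$ to $x_k$ and $x_j$) playing the role of vertical crossing edges. The inter-gadget jump $D_i^{(20k)}\to D_{i+1}^{(1)}$ (Phase~C) will be a slight elaboration: first port the hub and the $k-1$ leaves into $H^{(i+1,1)}$ exactly as in Phase~B, leaving the star temporarily centered at $u_i^{(i+1,1)}$ but still spanned by the old subdivisions $s_{u_iu_j}^{(i,20k)}$; then switch the hub from $u_i^{(i+1,1)}$ to $u_{i+1}^{(i+1,1)}$ by adding $s_{u_{i+1}u_i}^{(i+1,1)}$ and then removing the newly redundant $s_{u_iu_{i+1}}^{(i,20k)}$; finally, replace each remaining subdivision $s_{u_iu_j}^{(i,20k)}$ for $j\notin\{i,i+1\}$ by $s_{u_{i+1}u_j}^{(i+1,1)}$ via a two-move swap, producing $D_{i+1}^{(1)}$. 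Each of Phases~A, C and D uses $\Oof(k)$ moves.

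Summing up, Phase~B is invoked $k(20k-1)=\Oof(k^2)$ times and the other phases $\Oof(k)$ times, each at cost $\Oof(k)$, for a total reconfiguration length of $\Oof(k^3)$. The main technical obstacle will be to verify that every single token move preserves connectivity and coverage of all $k+1$ colors under the tight slack of exactly one token; the invariant ``the current set always induces a spanning star whose hub is the unique color-$i$ vertex of the active $H^{(i,r)}$'' is what should guarantee that every vertex removed is either a pendant at the moment of removal or is shadowed by a neighbor just added.
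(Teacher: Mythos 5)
Your plan is correct and follows essentially the same route as the paper: define the $2k-1$-vertex ``clique configuration'' inside each $H^{(i,r)}$, show it is a valid \textsc{CCS} solution, and thread it through $H^0, H^{(1,1)},\ldots,H^{(k,20k)}, H^{k+1}$ by $\Oof(k)$-move bursts, for a total of $\Oof(k^2)$ bursts and hence $\Oof(k^3)$ moves. I have checked your Phase~B four-move block and your Phase~C two-move blocks against the crossing-edge structure; each intermediate set stays connected, covers all $k+1$ colors, and never exceeds $2k$ tokens, so the slack-one invariant holds.

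The one place you genuinely diverge from the paper is the inter-gadget step $D_i^{(20k)} \to D_{i+1}^{(1)}$. The paper first ports the clique vertices $C_{i+1}$ into $H^{(i+1,1)}$ and then invokes a standalone tree-reconfiguration lemma (Lemma~\ref{lem:reconf-tree}), which states that any two spanning trees of $K_k$ can be transformed into each other by a sequence of single edge swaps with every intermediate graph a tree; this abstract ordering is then lifted to an ordering of the subdivision tokens. You instead exploit the fact that $T^i$ and $T^{i+1}$ are both stars and hard-code the swap order: first trade $s_{u_iu_{i+1}}^{(i,20k)}$ for $s_{u_iu_{i+1}}^{(i+1,1)}$ (reattaching $u_{i+1}$ locally), then replace each remaining $s_{u_iu_j}^{(i,20k)}$ by $s_{u_{i+1}u_j}^{(i+1,1)}$. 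Your inline route is more elementary and saves a lemma, while the paper's route would work verbatim if the $T^i$ were arbitrary trees rather than stars. Minor cosmetic differences (you swap the hub before the leaves and interleave leaf swaps with subdivision swaps, whereas the paper does all leaf swaps, then the hub, then all subdivision swaps) do not affect correctness or the move count $4k-2$ per burst.
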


\begin{proof}[Proof sketch]
We aim
to shift the connected vertices of $Q_s$ through the subgraphs
$H^1,\ldots, H^k$ (in that order) to maintain connectivity and eventually shift all the tokens to $Q_t$.  
For each $u_i\in V(G)$, $1\leq j\leq k$ and $1\leq r\leq 20k$, we use  $u_i^{(j,r)}$ to denote the copy of $u_i$ in $H^{(j,r)}$. 
Let $C=\{u_1,\ldots,u_k\}$ be a $k$-colored clique in $G$ such that $c(u_i)=i$, for all $1\leq i\leq k$. 
To prove the lemma, we need to define a reconfiguration sequence starting from $Q_s$ and ending at $Q_t$ 
such that the cardinality of any solution in the sequence is at most $2k$. 
First we define $k$ ``colored'' trees $\widehat{T}_1,\ldots, \widehat{T}_k$ each on $2k-1$ vertices, and then 
prove that there are reconfiguration sequences from $Q_s$ to $V(\widehat{T}_1)$,  $V(\widehat{T}_i)$ 
to $V(\widehat{T}_{i+1})$ for all $1\leq i<k$, and $V(\widehat{T}_k)$ to $Q_t$. 

We start by defining $\widehat{T}_1,\ldots, \widehat{T}_k$. 
For each $1\leq i\leq k$, $C_i=\{u^{(i,1)}_1,\ldots,u_k^{(i,1)}\}$ and $S_i=\{z\in V(H^{(i,1)})\colon N_{H^{(i,1)}}(z)\cap C_i=2\}$. 
That is, for each $1\leq j\leq k$ and $j\neq i$,  $s_{u_i^{(i,1)}u_j^{(i,1)}}\in S_i$ (the subdivision vertex on the edge $u_i^{(i,1)}u_j^{(i,1)}$ is in $S_i$), 
and $\vert S_i\vert =k-1$. In other words, $C_i$ contains the copies of the vertices of the clique $C$ in $H^{(i,1)}$ and $S_i$ contains subdivision vertices corresponding to $k-1$ edges in the clique incident on the $i$th colored vertex of the clique, such that $H[C_i\cup S_i]$ is a tree. Now, define $\widehat{T}_i=H[C_i\cup S_i]$. 
It is easy to verify that $\widehat{c}(C_i\cup S_i)=\{1,\ldots,k+1\}$ and hence $C_i\cup S_i=V(\widehat{T}_i)$ 
is a solution to the \textsc{CCS} instance $(H,\widehat{c},2k)$. 
Let $T_s=H[Q_s]$ and $T_t=H[Q_t]$. Note that $T_s$ and $T_t$ are trees on $2k-1$ vertices. 

\medskip
\noindent{\bf Case 1:  Reconfiguration from $Q_s$ to $V(\widehat{T}_1)$.} 
Informally, we move to $\widehat{T}_1$ by adding a token on $u_i^{(1,1)}$ and then removing tokens from $v_i$ for $i$ in the order  
$2,\ldots,k,1$ (for a total of $2k$ token additions/removals). Finally, we move the tokens from $\{w_2,\ldots,w_{k-1}\}$ to $S_1$ in $2(k-1)$ steps.  
The length of the reconfiguration sequence is $2k+2(k-1)=4k-2$.

\medskip
\noindent{\bf Case 2:  Reconfiguration from $V(\widehat{T}_i)$ to $V(\widehat{T}_{i+1})$.} First we define $20k$ trees $P_1,\ldots P_{20k}$, each on $2k-1$ vertices 
such that for all $1\leq r\leq 20k$, $(i)$ $V(P_r)\subseteq V(H^{(i,r)})$, and $(ii)$ $\widehat{T}_i=P_1$. 
Then we give a reconfiguration sequence from $V(P_r)$ to $V(P_{r+1})$ for all $r\in [20k-1]$ and a reconfiguration sequence from $V(P_{20k})$ to $V(\widehat{T}_{i+1})$.  

Recall that $C=\{u_1,\ldots,u_k\}$ is a $k$-colored clique in $G$ such that $c(u_i)=i$ for all $1\leq i\leq k$. 
For $1\leq r\leq 20k$, let $C_i^r=\{u^{(i,r)}_1,\ldots,u_k^{(i,r)}\}$ and \mbox{$S_i^r=\{z\in V(H^{(i,r)})\colon N_{H^{(i,r)}}(z)\cap C_i^r=2\}$}. 
That is, for each $1\leq j\leq k$ and $j\neq i$,  $s_{u_i^{(i,r)}u_j^{(i,r)}}\in S_i^r$ 
(i.e, the subdivision vertex on the edge $u_i^{(i,r)}u_j^{(i,r)}$ is in $S_i^r$) and $\vert S_i^r\vert =k-1$. 
Let $P_r=H[C_i^r\cup S^r_i]$. 
Notice that  for all $r\in [20k]$, $P_r$
is a tree on $2k-1$ vertices.   
Moreover, for each $1\leq r\leq 20k$, $V(P_r)$ is a solution to the \textsc{CCS} instance $(H,\widehat{c},2k)$. 
%
%
By arguments similar to those given for Case $1$, one can prove that  there is a reconfiguration sequence of length  
$4k-2$ from $V(P_r)$ to $V(P_{r+1})$, for all $1\leq r<20k$.

For the reconfiguration sequence from $V(P_{20k})$ to $V(\widehat{T}_{i+1})$ we refer the reader to the complete proof in the appendix.

\medskip
\noindent{\bf Case 3:  Reconfiguration from $V(\widehat{T}_k)$ to $V({T}_t)$.} 
The arguments for this case are similar to those given in Case 1, we therefore omit the details.  
\end{proof}

\begin{lemma}[$\star$]
\label{lem:backward}
If there is a reconfiguration sequence  from $Q_s$ to $Q_t$ then there is 
a $k$-colored clique in $G$. 
\end{lemma}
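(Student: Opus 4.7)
The plan is to extract a $k$-colored clique $C = \{u_1, \ldots, u_k\}$ from the reconfiguration sequence $D_0, \ldots, D_m$ by projecting each intermediate configuration onto $V(G) \cup E(G)$ and showing that the vertex part of this projection is constant throughout the middle of the sequence. Concretely, I would define $\pi(u^{(i,r)}) = u$ and $\pi(s_{uv}^{(i,r)}) = uv$ and write $V_\pi(D) = \pi(D) \cap V(G)$ and $E_\pi(D) = \pi(D) \cap E(G)$. The goal is to produce a set $C$ containing one vertex of each color in $[k]$ such that, for every $i$, the color-$i$ member of $C$ is $G$-adjacent to all other members; this immediately makes $C$ a clique.

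To set up the pigeonhole, I would index the layers of $H$ linearly, so that $H^0$ is layer $0$, each $H^{(i,r)}$ is layer $20k(i-1)+r$, and $H^{k+1}$ is layer $20k^2+1$. Because adjacencies in $H$ occur only within a layer or between consecutive ones and $|D_t| \le 2k$, the layers touched by $D_t$ form a consecutive interval $[L(t), R(t)]$ of width at most $2k-1$, with both endpoints shifting by at most one per TAR step. Let $t^+$ be the first time $D_t$ touches $V(H^{k+1})$ and $t^-$ the last $t < t^+$ with $D_t \cap V(H^0) \ne \emptyset$; then on $(t^-, t^+)$ the configuration lies in $V(H^1) \cup \ldots \cup V(H^k)$. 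Since $L(t^-+1)=1$, $L(t^+-1) \ge 20k(k-1)+1$, and $L$ is integer-valued with single-step changes, for every $i \in [k]$ there is a time $t_i$ with $L(t_i) = 20k(i-1)+1$; then $R(t_i) \le L(t_i) + 2k - 1 \le 20ki$, so $D_{t_i} \subseteq V(H^i)$.

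Next I would prove the structural claim: whenever $D \subseteq V(H^i) \cup V(H^{i+1})$ is a CCS solution of size at most $2k$, then $|V_\pi(D)| = k$ (with exactly one vertex of each color in $[k]$) and $|E_\pi(D)| \in \{k-1, k\}$. The map $\pi$ sends every edge of $H$ lying in the middle region to an edge of the bipartite incidence graph $\mathcal{B}(G)$ on $V(G) \cup E(G)$, so $\pi(D)$ induces a connected subgraph of $\mathcal{B}(G)$. In the induced subgraph, any $V_\pi$-to-$V_\pi$ path must alternate through $E_\pi$-vertices whose both $G$-endpoints lie in $V_\pi$; hence the \emph{internal} $E_\pi$-edges span $V_\pi$, giving $|E_\pi| \ge |V_\pi| - 1$. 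Combining with $|V_\pi| + |E_\pi| \le |D| \le 2k$ and $|V_\pi| \ge k$ forces $|V_\pi| = k$. When $D \subseteq V(H^i)$ specifically, all subdivision vertices encode edges of $G$ incident to color $i$, so the $k-1$ internal edges form a $G$-star centered at the unique color-$i$ vertex of $V_\pi(D)$.

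The final and most delicate step is the invariance: $V_\pi(D_t) = C$ for every $t \in [t^-+1, t^+-1]$. Since $|D_t| \le 2k < 20k$, $D_t$ cannot simultaneously touch three consecutive $H^i$'s, so the structural claim applies at every middle time and $|V_\pi(D_t)| = k$. A single TAR step changes $\pi(D_t)$ by at most one element: adding a new color-$j$ vertex would push $|V_\pi|$ above $k$, and removing the unique copy of an existing color-$j$ vertex would drop $|V_\pi|$ below $k$, so both are forbidden, leaving only edge additions and removals in $E_\pi$; hence $V_\pi$ is invariant. Setting $C := V_\pi(D_{t_1})$, the constancy gives $C = V_\pi(D_{t_i})$ for each $i$, and the star description at $t_i$ yields $u_i u_j \in E(G)$ for all $j \ne i$. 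Running this over all $i$ makes $C$ a $k$-colored clique. The main technical hurdle is weaving together the layered-span bound (so the two-$H^i$ structural claim applies at every step) with the per-step projection analysis; the constant $20k$ in the construction is calibrated precisely so that both the pigeonhole and the no-three-$H^i$-span bound hold.
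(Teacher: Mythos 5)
Your proposal follows the same high-level plan as the paper (pigeonhole to force the configuration through each $H^i$, then invariance of the ``clique candidate'' over time, then read off adjacencies from the star structure of each $H^i$), but the technical execution is genuinely different and, I think, cleaner. The paper works with a \emph{minimal} reconfiguration sequence, deduces the $(2k-1, 2k, 2k-1, \ldots)$ size alternation, and proves a chain of hands-on claims about individual steps (their Claims $4$--$7$, culminating in ``consecutive color-$q$ vertices are copies of the same $G$-vertex'' via a shared-neighbor argument). You instead define the projection $\pi$ onto $V(G)\cup E(G)$, observe it is a graph homomorphism from the middle region of $H$ to the incidence graph $\mathcal{B}(G)$, and deduce $|V_\pi(D)|=k$ from the bipartite edge count $|E_\pi|\geq |V_\pi|-1$ together with $|V_\pi|+|E_\pi|\leq 2k$ and $|V_\pi|\geq k$. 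That single structural claim replaces their Claims $3$ and $4$, and your one-step analysis of how $\pi(D_t)$ can change replaces their Claim $5$ without needing minimality at all. Likewise your layer-interval argument ($L(t)$ integer-valued, single-step changes, endpoints $1$ and $\geq 20k(k-1)+1$) replaces their separator-based Claim $6$; both are valid, yours is a bit more transparent. Your approach buys robustness (no minimality assumption, no case split on $|D_i|\in\{2k-1,2k\}$) and a conceptually crisper reason why $V_\pi$ is an invariant; the paper's is closer to the surface of the construction. Two minor points worth tightening if you write this up: (i) you should note explicitly that an added token must lie in a layer adjacent to the current interval and a removed token must leave an interval that was wall-to-wall occupied, which is what makes $L(\cdot)$ change by at most one per step; and (ii) your structural claim is phrased for $D\subseteq V(H^i)\cup V(H^{i+1})$, but the proof only uses that $D$ avoids $H^0$ and $H^{k+1}$ (so $\pi$ is defined and is a homomorphism), which is exactly what the interval $(t^-,t^+)$ gives you — the ``no three $H^i$'s'' bound is not actually needed for the invariance step, only for the star conclusion at the special times $t_i$.
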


\begin{theorem}\label{thm:hardness}
\textsc{CCS-R} parameterized by $k+\ell$ is $\mathsf{W}[1]$-hard on $4$-degenerate graphs.
\end{theorem}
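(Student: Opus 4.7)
The plan is to combine the three lemmas preceding the theorem into a single fpt-reduction from \textsc{Multicolored Clique}, which is well-known to be $\mathsf{W}[1]$-hard parameterized by the number of colors $k$. The reduction from the excerpt produces, in polynomial time, from an instance $(G,c,k)$ of \textsc{Multicolored Clique} an instance $(H,\widehat{c},Q_s,Q_t,2k)$ of \textsc{CCS-R}, together with a length bound $\ell$ to be specified below. I would simply verify that this map is an fpt-reduction with respect to the parameter $k+\ell$ and that $H$ is $4$-degenerate.

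First I would set the length bound $\ell := c k^3$ where $c$ is the (absolute) constant hidden in the $\Oof(k^3)$ of Lemma~\ref{lem:forward}, so that $k+\ell = \Oof(k^3)$ depends only on the \textsc{Multicolored Clique} parameter~$k$. Correctness of the reduction then follows from the two directions already stated: if $G$ has a $k$-colored clique, then Lemma~\ref{lem:forward} provides a reconfiguration sequence of length at most $\ell$ from $Q_s$ to $Q_t$ in $(H,\widehat{c},2k)$, so $(H,\widehat{c},Q_s,Q_t,2k,\ell)$ is a \yes-instance of \textsc{CCS-R}. Conversely, if this is a \yes-instance then a reconfiguration sequence from $Q_s$ to $Q_t$ exists (of any length), and Lemma~\ref{lem:backward} supplies a $k$-colored clique in $G$. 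Hence $(G,c,k)$ is a \yes-instance of \textsc{Multicolored Clique} if and only if $(H,\widehat{c},Q_s,Q_t,2k,\ell)$ is a \yes-instance of \textsc{CCS-R}.

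It remains to check the structural and complexity requirements of an fpt-reduction. The construction of $H$ is clearly polynomial in $|V(G)|+|E(G)|+k$: we produce $\Oof(k)$ layers $H^i$, each consisting of $20k$ copies of $G \upharpoonright_c T^i$ together with polynomially many connecting edges, plus the two constant-size gadgets $H^0$ and $H^{k+1}$. The bound $k+\ell = \Oof(k^3)$ depends only on $k$, which is the standard requirement for parameter-preserving fpt-reductions. Finally, Lemma~\ref{lem:fourdeg} certifies that $H$ is $4$-degenerate, so the reduction outputs an instance within the target graph class. Combining these observations with the $\mathsf{W}[1]$-hardness of \textsc{Multicolored Clique} parameterized by $k$ yields $\mathsf{W}[1]$-hardness of \textsc{CCS-R} parameterized by $k+\ell$ on $4$-degenerate graphs.

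There is no real obstacle at this stage, since all the combinatorial work is done in Lemmas~\ref{lem:forward} and~\ref{lem:backward}; the only thing to be careful about is to state the length bound $\ell$ explicitly as a polynomial function of $k$ so that the combined parameter $k+\ell$ is controlled by $k$ alone, which is exactly what makes this a valid fpt-reduction for the intended parameterization.
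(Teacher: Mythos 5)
Your proposal is correct and follows the same route the paper takes: the theorem is obtained directly by combining Lemma~\ref{lem:fourdeg} ($4$-degeneracy), Lemma~\ref{lem:forward} (a clique yields a sequence of length $\Oof(k^3)$), and Lemma~\ref{lem:backward} (any sequence yields a clique) into an fpt-reduction from \textsc{Multicolored Clique}, with the length bound $\ell$ set to the explicit $\Oof(k^3)$ value so that the combined parameter $k+\ell$ of the output instance is a function of the source parameter alone. The paper leaves this assembly implicit after Lemma~\ref{lem:backward}, and you have filled it in precisely as intended.
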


\subparagraph*{Reduction from CCS-R to CDS-R.} We give a polynomial-time parameter-preserving reduction 
from \textsc{CCS-R} to \textsc{CDS-R} that is fairly straightforward. 
Let $(G,c,S,T,k)$ be an instance of \textsc{CCS-R}. Let $c \colon V(G)\mapsto \{1,\ldots,k'\}$, where $k'\leq k$. 
We construct a graph $H$ as follows. For each $1\leq i\leq k'$, we add a vertex $d_i$ and connect $d_i$ to all the vertices in $c^{-1}(i)$. 
Next, for each $1\leq i\leq k'$, we add $2k+1$ pendent vertices to $d_i$. 
That is, we add vertices $\{x_{i,j}\colon 1\leq i\leq k', 1\leq j\leq 2k+1\}$ and edges $\{\{x_{i,j},d_i\}\colon 1\leq i\leq k', 1\leq j\leq 2k+1\}$. 
Let $D=\{d_1,\ldots,d_{k'}\}$. We output $(H,S\cup D, T\cup D,k+k')$ as the new \textsc{CDS-R} instance.  

\begin{lemma}
If $G$ is a $d$-degenerate graph then $H$ is a $(d+1)$-degenerate graph. 
\end{lemma}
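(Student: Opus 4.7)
The plan is to show that every non-empty subgraph $H'\subseteq H$ contains a vertex of degree at most $d+1$ in $H'$, proceeding by case analysis on which ``types'' of vertices occur in $V(H')$. Recall that $V(H)=V(G)\cup\{d_1,\ldots,d_{k'}\}\cup\{x_{i,j}:i\in[k'],j\in[2k+1]\}$, that every $x_{i,j}$ is a pendant whose unique neighbor in $H$ is $d_i$, that $N_H(d_i)=c^{-1}(i)\cup\{x_{i,1},\ldots,x_{i,2k+1}\}$, and in particular that $\{d_1,\ldots,d_{k'}\}$ is an independent set in $H$.

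First I would handle the easiest case: if $V(H')$ contains any pendant $x_{i,j}$, then $\deg_{H'}(x_{i,j})\leq \deg_H(x_{i,j})=1\leq d+1$ and we are done. If instead $V(H')$ consists only of vertices from $\{d_1,\ldots,d_{k'}\}$, then since these form an independent set in $H$, the subgraph $H'$ has no edges and every vertex has degree $0\leq d+1$.

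The main (and only non-trivial) case is when $V(H')$ contains at least one vertex from $V(G)$ and no pendant vertex. Here I would consider $H'':=H'[V(H')\cap V(G)]$, which is a subgraph of $G$ and hence $d$-degenerate, so it contains a vertex $v$ with $\deg_{H''}(v)\leq d$. In $H'$, the only neighbors of $v$ beyond those counted by $H''$ lie in $\{d_1,\ldots,d_{k'}\}\cap V(H')$. Since $c$ assigns $v$ a unique color, $v$ has exactly one neighbor of the form $d_i$ in $H$, namely $d_{c(v)}$. Therefore $\deg_{H'}(v)\leq \deg_{H''}(v)+1\leq d+1$, which completes the argument.

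I do not expect any genuine obstacle here: the construction was designed so that the only ``heavy'' new vertices are the $d_i$'s, and each original vertex of $G$ acquires only one new neighbor (its color dominator). The only mild point to be careful about is bookkeeping the three cases so that the argument is exhaustive, which the case split above covers.
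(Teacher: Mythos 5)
Your proof is correct and rests on the same key observations as the paper's one-line argument: each $v\in V(G)$ gains exactly one new neighbor in $H$ (namely $d_{c(v)}$), the $x_{i,j}$ are pendants, and the $d_i$'s form an independent set. The paper phrases this as an implicit elimination order while you phrase it as an exhaustive case split over subgraphs, but these are just the two standard equivalent formulations of degeneracy and the substance is identical.
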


\begin{proof}
For each vertex $v\in V(G)$, $d_H(v)=d_{G}(v)+1$. Thus, after removing $V(G)$ and $\{x_{i,j}\colon 1\leq i\leq k', 1\leq j\leq 2k+1\}$, the remaining graph is edgeless. 
\end{proof}

It is easy to verify that for any reconfiguration sequence $S=S_1,\ldots, S_{\ell}=T$ of $(G,c,S,T,k)$, 
$S\cup D=S_1\cup D,\ldots, S_{\ell}\cup D=T\cup D$ is a reconfiguration sequence of $(H,S\cup D, T\cup D,k+k')$. 
Now we prove the reverse direction. 

\begin{lemma}
If $(H,S\cup D, T\cup D,k+k')$ is a \yes-instance then  $(G,c,S,T,k)$ is a \yes-instance.
\end{lemma}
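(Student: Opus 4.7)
The plan is to start from any CDS-R reconfiguration sequence $D_1 = S\cup D, D_2, \ldots, D_\ell = T\cup D$ in $H$ and extract a CCS-R sequence for $(G,c,S,T,k)$ by projecting to $S_i := D_i\cap V(G)$ and suppressing consecutive duplicates.

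The first step is a pendant argument showing that $D\subseteq D_i$ for every $i$: each $d_j\in D$ has $2k+1$ pendant neighbors whose only neighbor in $H$ is $d_j$ itself, and since $|D_i|\leq k+k'\leq 2k$ we cannot possibly fit all $2k+1$ pendants of $d_j$ into $D_i$, so $d_j$ must lie in $D_i$ in order to dominate them. Consequently, every token move between $D_i$ and $D_{i+1}$ is performed on a vertex of $V(G)$ or on a pendant, and $D$ is never touched. This immediately implies $|S_i|\leq |D_i|-|D|\leq k$, that consecutive projections $S_i, S_{i+1}$ either coincide (pendant move) or differ by a single vertex of $V(G)$, and that $S_1=S$ and $S_\ell=T$.

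The second step is to verify that every $S_i$ is a valid CCS solution of $(G,c,k)$. Color coverage is straightforward: since pendants are leaves in $H$, removing them preserves connectedness of $H[D_i]$, so $H[D\cup S_i]$ is connected; the only non-pendant neighbors of the hub $d_j$ are the vertices of $c^{-1}(j)$, so $d_j$ can only be attached to the rest of $H[D\cup S_i]$ through a color-$j$ vertex of $S_i$, giving $c(S_i)=C$. For connectivity of $G[S_i]$, I would take a walk between any two vertices of $S_i$ in $H[D\cup S_i]$ and shortcut each detour $u\,d_j\,v$ through a hub into a $G$-path inside $S_i$, using that every hub $d_j$ is anchored by a vertex of $S_i$ and that the endpoints $S,T$ are $G$-connected.

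The hard part will be this last connectivity step for $G[S_i]$: a priori the subgraph $H[D\cup S_i]$ could be connected solely through the hubs $d_j$ even when $G[S_i]$ splits into several components, so pulling the shortcut argument off requires genuinely exploiting both the budget constraint $|D_i|\leq k+k'$ and the fact that the sequence starts and ends at the connected solutions $S,T$; one option is to first pass to a subsequence in which every intermediate $V(G)$-projection is itself $G$-connected. Once that step is in hand, suppressing consecutive duplicates in $S_1,\ldots,S_\ell$ yields a CCS-R sequence from $S$ to $T$ with token budget $k$, so $(G,c,S,T,k)$ is a \yes-instance.
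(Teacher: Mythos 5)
Your overall strategy is the same as the paper's: the pendant/budget argument forces $D\subseteq D_i$ for every set in the $H$-sequence, pendants are harmless, and you project each $D_i$ to $S_i=D_i\cap V(G)$ (equivalently $D_i\setminus D$, since no pendant needs to be kept), suppress consecutive duplicates, and argue that the projected sequence is a valid \textsc{CCS-R} sequence for $(G,c,S,T,k)$. The size bound $|S_i|\le k$ and the color-coverage argument (the hub $d_j$ can only attach to $H[D_i]$ through a color-$j$ vertex of $S_i$, since pendants are leaves) are exactly what the paper uses as well.

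Where you stall is also where the real difficulty lies, and neither of your two suggestions actually closes it. The shortcut idea, replacing a detour $u\,d_j\,v$ by a $G$-path inside $S_i$, asks for a $G[S_i]$-path between two vertices of the same color class $j$; this is precisely what you cannot assume, since $H[D_i]$ may be connected solely through the hubs while $G[S_i]$ splits into several pieces. The fallback of ``passing to a subsequence in which every intermediate projection is $G$-connected'' is not a legal reconfiguration move: deleting an intermediate set leaves two consecutive sets that may differ by more than one token, and there is no reason such a subsequence should exist. For what it is worth, the paper's own proof is also very thin at exactly this step: it asserts, without argument, that $H[Z\setminus D]$ is connected for every \emph{minimal} connected dominating set $Z$ of $H$, and then applies this to the sets $D_i$ of the reconfiguration sequence, which need not be minimal. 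So your instinct about what is hard is correct; the missing ingredient, both in your sketch and arguably in the published proof, is an argument that $G[D_i\setminus D]$ is connected for every $D_i$ occurring in the sequence, not merely for minimal connected dominating sets.
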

\begin{proof}
Let $Z$ be a dominating set in $H$ of size at most $k+k'$. Then $D\subseteq H$.  Moreover for any minimal connected dominating set $Z$ in $H$, $Z\cap \{x_{i,j}\colon 1\leq i\leq k', 1\leq j\leq 2k+1\}=\emptyset$, $H[Z\setminus D]$ is connected, and $Z\setminus D$ contains a vertex from  $c^{-1}(i)$ for all $1\leq i\leq k'$ (recall that~$G$ is a subgraph of $H$). Therefore, by deleting $D$ from each set in a reconfiguration sequence of $(H,S\cup D, T\cup D,k+k')$, we get a valid reconfiguration sequence of $(G,c,S,T,k)$. This completes the proof. 
\end{proof}

Thus, by Theorem~\ref{thm:hardness}, we have the following theorem.

\begin{theorem}\label{thm:hardnessfinal}
\textsc{CDS-R} parameterized by $k+\ell$ is $\mathsf{W}[1]$-hard on $5$-degenerate graphs.
\end{theorem}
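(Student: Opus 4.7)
The plan is to derive Theorem~\ref{thm:hardnessfinal} as an immediate consequence of Theorem~\ref{thm:hardness} together with the polynomial-time reduction from \textsc{CCS-R} to \textsc{CDS-R} described just above, and the two lemmas that accompany it. Since Theorem~\ref{thm:hardness} already gives $\mathsf{W}[1]$-hardness of \textsc{CCS-R} parameterized by $k+\ell$ on $4$-degenerate graphs, it suffices to show that the reduction in question is an fpt-reduction with respect to $k+\ell$, and that it increases the degeneracy by at most one.

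First I would take an arbitrary instance $(G,c,S,T,k)$ of \textsc{CCS-R} where $G$ is $4$-degenerate, apply the reduction to obtain the instance $(H, S\cup D, T\cup D, k+k')$ of \textsc{CDS-R}, and invoke the degeneracy lemma above: since $G$ is $4$-degenerate, $H$ is $5$-degenerate. Because the number $k'$ of color classes satisfies $k' \le k$, the new budget is $k + k' \le 2k$, so the first component of the parameter blows up only by a factor of two. For the second component, observe that the two direction lemmas stated around the reduction together yield that $(G,c,S,T,k)$ and $(H, S\cup D, T\cup D, k+k')$ are equivalent, and moreover the forward implication produces a reconfiguration sequence of exactly the same length $\ell$ (by appending $D$ to every set in the original sequence), while in the backward implication deleting $D$ from every set in a reconfiguration sequence of $H$ yields a reconfiguration sequence of $G$ of length at most $\ell$. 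Hence the length bound $\ell$ is preserved, and the combined parameter grows from $k+\ell$ to at most $2k+\ell$, which is a linear fpt-bound.

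Putting this together, the reduction maps an instance of \textsc{CCS-R} on $4$-degenerate graphs with parameter $k+\ell$ to an equivalent instance of \textsc{CDS-R} on $5$-degenerate graphs with parameter bounded linearly in $k+\ell$, in polynomial time. Composing with Theorem~\ref{thm:hardness} immediately gives $\mathsf{W}[1]$-hardness of \textsc{CDS-R} parameterized by $k+\ell$ on $5$-degenerate graphs.

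There is essentially no hard step: the nontrivial work has been front-loaded into Theorem~\ref{thm:hardness} (the Multicolored Clique to CCS-R reduction together with Lemmas~\ref{lem:forward} and~\ref{lem:backward}) and into verifying correctness of the CCS-R to CDS-R reduction. The only technical point worth double-checking is that the pendants $\{x_{i,j}\}$ attached to each $d_i$ genuinely force every connected dominating set in $H$ of size at most $k+k'$ to contain all of $D$; this is where the $2k+1$ pendants per color-vertex are used, since even if one had $k+k'$ tokens to spare one could not simultaneously dominate $2k+1$ pendants without placing a token on $d_i$ itself. Once this is in place, the correspondence between sets $Z\setminus D$ in $H$ and colored connected subgraphs in $G$ is straightforward, and the theorem follows.
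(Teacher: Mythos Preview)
Your proposal is correct and follows exactly the same approach as the paper: the paper simply says ``Thus, by Theorem~\ref{thm:hardness}, we have the following theorem,'' relying on the CCS-R to CDS-R reduction and its accompanying lemmas just stated. You have spelled out in more detail than the paper why the reduction is an fpt-reduction with respect to $k+\ell$ and why degeneracy increases by at most one, but the underlying argument is identical.
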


\section{Fixed-parameter tractability on planar graphs}\label{sec:planar}

This section is devoted to proving that \textsc{CDS-R} under \textsf{TAR} parameterized by
$k$ is fixed-parameter tractable on planar graphs. In fact, we show that the problem admits a polynomial 
kernel. Recall that a kernel for a parameterized 
problem $\Qq$ is a polynomial-time algorithm that 
computes for each instance
$(I,k)$ of $\Qq$ an equivalent instance $(I',k')$ with
$|I'|+k'\leq f(k)$ for some computable function $f$. 
The kernel is polynomial if the function $f$ is polynomial. 
We prove that for every instance
$(G,S,T,k)$ of \textsc{CDS-R}, with $G$ planar,  
we can compute in 
polynomial time an instance $(G',S,T,k)$ where
$|V(G')|\leq p(k)$ for some polynomial $p$, $G'$ planar, and
where there exists a reconfiguration sequence under 
\textsf{TAR} from $S$
to $T$ in $G$ (using at most $k$ tokens) if and only if such a sequence exists
in $G'$. 

Our approach is as follows. We first compute a 
small \emph{domination core} for $G$, that is, 
a set of vertices that 
captures exactly the domination properties of $G$ for 
dominating sets of sizes not larger than $k$. While 
the classification of interactions with the domination core
would suffice to solve 
\textsc{Dominating Set Reconfiguration}, additional difficulties arise
for the connected variant. In a second step we use 
planarity to identify large subgraphs that have
very simple interactions with the domination core 
and prove that they can be replaced by constant
size gadgets such that the reconfiguration properties
of $G$ are preserved. 

\subsection{Domination cores}

\begin{definition}
Let $G$ be a graph and let $k\geq 1$ be an integer. A \emph{$k$-domination core}
is a subset $C\subseteq V(G)$ of vertices such that 
every set $X\subseteq V(G)$ of size at most $k$ 
that dominates~$C$ also dominates $G$. 
\end{definition}

It is not difficult to see that \textsc{Dominating Set}
is fixed-parameter tractable on all graphs that admit
a $k$-domination core of size at most $f(k)$ that is 
computable in time $g(k)\cdot n^c$, for any computable
functions $f,g$ and constant $c$. This approach 
was first used (implicitly) in~\cite{DawarK09} to solve
\textsc{Distance-$r$ Dominating Set} on nowhere
dense graph classes. In case $k$ is the size of a 
minimum (distance-$r$) dominating set, one can 
establish the existence of a linear size $k$-domination
core on classes of bounded 
expansion~\cite{DrangeDFKLPPRVS16} (including 
the class of planar graphs) and a polynomial size
(in fact an almost linear size)
$k$-domination core on nowhere dense graph classes~\cite{KreutzerRS17,EickmeyerGKKPRS17}. If $k$ is not
minimum, there exist classes of bounded expansion such 
that a $k$-domination core must have at least quadratic
size~\cite{EibenKMPS18}. The most
general graph classes that admit $k$-domination cores
are given in~\cite{FabianskiPST19}. Moreover, 
\textsc{Dominating Set Reconfiguration} and \textsc{Distance-$r$ 
Dominating Set Reconfiguration} are fixed-parameter tractable on all
graphs that admit small (distance-$r$) $k$-domination
cores~\cite{LokshtanovMPRS18,Siebertz18}.

\begin{lemma}\label{lem:core}
There exists a polynomial
$p$ such that for all $k\geq 1$, every planar graph $G$
admits a polynomial-time computable 
$k$-domination core of size at most $p(k)$. 
\end{lemma}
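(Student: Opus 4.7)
The plan is to follow the iterative shrinking construction of domination cores developed for bounded expansion classes in~\cite{DrangeDFKLPPRVS16,EibenKMPS18} and to instantiate the bounds for planar graphs. I would start with $C_0 = V(G)$ and build a descending sequence $C_0 \supseteq C_1 \supseteq \ldots$ such that every $X \subseteq V(G)$ of size at most $k$ that dominates $C_i$ also dominates $V(G)$ (the \emph{core property}). The process halts as soon as $|C_i| \leq p(k)$ for the target polynomial $p$, and $C_i$ is returned as the $k$-domination core.

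The key tool is the notion of a \emph{projection}: for $v \in V(G)$ and $C \subseteq V(G)$, set $\pi_C(v) = N(v) \cap C$. Because planar graphs exclude $K_{3,3}$ as a subgraph and have bounded expansion, a short-sunflower / Kővári--Sós--Turán style argument bounds the number of distinct projections of size at most $k$ onto any $C$ by $g(k) \cdot |C|$ for some polynomial $g$. This polynomial control of the relevant domination patterns drives the whole construction, specializing the sparsity-based counting of~\cite{DrangeDFKLPPRVS16,EibenKMPS18} to the planar setting.

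For the shrinking step I argue by contradiction. Suppose $|C_i| > p(k)$ for a sufficiently large polynomial $p$ to be fixed, and that no vertex of $C_i$ is redundant. Then every $v \in C_i$ admits a witness $X_v$ of size at most $k$ that dominates $C_i \setminus \{v\}$ but not $v$. Grouping vertices of $C_i$ by their projection and by the projections of their witnesses, the projection-class bound forces a projection class with many vertices; a pigeonhole/sunflower extraction then produces either a forbidden $K_{3,3}$-subgraph or a dense configuration contradicting planar sparsity. Choosing $p(k)$ just above this extremal threshold guarantees a redundant $v$, which can be located in polynomial time by the standard subroutine that checks, for each candidate, whether every size-$\leq k$ dominator of $C_i \setminus \{v\}$ also dominates $v$; since $|C|$ strictly decreases per iteration and starts at $n$, the outer loop terminates in polynomial time.

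The main obstacle I expect is the extremal step certifying redundancy: the argument must exploit both $K_{3,3}$-subgraph-freeness and bounded expansion while honouring the quadratic lower bound of~\cite{EibenKMPS18}, so $p$ cannot be linear in the non-minimum $k$ regime. Once the redundancy lemma is in place, correctness follows immediately from the invariant and the termination condition, and the overall running time is polynomial.
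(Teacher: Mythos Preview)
The paper does not actually prove this lemma; it simply invokes Theorem~1.6 of~\cite{KreutzerRS17} together with the fact that planar graphs are nowhere dense, and points to~\cite{NadaraPRRS18,pilipczuk2018number} for explicit degree bounds. Your plan instead sketches the iterative core-shrinking machinery of~\cite{DrangeDFKLPPRVS16,EibenKMPS18} from the inside, which is indeed the engine behind the cited black box, so the overall route is sound and more informative than the paper's one-line citation.

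There is, however, a genuine gap in your polynomial-time claim. You write that a redundant vertex ``can be located in polynomial time by the standard subroutine that checks, for each candidate, whether every size-$\leq k$ dominator of $C_i\setminus\{v\}$ also dominates $v$.'' No such subroutine exists: deciding whether \emph{some} set of size at most $k$ dominates $C_i\setminus\{v\}$ but not $v$ is at least as hard as \textsc{Dominating Set} itself, so the negation you want to verify is a co-NP-type statement and cannot be checked in time polynomial in $n$ for unbounded $k$. In the actual constructions of~\cite{DrangeDFKLPPRVS16,EibenKMPS18,KreutzerRS17}, the extremal/projection argument is not merely an existence proof followed by a brute-force search; it \emph{constructively} singles out a specific vertex (e.g., any vertex in an over-full projection class, or an element outside the closure produced by the iterative dominator-approximation step) and certifies its redundancy directly from the structure. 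Your sketch needs to be rewired so that the pigeonhole/sunflower step itself outputs the vertex to delete and the certificate that removing it preserves the core property; once you do that, the outer loop is polynomial for the reason you state.
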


The lemma is implied by Theorem 1.6 of~\cite{KreutzerRS17}
by the fact that planar graphs are nowhere dense. We want to
stress again that the polynomial size of the $k$-domination 
core results from the fact that $k$ may not be the size
of a minimum dominating set, if $k$ is minimum we can find
a linear size core. Explicit bounds on the degree
of the polynomial can be derived from~\cite{NadaraPRRS18,pilipczuk2018number}, but we refrain from
doing so to not disturb the flow of ideas. 

\smallskip
The following lemma is immediate from the definition of 
a $k$-domination core. 

\begin{lemma}\label{lem:non-adjecent-to-core-irrelevant}
If $D$ is a dominating set of size at most $k$ that contains
a vertex set $W \subset D$ such that $N[D]\cap C= N[D \setminus W]\cap C=C$, 
then $D \setminus W$ is also a dominating set. 
\end{lemma}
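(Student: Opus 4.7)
The plan is to appeal directly to the defining property of a $k$-domination core applied to the set $X := D \setminus W$. First I would observe that $|X| \leq |D| \leq k$, so $X$ qualifies as a candidate set for the core property. The second hypothesis $N[D \setminus W] \cap C = C$ asserts exactly that every vertex of the core $C$ has a neighbor in $X$ or lies in $X$, that is, $X$ dominates $C$. Invoking the definition of a $k$-domination core on $X$ yields that $X$ dominates all of $V(G)$, which is precisely the claim that $D \setminus W$ is a dominating set.

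The only remark worth making is that the first equality $N[D]\cap C = C$ in the hypothesis is redundant: since $D$ itself is assumed to be a dominating set, it automatically dominates the subset $C$. The statement is phrased symmetrically, with both $D$ and $D \setminus W$ listed, only to emphasize that $D$ and $D \setminus W$ ``look the same'' from the perspective of the core, which is the situation one will want to recognize when applying the lemma during the kernelization. There is no genuine obstacle to overcome here; the result is essentially a one-line reformulation of the definition, intended as a convenient handle for later use when arguing that vertices of $D$ whose removal does not affect domination of the core may be discarded without losing the dominating-set property.
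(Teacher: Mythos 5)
Your proof is correct and is precisely the ``immediate from the definition'' argument the paper has in mind (the paper states the lemma without a written proof). Your remark that $N[D]\cap C = C$ is redundant given that $D$ is a dominating set is also accurate.
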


\begin{definition}
Let $G$ be a graph and let $A\subseteq V(G)$. The \emph{projection} 
of a vertex $v\in V(G)\setminus A$ into $A$ is the set $N(v)\cap A$. 
If two vertices $u,v$ have the same projection into $A$ we write
$u\sim_A v$. 
\end{definition}

Obviously, the relation $\sim_A$ is an equivalence relation. 
The following lemma is folklore, one possible reference 
is~\cite{GajarskyHOORRVS17}. 

\begin{lemma}\label{lem:neighbourhood-complexity}
Let $G$ be a planar graph and let $A\subseteq V(G)$. 
Then there exists a constant $c$ such that there are 
at most $c\cdot |A|$ different projections to $A$, that is, 
the equivalence relation~$\sim_A$ has at most $c\cdot |A|$
equivalence classes. 
\end{lemma}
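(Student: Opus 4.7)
My plan is to pick one representative from each equivalence class of $\sim_A$ in $V(G)\setminus A$, form a bipartite planar auxiliary graph, and bound the number of representatives by a degree-based decomposition. Let $R \subseteq V(G)\setminus A$ be a set containing exactly one representative per $\sim_A$-class; the goal is to prove $|R| \leq c\,|A|$ for some absolute constant $c$. Form the bipartite graph $B$ with vertex set $A \cup R$ and edge set $\{\{a,r\} : a \in A,\, r \in R,\, \{a,r\} \in E(G)\}$. Since $B$ is a subgraph of $G$, it is planar. Partition $R$ into $R_{\leq 2} = \{r \in R : |N(r)\cap A| \leq 2\}$ and $R_{\geq 3} = R \setminus R_{\leq 2}$, to be bounded separately.

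Representatives with projection of size $0$ or $1$ together contribute at most $1 + |A|$ since distinct representatives must have distinct projections. To bound the representatives of projection-size exactly $2$, associate each such $r$ with its projection $\{a,b\}\subseteq A$ and define an auxiliary simple graph $G^\ast$ on vertex set $A$ whose edge set consists of exactly these realized pairs. The key observation is that $G^\ast$ is a topological minor of $G$: every edge $\{a,b\}$ of $G^\ast$ is witnessed by the length-two path $a\text{--}r\text{--}b$ in $G$, and paths witnessing distinct edges of $G^\ast$ use distinct representatives as internal vertices and are therefore internally vertex-disjoint. Since planarity is closed under taking topological minors, $G^\ast$ is planar, so $|E(G^\ast)| \leq 3|A| - 6$, and altogether $|R_{\leq 2}| = O(|A|)$.

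For $R_{\geq 3}$, I use the fact that a bipartite planar graph on $n$ vertices has at most $2n - 4$ edges, since its girth is at least $4$. Applied to $B$, this yields $3|R_{\geq 3}| \leq |E(B)| \leq 2(|A| + |R|) - 4$. Substituting $|R| = |R_{\leq 2}| + |R_{\geq 3}|$ and rearranging gives $|R_{\geq 3}| \leq 2|A| + 2|R_{\leq 2}| - 4$, which is $O(|A|)$ by the previous paragraph. Combining the two bounds yields $|R| = O(|A|)$, as desired. The only conceptual step that requires care is the topological-minor claim for $G^\ast$, which hinges crucially on taking a single representative per $\sim_A$-class to ensure internal-vertex-disjointness of the subdivision paths; beyond that the argument reduces to Euler's formula and I expect no further obstacles.
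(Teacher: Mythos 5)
Your proof is correct. Note that the paper itself gives no proof of this lemma at all: it calls the statement ``folklore'' and refers to~\cite{GajarskyHOORRVS17}, so there is no in-paper argument to compare against. What you supply is a valid self-contained proof, and it is in fact the standard argument for this neighborhood-complexity bound on planar graphs: (i) handle projections of size $0$ and $1$ trivially (at most $1+|A|$ classes); (ii) bound the size-$2$ classes by observing that the realized pairs form a simple graph $G^\ast$ on $A$ that is a topological minor (equivalently a minor, via contracting one edge of each length-two path) of $G$, hence planar, giving at most $3|A|-6$ such classes; (iii) bound the size-$\geq 3$ classes by applying Euler's bound $|E(B)|\leq 2(|A|+|R|)-4$ for the bipartite planar incidence graph $B$ and the degree-sum inequality $3|R_{\geq 3}|\leq |E(B)|$. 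Your observation that internal disjointness of the subdivision paths rests on taking exactly one representative per $\sim_A$-class is exactly the point where the argument could otherwise fail, so flagging it is appropriate. Combining the pieces gives $|R|\leq 14|A|$ or thereabouts, which suffices. The only cosmetic caveat is the degenerate case $|A|\leq 2$, where $3|A|-6$ and $2n-4$ are nonpositive; one should note that there are then at most $2^{|A|}\leq 4$ possible projections, so the claim holds with a suitable additive constant absorbed into $c$.
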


\subsection{Reduction rules}
Let $G$ be an embedded planar graph. We say that
a vertex $v$ \emph{touches} a face $f$ if $v$ is
drawn inside $f$ or belongs to the boundary of $f$
or is adjacent to a vertex on the boundary of $f$. 
We fix two connected
dominating sets $S$ and $T$ of size at most~$k$. 
We will present a sequence of lemmas, each of which 
implies a polynomial-time computable reduction rule
that allows to transform~$G$ to a planar graph $G'$ 
that inherits its embedding from $G$, with 
$S,T\subseteq V(G')$ and that has the same 
reconfiguration properties with respect to $S$ and $T$ 
as $G$. To not overload notation, after stating a lemma
with a reduction rule, we assume that the reduction rule
is applied until this is no longer possible and call the
resulting graph again $G$. We also assume that 
whenever one or more of our reduction rules are applicable, 
then they are applied in the order presented. We will 
guarantee that $S$ and $T$ will always 
be connected dominating sets of size at most $k$, hence,
after each application of a reduction rule, we can 
recompute a $k$-domination core in polynomial time. 
This yields only polynomial overhead and allows us to 
assume that we always have marked a $k$-domination core $C$
of size at most $p\coloneqq p(k)$ as described in 
Lemma~\ref{lem:core}. This allows us to state the lemmas
as if $G$ and $C$ were fixed. 
Without loss of generality we assume that~$C$ contains 
$S$ and $T$. 

\begin{definition}
A set $W\subseteq V(G)\setminus C$ of vertices is \emph{irrelevant}
if there is a reconfiguration sequence from $S$ to $T$
in $G$ if and only if there is a reconfiguration sequence
from $S$ to $T$ in $G-W$. 
\end{definition}

\begin{definition}
Let $u,v\in V(G)$ be non-equal vertices. We call the set
$D(u,v)\coloneqq (N(u)\cap N(v))\cup \{u,v\}$ 
the \emph{diamond} induced by $u$ and $v$. 
We call $|N(u)\cap N(v)|$ the \emph{thickness} of 
$D(u,v)$. 
\end{definition}

\begin{lemma}\label{lem:diamond-ends-are-forced}
If $G$ contains a diamond $D(u,v)$ of thickness greater
than $3k$, then at least one of $u$ or $v$ must be 
pebbled in every reconfiguration sequence from $S$ to $T$. 
\end{lemma}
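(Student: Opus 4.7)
The plan is to prove the stronger fact that any dominating set $D$ of $G$ with $|D| \le k$ must contain $u$ or $v$; the connectivity of $D$ plays no role. I would argue by contradiction, assuming $u, v \notin D$, and extract a $K_{3,3}$ as a subgraph of $G$, contradicting planarity.

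First I would set $W = N(u) \cap N(v)$, so $|W| > 3k$, and split $W = W_1 \cup W_2$ with $W_1 = W \cap D$ and $W_2 = W \setminus D$. Since $|D| \le k$, this yields $|W_2| > 2k$. Next I would use that each $x \in W_2$ lies outside $D$ and must be dominated, so $x$ has a neighbor in $D \setminus \{u, v\}$ (exploiting $u, v \notin D$). Because $|D \setminus \{u,v\}| \le k$, a pigeonhole argument then produces a single vertex $d \in D \setminus \{u, v\}$ that is adjacent to at least three vertices $x_1, x_2, x_3 \in W_2$.

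It then remains to verify that $\{u, v, d\}$ together with $\{x_1, x_2, x_3\}$ realize a $K_{3,3}$ subgraph of $G$. The six vertices are pairwise distinct: $u \ne v$; $d \ne u, v$ because $d \in D$ while $u, v \notin D$; each $x_i$ differs from $u$ and $v$ because it lies in $N(u) \cap N(v)$; and each $x_i$ differs from $d$ because $x_i \in V(G) \setminus D$ while $d \in D$. The edges $u x_i$ and $v x_i$ are present because $x_i \in W$, and the edges $d x_i$ come from the choice of $d$. Since planar graphs do not contain $K_{3,3}$ as a subgraph, the contradiction follows, so at least one of $u, v$ must be pebbled in every member of the reconfiguration sequence.

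The only delicate point will be the numerical calibration: the threshold $3k$ on the thickness is essentially tight for this pigeonhole-plus-$K_{3,3}$ argument, since one needs $|W| - |D|$ strictly greater than $2\,|D \setminus \{u,v\}|$ to force three $W_2$-neighbors of some single dominator, and both $|D|$ and $|D \setminus \{u, v\}|$ can be as large as $k$.
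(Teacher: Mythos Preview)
Your argument is correct and follows essentially the same route as the paper: a pigeonhole count combined with the forbidden $K_{3,3}$ subgraph in a planar graph. The only cosmetic difference is that the paper phrases it as ``each $s\in S_i$ dominates at most three vertices of $N(u)\cap N(v)$'' (absorbing your split $W=W_1\cup W_2$ into a single per-vertex bound), whereas you first peel off $W_1=W\cap D$ and then run pigeonhole on adjacencies into $W_2$; both bookkeepings yield the same threshold $3k$.
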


\begin{proof}
Assume $S=S_1,\ldots, S_t=T$ is a reconfiguration sequence
from $S$ to $T$ and $u,v\not\in S_i$ for some $1\leq i\leq t$. 
Then every $s\in S_i$ can dominate
at most $3$ vertices of $N(u)\cap N(v)$: otherwise
$u,v,s$ together with $3$ vertices of $N(u)\cap N(v)$
different from $u,v$ and $s$ would form a complete
bipartite graph $K_{3,3}$. 
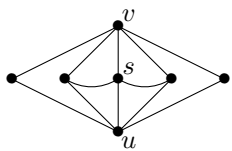
\begin{figure}[h!]
\begin{center}
\begin{tikzpicture}[scale=0.7]
\node (a1) at (0,0) {$\bullet$};
\node at (0.2,-0.2) {$u$};
\node (a2) at (0,2) {$\bullet$};
\node at (0.2,2.2) {$v$};
\node at (0.2,1.2) {$s$};
\draw (0,1) to[bend right=30] (1,1);
\draw (-1,1) to[bend right=30] (0,1);
\foreach \x in {-2cm,-1cm,0cm,1cm,2cm}
{
	\node at (\x,1) {$\bullet$};
	\draw[-] (0,2) -- (\x,1) -- (0,0);
}
\end{tikzpicture}
\end{center}
\caption{A vertex $s\in S_i$ can dominate at most $3$ vertices
of $N(u)\cap N(v)$.}
\end{figure}
\end{proof}

\begin{lemma}\label{lem:reduce-diamond-edges}
If $G$ contains a diamond $D(u,v)$ of thickness greater
than $3k$, then we can remove all internal edges in $D(u,v)$, i.e., edges with both endpoints in $N(u) \cap N(v)$. 
\end{lemma}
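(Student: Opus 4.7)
Let $G'$ denote the graph obtained from $G$ by deleting every edge $\{w_1,w_2\}$ with $w_1,w_2\in N(u)\cap N(v)$; note that the edges $uw$ and $vw$ for $w\in N(u)\cap N(v)$, as well as the edge $uv$ (if present), are all kept. I will show that there is a reconfiguration sequence from $S$ to $T$ in $G$ if and only if one exists in $G'$. The key leverage is Lemma~\ref{lem:diamond-ends-are-forced}: in every reconfiguration sequence from $S$ to $T$, each intermediate set contains $u$ or $v$.

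For the forward direction, suppose $S=S_1,\ldots,S_t=T$ is a reconfiguration sequence in $G$. I claim the same sequence witnesses reconfiguration in $G'$. Each consecutive step still differs by a single token, so it suffices to check that every $S_i$ remains a connected dominating set in $G'$ of size at most $k$. Since the deleted edges have both endpoints in $N(u)\cap N(v)$, the open neighborhood in $G'$ of any vertex outside $N(u)\cap N(v)$ is unchanged; and any $w\in N(u)\cap N(v)$ is still dominated by $u$ or $v$, at least one of which belongs to $S_i$ by Lemma~\ref{lem:diamond-ends-are-forced}. Hence $S_i$ is still a dominating set. For connectivity, take any two vertices $x,y\in S_i$ and a path $P$ between them in $G[S_i]$. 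Every edge of $P$ that was deleted has the form $w_1w_2$ with $w_1,w_2\in N(u)\cap N(v)$; pick $z\in S_i\cap\{u,v\}$ and replace the edge $w_1w_2$ by the length-two detour $w_1\,z\,w_2$, which is present in $G'[S_i]$ because $w_1,w_2\in N(z)$ and the edges $zw_1,zw_2$ are not deleted. Performing this local replacement on each removed edge yields a walk from $x$ to $y$ in $G'[S_i]$, so $G'[S_i]$ is connected.

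The backward direction is immediate, since $G'$ is a spanning subgraph of $G$: every edge of $G'$ is an edge of $G$, so any connected dominating set in $G'$ of size at most $k$ is also a connected dominating set in $G$ of size at most $k$, and any reconfiguration sequence in $G'$ is therefore valid in $G$ verbatim. Finally, to confirm the instance is well-posed, observe that $S$ and $T$ themselves are instances of some (indeed any) reconfiguration sequence's endpoints, so the forward argument applied at $i=1$ and $i=t$ shows that $S$ and $T$ remain connected dominating sets in $G'$ (and if no reconfiguration sequence from $S$ to $T$ exists in $G$, then none exists in $G'$ by the backward direction, so the equivalence trivially holds in that case as well). The main subtle point is the connectivity preservation, but it is handled uniformly by the detour-through-$\{u,v\}$ argument, which is precisely what Lemma~\ref{lem:diamond-ends-are-forced} makes available.
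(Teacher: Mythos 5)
Your proof is correct and follows essentially the same route as the paper: invoke Lemma~\ref{lem:diamond-ends-are-forced} to guarantee that $u$ or $v$ is present in every feasible set, then argue domination of $N(u)\cap N(v)$ via that vertex and repair connectivity by rerouting any deleted internal edge through the length-two detour $w_1\,z\,w_2$ with $z\in\{u,v\}\cap S_i$. The paper's writeup is more terse and only spells out the forward direction (the backward direction being trivial since $G'$ is a spanning subgraph of $G$), but the key ideas coincide.
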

\begin{proof}
Assume $S=S_1,\ldots, S_t=T$ is a reconfiguration sequence
from $S$ to $T$. 
According to Lemma~\ref{lem:diamond-ends-are-forced}, for 
each $1\leq i\leq t$, $S_i\cap \{u,v\}\neq \emptyset$. 
Hence all vertices of $N(u) \cap N(v)$ are always
dominated by at least one of $u$ or $v$, say by $u$. Moreover, pebbling more than one vertex of $N(u) \cap N(v)$ will never create
connectivity via internal edges that is not already there via edges incident on $u$. 
In other words, for any connected dominating set $S$ of $G$, if an edge $yz$ is used for connectivity, where $y,z \in N(u) \cap N(v)$, 
then this edge can be replaced by either the path $yuz$ or the path $yvz$ (depending on which of $u$ or $v$ is in $S$).
\end{proof}

As described earlier, we now apply the reduction rule of 
Lemma~\ref{lem:reduce-diamond-edges} until this is no 
longer possible, and name the resulting graph again $G$. 
As we did not make use of the properties of a $k$-domination
core in the lemma, it is sufficient to recompute a $k$-domination 
core $C$ after applying the reduction rule exhaustively. In the 
following it may be necessary to recompute it after each application
of a reduction rule. We will not mention these
steps explicitly anymore in the following.  

\begin{lemma}[$\star$]\label{lem:reduction}
If $G$ contains a diamond $D(u,v)$ of thickness greater
than $4|C| + 3k + 1$ then~$G$ contains an irrelevant vertex. 
\end{lemma}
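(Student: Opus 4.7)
The plan is to locate an irrelevant vertex inside the diamond $D(u,v)$ via pigeonhole on projections followed by a twin-replacement. Set $X=N(u)\cap N(v)$, so $|X|>4|C|+3k+1$. Since Lemma~\ref{lem:reduce-diamond-edges} has been applied exhaustively, the diamond has no internal edges, and hence the subgraph induced on $\{u,v\}\cup X$ is a ``book'': in the planar embedding, $u$ and $v$ are joined by $|X|$ internally disjoint length-two paths arranged cyclically as $w_1,\ldots,w_m$, and each lens bounded by $u,w_i,v,w_{i+1}$ is a face of $G$. The $K_{3,3}$-subgraph-freeness of planar graphs (already exploited in the proof of Lemma~\ref{lem:diamond-ends-are-forced}) implies that any vertex of $V(G)\setminus(\{u,v\}\cup X)$ has at most two neighbors in $X$.

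Call $w\in X\setminus C$ \emph{pure} if $N(w)\subseteq\{u,v\}\cup C$, and partition $X$ into $X\cap C$, the non-pure vertices of $X\setminus C$, and the pure vertices of $X\setminus C$. At most $|C|$ vertices fall into the first class. Combining the planarity observation above with Lemma~\ref{lem:non-adjecent-to-core-irrelevant} (which lets us discard external anchors that do not interact with the domination core) and Lemma~\ref{lem:diamond-ends-are-forced} (which pins $u$ or $v$ in every configuration, leaving at most $k-1$ non-pole pebbles, each covering at most three diamond vertices by the $K_{3,3}$-bound), I aim to bound the number of non-pure vertices by $3k+1$. Subtracting leaves more than $3|C|$ pure vertices in $X\setminus C$; by Lemma~\ref{lem:neighbourhood-complexity} these fall into at most $c|C|=3|C|$ classes under $\sim_C$, so pigeonhole yields two distinct pure vertices $w,w'\in X\setminus C$ with $N(w)\cap C=N(w')\cap C$. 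Purity of $w$ then gives
\[
  N(w)=\{u,v\}\cup(N(w)\cap C)=\{u,v\}\cup(N(w')\cap C)\subseteq N(w').
\]

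For the irrelevance of $w$, the direction ``sequence in $G-w$ implies sequence in $G$'' is immediate. For the converse, given a reconfiguration sequence $S=S_1,\ldots,S_t=T$ in $G$, define $S_i'=(S_i\setminus\{w\})\cup\{w'\}$ when $w\in S_i$ and $w'\notin S_i$, let $S_i'=S_i\setminus\{w\}$ when $\{w,w'\}\subseteq S_i$, and $S_i'=S_i$ otherwise. A routine case analysis shows that consecutive $S_i'$'s differ in at most one token (possibly none, in which case the step is elided), $|S_i'|\le|S_i|\le k$, and each $S_i'$ is a connected dominating set of $G-w$: domination of any $x\ne w$ follows from $N(w)\subseteq N(w')$ together with $w'\in S_i'$ whenever the swap is performed, while connectivity follows because the vertex map sending $w$ to $w'$ and fixing everything else is a graph homomorphism from $G[S_i]$ onto $G[S_i']$, and the homomorphic image of a connected graph is connected. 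Since $w\notin S\cup T\subseteq C$, the endpoints are preserved. The main obstacle is the impurity bound $3k+1$, which is where the additive $3k+1$ in the statement originates and is the only delicate ingredient; the pigeonhole via Lemma~\ref{lem:neighbourhood-complexity} and the twin-replacement are clean.
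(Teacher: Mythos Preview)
Your approach diverges from the paper's, and the divergence hides a genuine gap. The paper does not use projection twins at all: instead it looks at the face structure of the ``book'' $H$ induced on $\{u,v\}\cup X$. Since each vertex of $C\setminus\{u,v\}$ touches at most three faces of~$H$, at most $3|C|$ faces are touched, leaving more than $|C|+3k+1$ untouched faces; by pigeonhole two of these are adjacent, bounded by $x_1,x_2$ and $x_2,x_3$. The paper then deletes the entire interior~$W$ of the cycle $u,x_1,v,x_3$ (which includes $x_2$), replacing any token on a $W$-vertex by a token on $x_1$. Domination survives via the core lemma because $W$ touches $C$ only at $u,v$; connectivity survives because $u$ or $v$ is always pebbled by Lemma~\ref{lem:diamond-ends-are-forced}.

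Your route instead requires a vertex $w\in X\setminus C$ with $N(w)\subseteq\{u,v\}\cup C$ (``pure''), and you try to force one by bounding the non-pure vertices by $3k+1$. You explicitly flag this bound as the ``main obstacle'' and do not prove it --- and in fact it is false with the tools you cite. A non-pure $w$ is one with a neighbour $z\notin\{u,v\}\cup C$; such a $z$ lives inside one of the lenses of the book and is adjacent to at most two $w$'s, but nothing limits the number of such $z$'s. Neither Lemma~\ref{lem:non-adjecent-to-core-irrelevant} (which is about trimming a \emph{given} dominating set, not deleting vertices from $G$) nor Lemma~\ref{lem:diamond-ends-are-forced} gives any handle on this count, so almost all of $X$ could be non-pure. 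A secondary issue: you write ``$c|C|=3|C|$'' for Lemma~\ref{lem:neighbourhood-complexity}, but the paper never fixes $c=3$; the constant $4|C|+3k+1$ in the statement is calibrated to the face-counting argument, not to neighbourhood complexity. So the pigeonhole step would also need reworking even if the impurity bound held. The twin-replacement part of your argument is fine, but without purity it cannot be reached.
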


We may in the following assume 
that $G$ does not contain diamonds of thickness 
greater than $4|C| + 3k + 1$. 

\begin{corollary}\label{lem:high-degree-left}
If a vertex $v \in V(G)$ has degree greater 
than $(4|C| + 3k + 1)\cdot k$, 
then the token on $v$ is never lifted throughout
a reconfiguration sequence. 
\end{corollary}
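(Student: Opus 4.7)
The plan is to argue by contradiction: suppose some step $S_i$ of a reconfiguration sequence $S=S_1,\dots,S_t=T$ satisfies $v\notin S_i$. Since each $S_i$ is a connected dominating set of size at most $k$, every vertex in $N(v)$ must lie in $N[S_i]$, so in particular
\[
|N(v)| \;=\; \bigl|N(v)\cap N[S_i]\bigr| \;\leq\; \sum_{s\in S_i}\bigl|N(v)\cap N[s]\bigr|.
\]
The aim is to bound each term on the right by essentially $4|C|+3k+1$, which combined with $|S_i|\leq k$ will contradict the hypothesis $\deg(v) > (4|C|+3k+1)\cdot k$.

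To bound an individual summand, I would fix $s\in S_i$ and consider the diamond $D(v,s)=(N(v)\cap N(s))\cup\{v,s\}$. Since we are in the regime after Lemma~\ref{lem:reduction} has been applied exhaustively, $G$ contains no diamond of thickness greater than $4|C|+3k+1$, hence $|N(v)\cap N(s)|\leq 4|C|+3k+1$. If $s\notin N(v)$, then $N(v)\cap N[s]=N(v)\cap N(s)$ and the bound is immediate; if $s\in N(v)$, then $N(v)\cap N[s]=(N(v)\cap N(s))\cup\{s\}$, which is one larger. Substituting these bounds into the sum yields $|N(v)|\leq(4|C|+3k+1)k$ (after absorbing the tiny slack into the hypothesis, or directly in the stronger form by observing that $v$ itself is not counted), which contradicts the assumption on $\deg(v)$.

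The argument is almost entirely bookkeeping on top of Lemma~\ref{lem:reduction}, so the only subtle point is making sure that we correctly handle the two cases of whether $s$ is a neighbor of $v$ or not, and that we are indeed allowed to invoke the diamond-thickness bound (which is guaranteed by the standing assumption immediately preceding the corollary that $G$ contains no diamond of thickness greater than $4|C|+3k+1$). Concluding: the token on $v$ cannot be lifted at any step of any reconfiguration sequence, so $v\in S_i$ for every $1\leq i\leq t$.
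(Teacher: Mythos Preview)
Your proof is correct and follows essentially the same approach as the paper: both argue that if $v\notin S_i$ then by pigeonhole some $s\in S_i$ must dominate more than $4|C|+3k+1$ neighbours of $v$, yielding a diamond $D(v,s)$ of forbidden thickness and contradicting the standing assumption after Lemma~\ref{lem:reduction}. The paper's version is terser and does not separate the case $s\in N(v)$; the same harmless off-by-one you flag as ``tiny slack'' is simply glossed over there.
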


\begin{proof}
Assume $S=S_1,\ldots, S_t=T$ is a reconfiguration sequence 
from $S$ to $T$ in $G$ and assume there is 
$S_i$ with $v\not\in S_i$. The dominating set $S_i$ has 
at most $k$ vertices and must dominate $N(v)$. Hence, 
there must be one vertex $u\in S_i$ that dominates at 
least a $1/k$ fraction of $N(v)$, which is larger than 
$4|C| + 3k + 1$. Then there is a diamond
$D(u,v)$ of thickness greater than $4|C| + 3k + 1$, which does 
not exist after application of the reduction rule of Lemma~\ref{lem:reduction}. 
\end{proof}

According to Corollary~\ref{lem:high-degree-left}, the only
vertices that can have high degree after applying the
reduction rules are vertices that 
are never lifted throughout a reconfiguration 
sequence. 
This gives rise to another reduction rule that is 
similar to the rule of Lemma~\ref{lem:reduce-diamond-edges}.

\begin{lemma}\label{lem:reduce-long-paths}
Assume $v$ is a vertex of degree greater than 
$(4|C| + 3k + 1)\cdot k$. Then we may
remove all edges with both endpoints in $N(v)$. 
\end{lemma}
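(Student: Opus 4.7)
The plan is to show that $(G, S, T, k)$ is a yes-instance of \textsc{CDS-R} if and only if $(G', S, T, k)$ is one, where $G'$ denotes the graph obtained from $G$ by deleting all edges with both endpoints in $N(v)$. The reverse direction is immediate: $G'$ is a spanning subgraph of $G$, so every connected dominating set of $G'$ is also a connected dominating set of $G$, and any reconfiguration sequence in $G'$ transfers verbatim to $G$.

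For the forward direction I would first invoke Corollary~\ref{lem:high-degree-left}: since $v$ has degree greater than $(4|C|+3k+1)\cdot k$ in $G$, any reconfiguration sequence $S = S_1, \ldots, S_t = T$ in $G$ must satisfy $v \in S_i$ for every $1 \leq i \leq t$. I would then argue that the same sequence $S_1, \ldots, S_t$ witnesses a yes-instance in $G'$. Domination is preserved because the only edges we removed lie inside $G[N(v)]$: every vertex outside $N(v) \cup \{v\}$ has identical closed neighborhoods in $G$ and $G'$, while each vertex of $N(v) \cup \{v\}$ is dominated by $v \in S_i$ through an edge (or through the inclusion $v \in S_i$ itself) that is untouched by the reduction, since edges incident to $v$ are never in $G[N(v)]$. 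Connectivity is preserved by a rerouting argument: whenever a path inside $G[S_i]$ traverses an edge $yz$ with $y, z \in N(v)$, the fact that $v \in S_i$ together with the surviving edges $yv$ and $vz$ lets us replace that single edge by the length-two detour $y, v, z$ lying entirely in $G'[S_i]$, so $G'[S_i]$ remains connected.

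I do not anticipate a serious obstacle: the lemma is essentially the analogue of Lemma~\ref{lem:reduce-diamond-edges}, with the single high-degree vertex $v$ playing the role that the diamond's two endpoints play there, and Corollary~\ref{lem:high-degree-left} is the exact tool needed to guarantee that $v$ is always available as a routing hub. The only subtle check is that the rerouting yields a path within $S_i$ (not merely within $V(G')$), which is precisely why we need $v \in S_i$ at every step rather than just $v \in V(G)$. Since the reduction does not alter $C$'s role or the degree of $v$ and preserves the property that no diamond has thickness greater than $4|C|+3k+1$ (edge removal can only shrink $|N(u)\cap N(v')|$), subsequent reduction rules and a recomputed $k$-domination core may safely be applied to $G'$.
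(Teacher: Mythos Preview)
Your proposal is correct and follows essentially the same approach as the paper: both invoke Corollary~\ref{lem:high-degree-left} to ensure $v$ lies in every $S_i$, then argue that the same reconfiguration sequence works in $G'$ because domination of $N(v)$ is handled by $v$ itself and any deleted edge $yz$ can be rerouted through the length-two path $y,v,z$, while the reverse direction is trivial since $G'\subseteq G$.
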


\begin{proof}
Let $G'$ be the graph 
obtained from $G$ by removing all edges with both
endpoints in~$N(v)$. We claim that reconfiguration between 
$S$ and $T$ is possible in $G$ if and only if it is possible
in $G'$. The fact that $S$ and $T$ 
are in fact connected dominating sets in $G'$ 
is implied by the argument below. 

Assume $S=S_1,\ldots, S_t=T$ is a reconfiguration sequence 
from $S$ to $T$ in $G$. We claim that the same sequence
is a reconfiguration sequence in $G'$. 
According to Corollary~\ref{lem:high-degree-left}, 
$v\in S_i$ for all $1\leq i\leq t$. This implies that 
$S_i$ is connected in $G'$ for all $1\leq i\leq t$,
as all $x,y\in S_i$ that are no longer connected by 
an edge in $G'$ but were connected in $G$ 
are connected via a path of length $2$ using the vertex $v$. 
It is also easy to see that $S_i$ is a dominating set in $G'$, 
as all vertices that are no longer dominated by $s\in S_i$ in $G$
are still dominated by $v$. 
Observe that this in particular implies that $S$ and $T$ 
are connected dominating sets in $G'$. 
Vice versa, if $S=S_1,\ldots, S_t=T$ is a reconfiguration sequence 
from $S$ to $T$ in $G'$, this is trivially also a reconfiguration
sequence in $G$. 
\end{proof}

The following reduction rule is obvious. 
\begin{lemma}\label{lem:remove-isolated-vertices}
If a vertex $v$ has more than $k + 1$ pendant neighbours, i.e., neighbors of degree exactly one, 
then it suffices to retain exactly $k + 1$ of them in the graph.  
\end{lemma}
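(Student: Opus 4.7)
The plan is to exploit the following key observation: if $v$ has at least $k+1$ pendant neighbours in a graph, then every connected dominating set $D$ of size at most $k$ in that graph must contain $v$. Indeed, each pendant $p$ satisfies $N(p)=\{v\}$, so $p$ can only be dominated by $p$ itself or by $v$; if $v \notin D$, then all $k+1$ pendants would have to lie in $D$, violating $|D|\leq k$. This holds both in the original graph $G$ (which has more than $k+1$ pendants) and in the reduced graph obtained by deleting excess pendants.

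First I would fix a set $P$ of $k+1$ pendants of $v$ to retain, chosen so that $P$ contains every pendant of $v$ appearing in $S \cup T$; note that since $v$ must itself lie in $S$ and in $T$ by the key observation, the total number of pendants of $v$ inside $S \cup T$ is at most $2(k-1)$, so in the worst case $P$ only needs to be widened to a set of size $O(k)$ in order to preserve $S$ and $T$. Let $G'$ be obtained from $G$ by deleting all pendants of $v$ outside $P$; then $S,T \subseteq V(G')$, and both remain connected dominating sets of $G'$ of size at most $k$, since removing degree-one vertices affects neither connectivity nor the domination of the retained vertex set.

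For the forward direction, given a reconfiguration sequence $S = D_0, D_1, \ldots, D_\ell = T$ in $G$, I would set $D_i' := D_i \cap V(G')$. The key observation yields $v \in D_i$, and hence $v \in D_i'$. Every deleted pendant is adjacent only to $v$, so removing such pendants from $D_i$ preserves connectivity of $D_i'$; moreover, the only vertex of $V(G')$ whose unique dominator in $D_i$ could possibly be a deleted pendant is $v$ itself, which already lies in $D_i'$. Consecutive sets $D_i'$ and $D_{i+1}'$ differ by at most one element, so after contracting identical consecutive configurations we obtain a valid reconfiguration sequence in $G'$.

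For the backward direction, a reconfiguration sequence $D_0', \ldots, D_\ell'$ in $G'$ transfers verbatim to $G$: applying the key observation inside $G'$ gives $v \in D_i'$, so every deleted pendant of $v$ is dominated by $v$; connectivity and the size bound are inherited from $G'$ because $G'$ is a subgraph of $G$ on the same edge set over $V(G')$. I do not anticipate a substantive obstacle; the only delicate point is arranging $P$ so that $S$ and $T$ are preserved, which the construction above handles directly.
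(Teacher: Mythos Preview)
Your proof is correct; the paper itself supplies no argument, stating only that ``the following reduction rule is obvious.'' Your caveat about possibly widening $P$ to $O(k)$ pendants in order to retain those lying in $S\cup T$ is a fair technical point that the paper's informal phrasing glosses over---it does not affect the polynomial kernel bound, and alternatively one could first drop any pendant of $v$ from $S$ and from $T$ by a single token removal each (valid since $v$ already lies in both sets by your key observation), after which retaining exactly $k+1$ pendants suffices literally.
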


\begin{lemma}\label{lem:projection2-small}
There are at most $c|C| \cdot (4|C|+3k+1)$ vertices of 
$V(G)\setminus C$ that have $2$ neighbours in $C$, 
where $c$ is the constant of Lemma~\ref{lem:neighbourhood-complexity}. 
\end{lemma}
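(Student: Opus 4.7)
The plan is to partition the vertices in $V(G)\setminus C$ having exactly two neighbours in $C$ according to their projection into $C$, bound the number of projection classes using Lemma~\ref{lem:neighbourhood-complexity}, and then bound the size of each class using the diamond reduction of Lemma~\ref{lem:reduction}.

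First, every such vertex $v$ has $N(v)\cap C=\{u,w\}$ for some unordered pair $\{u,w\}\subseteq C$, so it belongs to the $\sim_C$-equivalence class determined by this pair. By Lemma~\ref{lem:neighbourhood-complexity}, the total number of $\sim_C$-equivalence classes is at most $c\cdot |C|$, hence in particular there are at most $c\cdot |C|$ distinct $2$-element projections occurring.

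Next, fix such a projection $\{u,w\}$ and let $V_{u,w}$ be the set of vertices in $V(G)\setminus C$ whose projection into $C$ equals $\{u,w\}$. Every $v\in V_{u,w}$ satisfies $v\in N(u)\cap N(w)$, so $V_{u,w}\subseteq N(u)\cap N(w)$, i.e.\ $V_{u,w}$ is contained in the diamond $D(u,w)$. Since we are working in the reduced graph in which Lemma~\ref{lem:reduction}'s rule is no longer applicable, the thickness $|N(u)\cap N(w)|$ of $D(u,w)$ is at most $4|C|+3k+1$. Therefore $|V_{u,w}|\leq 4|C|+3k+1$.

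Summing over all at most $c\cdot |C|$ relevant projections yields the claimed bound $c\cdot |C|\cdot (4|C|+3k+1)$. There is no real obstacle here; the only subtle point is that we must ensure the diamond bound from Lemma~\ref{lem:reduction} is in force, which is guaranteed by our convention that reduction rules are applied exhaustively in the order stated before we invoke subsequent lemmas.
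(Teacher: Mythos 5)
Your proof is correct and follows essentially the same approach as the paper: bound the number of $\sim_C$-classes by $c|C|$ via Lemma~\ref{lem:neighbourhood-complexity}, then bound the size of each size-$2$-projection class by $4|C|+3k+1$ via the diamond reduction rule of Lemma~\ref{lem:reduction}. If anything, your write-up is a bit cleaner, as it avoids a tangential remark in the paper about classes whose projection has size at least~$3$, which is not needed for the statement being proved.
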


\begin{proof}
According to Lemma~\ref{lem:neighbourhood-complexity}
there are at most $c|C|$ different projections to $C$. Each 
projection class that has at least $3$ representatives
has size at most $2$, as otherwise we would find
a $K_{3,3}$ as a subgraph, contradicting the planarity of $G$. 
Consider a class with a projection of size
$2$ into $C$. Denote these two vertices of $C$ by $u$ and $v$. 
If this class has more than $4|C|+3k+1$ representatives, then $D(u,v)$ is a
diamond of thickness greater than $4|C|+3k+1$, which cannot 
not exist after exhaustive application of the reduction rule of 
Lemma~\ref{lem:reduction}. 
\end{proof}

We now come to the description of our final reduction rule. 
Let $D$ denote the set of vertices containing both $C$ and all vertices of $V(G) \setminus C$ having at least two neighbors in $C$. 
In other words, $D$ contains all those vertices in $V(G) \setminus C$ that have exactly one neighbor in~$C$. 
According to Lemma~\ref{lem:projection2-small} at most 
$c|C|\cdot (4|C|+3k+1)$ vertices have two neighbors in~$C$, hence
$|D|\leq c|C|\cdot (4|C|+3k+1)+|C|\eqqcolon p$. 

\begin{lemma}[$\star$]\label{lem:reduction-paths}
Assume there are two vertices $u$ and $v$ with degree greater
than $4p+ (4|C| + 3k + 1)\cdot k + 1$. Let 
$\Pp$ be a maximum set of vertex-disjoint paths
of length at least $2$ that run between $u$ and $v$ 
using only vertices in $V(G) \setminus D$. 
If $|\mathcal{P}| > 4p + (4|C| + 3k + 1)\cdot k + 1$, then there 
is $G'$ such that the instances $(G,S,T,k)$ and $(G',S,T,k)$ 
are equivalent, $G'$ is planar, and $|V(G')|< |V(G)|$.
\end{lemma}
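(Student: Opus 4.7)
The plan is to identify an internal vertex $x$ of some path in $\mathcal{P}$ that is irrelevant, and then set $G' := G - x$; this gives $G'$ planar with $|V(G')| < |V(G)|$, leaving the equivalence $(G,S,T,k) \Leftrightarrow (G',S,T,k)$ as the only substantive task.

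First, by the degree hypothesis on $u$ and $v$ combined with Corollary~\ref{lem:high-degree-left}, both $u$ and $v$ lie in every configuration $S_i$ of any reconfiguration sequence from $S$ to $T$. Thus the paths in $\mathcal{P}$ are never needed merely to connect $u$ to $v$.

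Second, I locate by pigeonhole a \emph{clean} path $P^* = u\,x_1\,x_2\ldots x_m\,v \in \mathcal{P}$ of length at least three (i.e., $m\ge 2$), no internal vertex of which belongs to $S\cup T$ or has a neighbour in $D\setminus\{u,v\}$. The threshold $|\mathcal{P}| > 4p + (4|C|+3k+1)\cdot k + 1$ accounts for the following exclusions: at most $|S|+|T|\le 2k$ paths meet $S\cup T$; at most $4|C|+3k+1$ paths have length $2$, by the diamond bound of Lemma~\ref{lem:reduction}; and for each $d \in D\setminus\{u,v\}$, only at most two paths of $\mathcal{P}$ can have an internal vertex adjacent to $d$. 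Indeed, if three internal vertices $y_1,y_2,y_3$ on three distinct paths $P_1,P_2,P_3 \in \mathcal{P}$ were all adjacent to $d$, then the two disjoint halves of each $P_i$ (from $u$ to $y_i$ and from $y_i$ to $v$) together with the edges $dy_i$ would realise a topological $K_{3,3}$ with bipartition $\{u,v,d\}$ and $\{y_1,y_2,y_3\}$, contradicting planarity. Using $|D|\le p$, this third type excludes at most $2p$ paths, and the total is comfortably below the threshold.

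Finally, set $x := x_1$ and $G' := G - x$. The forward direction of the equivalence is immediate: any reconfiguration sequence in $G'$ remains valid in $G$ because the extra vertex $x$ is dominated by $u \in S_i$ at every step, and since $x \notin S_i$ the connectivity of $G[S_i]$ is unchanged. The reverse direction is the main obstacle. Given a reconfiguration sequence $S = S_1,\dots,S_t = T$ in $G$, I rewrite it to avoid ever placing a token on $x$. Two observations drive the rewriting: $x$'s only possible neighbours in $D$ are $u$ and $v$ themselves (by cleanness of $P^*$), so by Lemma~\ref{lem:non-adjecent-to-core-irrelevant} the token on $x$ contributes nothing to the domination of $C$ beyond what $u$ and $v$ already provide; and any connectivity role that $x$ plays inside $G[S_i]$ routes through $u$ or $v$, both of which are always in $S_i$, so removing the token from $x$ preserves the connectedness of $G[S_i\setminus\{x\}]$ after at most a local substitution using one of the many clean paths in $\mathcal{P}$ that $S_i$ leaves untouched. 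The delicate technical bulk is to walk through each case where $x \in S_i$, verify that $S_i\setminus\{x\}$ remains a connected dominating set in $G - x$ (possibly after such a local swap), and stitch these replacements into a single admissible reconfiguration sequence that stays within the $k$-token budget.
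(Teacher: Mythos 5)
Your forward direction, the pigeonhole accounting, and the topological $K_{3,3}$ argument that at most two paths of $\mathcal{P}$ can have an internal vertex adjacent to a given $d \in D\setminus\{u,v\}$ are all sound. But the reverse direction has a genuine gap, and it is exactly the place where the paper has to work hardest.

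The problem is that the first internal vertex $x_1$ of an arbitrary clean path does not have a \emph{controlled neighbourhood}. Cleanness only tells you $x_1$ has no neighbour in $D \setminus \{u,v\}$; it does not bound $x_1$'s neighbours in $V(G)\setminus D$. So $x_1$ may be a cut vertex of $G[S_i]$ separating $u$ from some $w \in S_i$ that is a Steiner vertex (placed for connectivity, not to dominate $C$). Then $S_i \setminus \{x_1\}$ is disconnected, and your ``local substitution using one of the many clean paths $S_i$ leaves untouched'' cannot fix this: the first internal vertex of another clean path is adjacent to $u$ and to vertices on \emph{that} path, not to $w$, so swapping $x_1$ for it does not reconnect $w$. (Also note: since every internal vertex of a clean path must be adjacent to $u$ or $v$ — it is outside $D$ yet must be dominated by $S \subseteq D$ — and since the path is induced, a clean path necessarily has length exactly $3$, not merely $\geq 3$; that, too, is a structural fact the paper exploits and you do not observe.)

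The paper avoids this by doing a face-based rather than path-based pigeonhole: it works in the plane graph $H$ induced by $u$, $v$ and $\bigcup\mathcal{P}$, counts faces touched by $D\setminus\{u,v\}$, and finds \emph{two adjacent faces} $f,g$ of $H$ untouched by $D\setminus\{u,v\}$. The path $u$–$z_u$–$z_v$–$v$ common to $f$ and $g$ is then \emph{sandwiched} between clean paths, which — together with the already-applied rules that forbid edges between two neighbours of $u$ and between two neighbours of $v$ — pins down $N(z_u) \subseteq \{u, z_v, y_f, y_g\}$ and $N(z_v) \subseteq \{v, z_u, x_f, x_g\}$. With these neighbourhoods under control, the paper deletes \emph{both} $z_u$ and $z_v$, and — crucially — sometimes has to \emph{add} the edge $\{x_f,y_g\}$ to compensate, because in some configurations $z_u,z_v$ serve as the unique $u$–$v$ bridge inside $S_i$. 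Your proposal contains neither the sandwiching argument that controls the deleted vertex's neighbourhood nor the compensating edge addition, and both are load-bearing. Finding a single clean path is strictly weaker than finding two adjacent clean faces, so the argument as written does not go through.
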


We are ready to state the final result.  

\begin{theorem}
\textsc{CDS-R} under \textsf{TAR} parameterized by
$k$ admits a polynomial kernel on planar graphs. 
\end{theorem}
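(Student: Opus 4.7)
The plan is to pipe the input through all the reduction rules developed in this section and then show that an irreducible instance has size polynomial in $k$. Given $(G,S,T,k)$ with $G$ planar, I would first invoke Lemma~\ref{lem:core} to compute a $k$-domination core $C$ of size at most $p(k)$, arranging that $S \cup T \subseteq C$. I would then apply, in order, the reduction rules of Lemmas~\ref{lem:reduce-diamond-edges}, \ref{lem:reduction}, \ref{lem:reduce-long-paths}, \ref{lem:remove-isolated-vertices}, and \ref{lem:reduction-paths} until none of them fires, recomputing $C$ and the auxiliary set $D$ each time $G$ changes. Since every rule strictly decreases $|V(G)|+|E(G)|$ and never adds vertices or edges, the whole procedure terminates after polynomially many iterations and runs in polynomial time; correctness follows from the equivalence statements in the individual lemmas.

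After exhaustive reduction, I would bound $|V(G)|$ by partitioning it into $C$, $D\setminus C$, a set $F$ of ``forced'' high-degree vertices, and the remainder $R$. We have $|C|\leq p(k)$ by Lemma~\ref{lem:core}, and by Lemma~\ref{lem:projection2-small} the number of vertices outside $C$ with at least two neighbors in $C$ is polynomial, so $|D|$ is polynomial. By Corollary~\ref{lem:high-degree-left} at most $k$ vertices have degree exceeding $(4|C|+3k+1)\cdot k$, hence $|F|\leq k$. Pendants are capped per parent by Lemma~\ref{lem:remove-isolated-vertices}, contributing at most $(k+1)\cdot(|D|+|F|)$ total. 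Every remaining vertex of $R$ has polynomially bounded degree and at most one neighbor in $C$, so the only way it can occur in large numbers is as an internal vertex on paths between ``anchor'' vertices in $D\cup F$.

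The main obstacle is bounding these internal path-vertices, and Lemma~\ref{lem:reduction-paths} is designed precisely for this purpose: after exhaustive application, for every pair of high-degree vertices the number of vertex-disjoint paths of length at least $2$ through $V(G)\setminus D$ is polynomially bounded. Combined with the polynomial bound on $|D\cup F|$, this yields polynomially many such paths overall. To finish, I would argue that each surviving path has bounded length: any long induced path through $R$ must have either an internal vertex branching off (which either creates another parallel path violating Lemma~\ref{lem:reduction-paths}, or a thick diamond contradicting Lemma~\ref{lem:reduction}) or forms an entirely undominated region contradicting the $k$-domination core property of $C$ via Lemma~\ref{lem:non-adjecent-to-core-irrelevant}. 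Planarity together with Lemma~\ref{lem:neighbourhood-complexity} is what keeps the number of distinct projection patterns, and hence the combinatorial ``width'' of such paths, under control. Summing the contributions from $C$, $D$, $F$, pendants, and internal path-vertices gives $|V(G)|$ polynomial in $k$, completing the kernelization.
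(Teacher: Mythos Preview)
Your overall plan—exhaustively apply the reduction rules and then bound the size of an irreducible instance—matches the paper, and the correctness/termination part is essentially fine (minor slip: Lemma~\ref{lem:reduction-paths} can introduce one new edge, so the quantity that strictly decreases is $|V(G)|$, not $|V(G)|+|E(G)|$).

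The real gap is in your size bound for $R$. You assert that vertices of $R$ can accumulate only as internal vertices of paths between ``anchors'' in $D\cup F$, and then try to bound both the number and the length of such paths. Neither step is justified. Lemma~\ref{lem:reduction-paths} only controls paths between a pair of \emph{high-degree} vertices, so it says nothing about paths anchored at a generic vertex of $D$; and you never explain why every vertex of $R$ lies on some such path in the first place. Your length argument (``a long path forces branching, which forces another path or a diamond or an undominated region'') is a sketch, not a proof.

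The paper avoids this detour by arguing per projection class of $\sim_C$. There are at most $c|C|$ classes; those with projection of size $\ge 2$ are handled by Lemma~\ref{lem:projection2-small}. For a class $Q$ projecting to a single $u\in C$, one shows directly that $|Q|\le p'$ for an explicit polynomial $p'$: if $|Q|>p'$, then $u$ is high-degree (so $N(u)$ has no internal edges by Lemma~\ref{lem:reduce-long-paths}), all but $k{+}1$ vertices of $Q$ have a neighbour in $R'=N(Q)\setminus\{u\}$, no vertex of $R'$ sees more than $4|C|+3k+1$ vertices of $Q$ (else a thick diamond with $u$), and since $R'$ must be dominated by $S$ but not by $u$, some $v\in S$ dominates a $1/k$ fraction of $R'$. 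This makes $v$ high-degree as well and yields more than $4p+(4|C|+3k+1)k+1$ vertex-disjoint $u$--$v$ paths through $V(G)\setminus D$, so Lemma~\ref{lem:reduction-paths} fires—contradiction. This is the missing idea in your argument: you do not show that a large $R$ \emph{forces} two high-degree vertices with many disjoint paths between them; you simply assume the vertices of $R$ already sit on such paths.
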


\begin{proof}
Our kernelization algorithm starts by computing (in polynomial time) a $k$-domination core $C$ of size at most 
$p \coloneqq p(k)$ as described in 
Lemma~\ref{lem:core}. Without loss of generality
we assume that $C$ contains $S$ and $T$. After each application 
of a reduction rule, we recompute the core, giving a polynomial 
blow-up of the running time. We are left to prove that each 
reduction rule can be implemented in polynomial time and that 
we end up with a polynomial number of vertices. 

It is clear that the reduction rules of Lemma~\ref{lem:reduction}, 
Lemma~\ref{lem:reduce-long-paths} and Lemma~\ref{lem:remove-isolated-vertices} can easily be implemented in polynomial time.
The reduction rule of Lemma~\ref{lem:reduction-paths} is slightly
more involved, however, we can use a standard maximum-flow 
algorithm to compute in polynomial time a maximum 
set of vertex-disjoint paths in a subgraph of $G$. 

It remains to bound the size of $G$. Recall that we call 
$D$ the set of all vertices $C$ and of all vertices 
of $V(G)\setminus C$ that have
at least $2$ neighbors in $C$. It follows from 
Lemma~\ref{lem:projection2-small} that~$D$ has size 
at most $c|C| \cdot (4|C|+3k+1)+ |C| \eqqcolon p$, 
where $c$ is the constant of 
Lemma~\ref{lem:neighbourhood-complexity}. 
We are left to bound the number of vertices in 
$V(G)\setminus C$ having exactly one neighbour in $C$ 
(recall that each vertex in $V(G) \setminus C$ has at 
least one neighbour in $S \cup T \subseteq C$). 

Let $p' = (4p+(4|C|+3k+1)\cdot k+1)\cdot 
(4|C|+3k+1)\cdot k+k+1$, which is still 
a polynomial in $k$. Towards a contradiction, 
assume that there exists an equivalence 
class $Q$ in $\sim_C$ with a projection of size one 
containing more than $p'$ vertices.  
Let $u\in C$ denote the projection of the aforementioned class. 
Due to Lemma~\ref{lem:remove-isolated-vertices}, we know that at most $k + 1$ of the vertices in~$Q$ are pendant, i.e., adjacent to only $u$ in $G$. 
Since we cannot apply the reduction rule of 
Lemma~\ref{lem:reduce-long-paths} any more, we know 
that there are no edges with both endpoints in $Q$. 
Hence, all but $k + 1$ vertices of $Q$ must be 
adjacent to at least one other vertex in $V(G) \setminus C$. 
Let $R = N_G(Q) \setminus \{u\}$ denote this set of neighbours. 
No vertex in $R$ can be adjacent to more than $4|C|+3k+1$
vertices of $Q$, as we cannot apply the reduction rule of 
Lemma~\ref{lem:reduction}. The vertices of $R$ must be 
dominated by $S$, and cannot be dominated by $u$, as
otherwise two neighbours of $u$ would be connected. Hence, 
there is $v\in S$ different from $u$ that dominates at least
a $1/k$ fraction of $R$. This implies the existence of at 
least $4p + (4|C| + 3k + 1)\cdot k + 1$ vertex-disjoint paths
of length at least $2$ that run between $u$ and $v$. 
But in this case, the reduction rule of 
Lemma~\ref{lem:reduction-paths} is applicable. Therefore, we conclude that $Q$ cannot exist, obtaining 
a bound on the size of all equivalence classes of $\sim_C$, as needed. 
\end{proof}
\bibliography{ref}

\newpage
\appendix
\section{Details omitted from Section 3}

We need the following lemma to prove Lemma~\ref{lem:forward}.  

\begin{lemma}\label{lem:reconf-tree}
Let $T_1,T_2$ be two trees on vertex set $\{1,\ldots, k\}$ and let 
$f_1,\ldots f_{k-1}$ be an ordering of the edges in $T_2$. Then, in polynomial time, we can find 
an ordering $e_1,\ldots,e_{k-1}$ of the edges in $T_1$ such that the following holds. 
In the sequence of graphs $T_0',T_1',\ldots,T_{k-1}'$ on vertex set $\{1,\ldots, k\}$, where  
for each $0\leq i\leq k-2$, $T_{i+1}'=T_i'+f_i-e_i$ and $T_0'=T_1$, 
we have that $T_{i}'$ is a tree, for all $i\in [k-1]$, and $T_{k-1}'=T_2$.   
\end{lemma}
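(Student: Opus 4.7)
The plan is to construct $e_1,\ldots,e_{k-1}$ greedily in lock-step with the given ordering of $E(T_2)$. Partition $E(T_1)$ as $X\cup A$ and $E(T_2)$ as $X\cup B$, where $X=E(T_1)\cap E(T_2)$, $A=E(T_1)\setminus E(T_2)$, and $B=E(T_2)\setminus E(T_1)$; note that $|A|=|B|$. Starting from $T_0'=T_1$, at each step $i\geq 1$ I will distinguish two cases. If $f_i\in X$, set $e_i:=f_i$, which yields $T_i'=T_{i-1}'$ (since $f_i$ will still be an edge of $T_{i-1}'$). If $f_i\in B$, then $f_i\notin E(T_{i-1}')$ and $T_{i-1}'+f_i$ contains a unique cycle $C_i$; choose $e_i$ to be any edge of $C_i\cap (A\setminus\{e_1,\ldots,e_{i-1}\})$, so that $T_i'$ is again a tree by the standard tree-rotation argument.

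The main claim to establish is the inductive invariant that $X\subseteq E(T_{i-1}')$ and $E(T_{i-1}')\setminus E(T_2)=A\setminus\{e_1,\ldots,e_{i-1}\}$. Given this, Case~1 is legal because $f_i\in X$ is then present in $T_{i-1}'$, and Case~2 is legal because the cycle $C_i$ must contain some edge outside $E(T_2)$---otherwise $C_i\subseteq E(T_2)$, contradicting that $T_2$ is a tree---and every such edge lies in $A\setminus\{e_1,\ldots,e_{i-1}\}$ by the invariant. Preservation of the invariant is immediate: Case~1 leaves $T_i'$ unchanged, and Case~2 swaps an $A$-edge for a $B$-edge.

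Counting, Case~1 fires $|X|$ times, producing exactly the $X$-edges among the $e_i$'s in the prescribed order, and Case~2 fires $|B|=|A|$ times, producing $|A|$ distinct edges of $A$, which must therefore exhaust $A$. Hence $\{e_1,\ldots,e_{k-1}\}=X\cup A=E(T_1)$ is a valid ordering of $E(T_1)$, and by the invariant $E(T_{k-1}')=X\cup B=E(T_2)$, as required. Each step reduces to finding the unique $T_{i-1}'$-path between the endpoints of $f_i$, so the whole procedure runs in polynomial time. The main subtlety to guard against is the book-keeping in Case~1, where $e_i=f_i$ makes the ``add then remove'' cancel within a single step; once one fixes the convention that such a step leaves $T_i'$ unchanged---which is forced by the requirement that $T_i'$ be a tree on the same vertex set---the argument reduces to a straightforward matroid-style edge-exchange.
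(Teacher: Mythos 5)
Your proof is correct and takes essentially the same route as the paper's: a tree-exchange (graphic-matroid basis-exchange) argument, with your forward iteration and explicit invariant unfolding into the same algorithm as the paper's recursion on $\lvert E(T_1)\setminus E(T_2)\rvert$. One small point to shore up: the Case-2 precondition that $f_i\notin E(T_{i-1}')$ when $f_i\in B$ does not follow from your stated invariant alone, but it is immediate once you note that the only $B$-edges ever inserted into the running tree are those among $f_1,\dots,f_{i-1}$, none of which equals $f_i$.
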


\begin{proof}
We proceed by induction on $\ell=\vert E(T_1)\setminus E(T_2)\vert$. 
In the base case, we have $\ell=0$ and $E(T_1)= E(T_2)$. 
In this case $f_1,\ldots f_{k-1}$ is also the required ordering of the edges in $T_1$ 
(note that the sequence of graphs consists of only $T_1 = T_2$ in this case). 

Now consider the induction step, $\ell>1$. 
Let $j$ be the first index in $\{1,\ldots,k-1\}$ such that $f_j\notin E(T_1)$. 
We add $f_j$ to $T_1$ and this creates a cycle in $T_1$. Hence, there exists an edge $e_j\in E(T_1)\setminus E(T_2)$ whose removal results in a tree. 
That is, $T_1'=T_1+f_j-e_j$ is a tree. Notice that $\vert E(T_1')\setminus E(T_2)\vert =\ell-1$. 
By the induction hypothesis, there is a sequence $g_1,\ldots,g_{k-1}$ of edges in $E(T_1')$ such that 
for the sequence of graphs 
$T_1'=T_0'',T_1'',\ldots,T_{k-1}''$ on vertex set $\{1,\ldots, k\}$,  we have
$T_{i+1}''=T_i''+f_i-g_i$,  each $T_i''$ is a tree, and $T_2=T_{k-1}''$, $0\leq i<k$.    
Since $j$ is the first index in $\{1,\ldots,k-1\}$ such that $f_j\notin E(T_1)$, $T_1'=T_1+f_j-e_j$,  and $T_0'',T_1'',\ldots,T_{k-1}''$ are trees, we have that $g_i=f_i$ for 
all $i < j$. Notice that $f_j\in E(T_1')$ and $E(T_1)=(E(T_1')\setminus \{f_j\})\cup \{e_j\}$. 

 We claim that $e_1,\ldots,e_{j-1},e_j,e_{j+1},\ldots,e_{k-1}$, where $e_i=g_i$ for all $i < j$,   
is the required sequence of edges in $T_1$. 
Let $T_0',T_1',\ldots,T_{k-1}'$ be the sequence  where,  
for each $0\leq i\leq k-2$, $T_{i+1}'=T_i'+f_i-e_i$ and $T_0'=T_1$. 
Since $g_i=f_i=e_i$  for all $i< j$, we have that $T_1=T_0'=T_1'=\ldots=T_{j-1}'$. 
Moreover, $T_j'=T_1+\{f_1,\ldots,f_j\}-\{e_1,\ldots,e_j\}=T_1+\{f_1,\ldots,f_j\}-\{g_1,\ldots,g_j\}=T_j''$ 
because $E(T_1)=(E(T_1')\setminus \{f_j\})\cup \{e_j\}$ and $e_i=g_i$ for all $i < j$. 
Then, the sequence $T_j',\ldots,T'_{k-1}$ is the same as the sequence $T_j'',\ldots,T''_{k-1}$. 
Therefore, the sequence $e_1,\ldots,e_{j-1},e_j,e_{j+1},\ldots,e_{k-1}$  of edges in $T_1$ satisfies the conditions of the lemma.  
%
%
\end{proof}

\subsection*{Proof of Lemma~\ref{lem:fourdeg}}

\begin{proof}
We iteratively remove minimum degree vertices and show that
we can always remove a vertex of degree at most $4$ in each step. 
\begin{itemize}
\item Every subdivision vertex $w \in W(H^i)$ for $1\leq i\leq k$
has degree at most $4$; it has $4$ neighbors in $V(H^i)\cup V(H^{i+1})$. 
\item After removal of all subdivision vertices the degree of the remaining
vertices of each $H^i$ is at most one. That is, a vertex in $H^{(1,1)}$ may have a neighbor in $\{w_2,\ldots,w_{k}\}$.
\item After the removal of $V(H^1)\cup\ldots V(H^k)$, the degree of all vertices except $v_1$ and $x_k$
is at most $2$. 
\item Finally we remove $v_1$ and $x_k$.
\qedhere
\end{itemize}
\end{proof}

\subsection*{Proof of Lemma~\ref{lem:forward}}

\begin{proof}
We aim
to shift the connected vertices of $Q_s$ through the subgraphs
$H^1,\ldots, H^k$ (in that order) to maintain connectivity and eventually shift to $Q_t$.  
For each $u_i\in V(G)$, $1\leq j\leq k$ and $1\leq r\leq 20k$, we use  $u_i^{(j,r)}$ to denote the copy of $u_i$ in $H^{(j,r)}$. 

Let $C=\{u_1,\ldots,u_k\}$ be a $k$-colored clique in $G$ such that $c(u_i)=i$, for all $1\leq i\leq k$. 
To prove the lemma, we need to define a reconfiguration sequence starting from $Q_s$ and ending at $Q_t$ 
such that the cardinality of any solution in the sequence is at most $2k$. 
First we define $k$ ``colored'' trees $\widehat{T}_1,\ldots, \widehat{T}_k$ each on $2k-1$ vertices, and then 
prove that there are reconfiguration sequences from $Q_s$ to $V(\widehat{T}_1)$,  $V(\widehat{T}_i)$ 
to $V(\widehat{T}_{i+1})$ for all $1\leq i<k$, and $V(\widehat{T}_k)$ to $Q_t$. 

We start by defining $\widehat{T}_1,\ldots, \widehat{T}_k$. 
For each $1\leq i\leq k$, $C_i=\{u^{(i,1)}_1,\ldots,u_k^{(i,1)}\}$ and $S_i=\{z\in V(H^{(i,1)})\colon N_{H^{(i,1)}}(z)\cap C_i=2\}$. 
That is, for each $1\leq j\leq k$ and $j\neq i$,  $s_{u_i^{(i,1)}u_j^{(i,1)}}\in S_i$ (the subdivision vertex on the edge $u_i^{(i,1)}u_j^{(i,1)}$ is in $S_i$), 
and $\vert S_i\vert =k-1$. In other words, $C_i$ contains the copies of the vertices of the clique $C$ in $H^{(i,1)}$ and $S_i$ contains subdivision vertices corresponding to $k-1$ edges in the clique incident on the $i$th colored vertex of the clique, such that $H[C_i\cup S_i]$ is a tree. Now, define $\widehat{T}_i=H[C_i\cup S_i]$. 
It is easy to verify that $\widehat{c}(C_i\cup S_i)=\{1,\ldots,k+1\}$ and hence $C_i\cup S_i=V(\widehat{T}_i)$ 
is a solution to the \textsc{CCS} instance $(H,\widehat{c},2k)$. 
Let $T_s=H[Q_s]$ and $T_t=H[Q_t]$. Note that $T_s$ and $T_t$ are trees on $2k-1$ vertices each. 

\medskip
\noindent{\bf Case 1:  Reconfiguration from $Q_s$ to $V(\widehat{T}_1)$.} 
Informally, we move to $\widehat{T}_1$ by adding a token on $u_i^{(1,1)}$ and then removing tokens from $v_i$ for $i$ in the order  
$2,\ldots,k,1$ (for a total of $2k$ token additions/removals). Finally, we move the tokens from $\{w_2,\ldots,w_{k-1}\}$ to $S_1$ in $2(k-1)$ steps.  
The length of the reconfiguration sequence is $2k+2(k-1)=4k-2$. 

Formally, we define $Z_0=Q_s$ and for each $1\leq j\leq k-1$, $Z_{2j-1}=Z_{2j-2}\cup \{u_{j+1}^{(1,1)}\}$ and $Z_{2j}=Z_{2j-1}\setminus \{v_{j+1}\}$. That is, for each $1\leq j\leq k-1$, 
\begin{eqnarray*}
Z_{2j-1}&=&\{u_2^{(1,1)},\ldots,u^{(1,1)}_{j+1}\}\cup \{v_{j+1}\ldots,v_k,v_1\}\cup\{w_1,\ldots,w_{k-1}\}, \mbox{ and }\\
Z_{2j}&=&\{u_2^{(1,1)},\ldots,u^{(1,1)}_{j+1}\}\cup \{v_{j+2}\ldots,v_k,v_1\}\cup\{w_1,\ldots,w_{k-1}\}.
\end{eqnarray*}
Next, we define $Z_{2k-1}$ and $Z_{2k}$ as 
\begin{eqnarray*}
Z_{2k-1}&=&\{u_2^{(1,1)},\ldots,u^{(1,1)}_k,u_{1}^{(1,1)}\}\cup \{v_1\}\cup\{w_1,\ldots,w_{k-1}\}, \mbox{ and }\\
Z_{2k}&=&\{u_2^{(1,1)},\ldots,u^{(1,1)}_k,u_{1}^{(1,1)}\}\cup\{w_1,\ldots,w_{k-1}\}.
\end{eqnarray*}

It is easy to verify that $Z_1,\ldots Z_{2k}$ are solutions to the \textsc{CCS} instance $(H,\widehat{c},2k)$. 
Thus, we now have a reconfiguration sequence $Z_0,Z_1,\ldots,Z_{2k}$, where $Z_0=Q_s$. 

Next, we explain how to get a reconfiguration sequence from $Z_{2k}$ to $V(\widehat{T}_1)$. 
Recall that $Z_{2k}=C_1\cup \{w_1,\ldots,w_{k-1}\}$ and $V(\widehat{T}_1)=C_1\cup S_1$. 
Let $s_j=s_{u_1^{(1,1)}u_j^{(1,1)}}$, for all $2\leq j\leq k$. 
Notice that $S_1=\{s_2,\ldots,s_k\}$. To obtain a reconfiguration sequence from $Z_{2k}$ to $V(\widehat{T}_1)$, we add $s_j$ 
and then remove $w_j$ for $j$ in the order $2,\ldots,k$. 
Since $w_j$ and $s_j$ connect the same two vertices from $C_1$, this reconfiguration sequence will maintain connectivity. 
Moreover, it is easy to verify that each set in the reconfiguration sequence 
uses all the colors $\{1,\ldots,k\}$. Therefore, there exists a reconfiguration sequence of length $4k-2$ from $Q_s$ to $V(\widehat{T}_1)$. 

\medskip
\noindent{\bf Case 2:  Reconfiguration from $V(\widehat{T}_i)$ to $V(\widehat{T}_{i+1})$.} First we define $20k$ trees $P_1,\ldots P_{20k}$, each on $2k-1$ vertices 
such that for all $1\leq r\leq 20k$, $(i)$ $V(P_r)\subseteq V(H^{(i,r)})$, and $(ii)$ $\widehat{T}_i=P_1$. 
Then we give a reconfiguration sequence from $V(P_r)$ to $V(P_{r+1})$ for all $r\in [20k-1]$ and a reconfiguration sequence from $V(P_{20k})$ to $V(\widehat{T}_{i+1})$.  

Recall that $C=\{u_1,\ldots,u_k\}$ is a $k$-colored clique in $G$ such that $c(u_i)=i$ for all $1\leq i\leq k$. 
For each $1\leq r\leq 20k$, let $C_i^r=\{u^{(i,r)}_1,\ldots,u_k^{(i,r)}\}$ and $S_i^r=\{z\in V(H^{(i,r)})\colon N_{H^{(i,r)}}(z)\cap C_i^r=2\}$. 
That is, for each $1\leq j\leq k$ and $j\neq i$,  $s_{u_i^{(i,r)}u_j^{(i,r)}}\in S_i^r$ 
(i.e, the subdivision vertex on the edge $u_i^{(i,r)}u_j^{(i,r)}$ is in $S_i^r$) and $\vert S_i^r\vert =k-1$. 
Let $P_r=H[C_i^r\cup S^r_i]$. 
Notice that  for all $r\in [20k]$, $P_r$
is a tree on $2k-1$ vertices.   
Moreover, for each $1\leq r\leq 20k$, $V(P_r)$ is a solution to the \textsc{CCS} instance $(H,\widehat{c},2k)$.

\medskip
\noindent{\bf Case 2(a):  Reconfiguration from $V(P_r)$ to $V(P_{r+1})$.} 
By arguments similar to those given for Case $1$, one can prove that  there is a reconfiguration sequence of length  
$4k-2$ from $V(P_r)$ to $V(P_{r+1})$, for all $1\leq r<20k$. For completeness we give the details here.  
Fix an integer $1\leq r<20k$. Let $s_j=s_{u_i^{i,r}u_j^{(i,r)}}$
and $s'_j=s_{u_i^{(i,r+1)}u_j^{(i,r+1)}}$ for all $j\in \{1,\ldots,k\}\setminus \{i\}$. 
Notice that $S_i^r=\{s_j \colon j\in \{1,\ldots,k\}\setminus \{i\}\}$ and $S_i^{r+1}=\{s'_j \colon j\in \{1,\ldots,k\}\setminus \{i\}\}$. 
Now we define $Z_0=V(P_r)=C_i^r\cup S_i^r$ and for each $1\leq j\leq i-1$, $Z_{2j-1}=Z_{2j-2}\cup \{u_j^{(i,r+1)}\}$ and $Z_{2j}=Z_{2j-1}\setminus \{u_j^{(1,r)}\}$. That is, for each $1\leq j\leq i-1$, 
\begin{eqnarray*}
Z_{2j-1}&=&\{u_1^{(i,r+1)},\ldots,u^{(i,r+1)}_j\}\cup \{u_j^{(i,r)}\ldots,u_k^{(i,r)}\}\cup S_i^r, \mbox{ and }\\
Z_{2j}&=&\{u_1^{(1,1)},\ldots,u^{(1,1)}_j\}\cup \{u_{j+1}^{(i,r)}\ldots,u_k^{(i,r)}\}\cup S_i^r.
\end{eqnarray*}

For each $i\leq j\leq k-1$, $Z_{2j-1}=Z_{2j-2}\cup \{u_{j+1}^{(i,r+1)}\}$ and $Z_{2j}=Z_{2j-1}\setminus \{u_{j+1}^{(1,r)}\}$. That is, for each $i\leq j\leq k-1$, 
\begin{eqnarray*}
Z_{2j-1}&=&\{u_1^{(i,r+1)},\ldots,u^{(i,r+1)}_{i-1},u^{(i,r+1)}_{i+1},\ldots, u^{(i,r+1)}_{j+1}\}\cup \{u_{j+1}^{(i,r)}\ldots,u_k^{(i,r)},u_i^{(i,r)}\}\cup S_i^r, \mbox{ and }\\
Z_{2j}&=&\{u_1^{(i,r+1)},\ldots,u^{(i,r+1)}_{i-1},u^{(i,r+1)}_{i+1},\ldots, u^{(i,r+1)}_{j+1}\}\cup \{u_{j+2}^{(i,r)}\ldots,u_k^{(i,r)},u_i^{(i,r)}\}\cup S_i^r.
\end{eqnarray*}
Next, we define $Z_{2k-1}$ and $Z_{2k}$ as 
\begin{eqnarray*}
Z_{2k-1}&=&\{u_1^{(i,r+1)},\ldots,u^{(i,r+1)}_{k}\} \cup \{u_i^{(i,r)}\}\cup S_i^r, \mbox{ and }\\
Z_{2k}&=&\{u_1^{(i,r+1)},\ldots,u^{(i,r+1)}_{k}\} \cup S_i^r.
\end{eqnarray*}

Next, for each $1\leq j\leq k-1$, let $Z_{2k+2j-1}=Z_{2k+2j-2}\cup \{s_j'\}$ and $Z_{2k+2j}=Z_{2k+2j-1}\setminus \{s_j\}$. 
It is easy to verify that $Z_1,\ldots Z_{4k-2}$ are solutions to the \textsc{CCS} instance $(H,\widehat{c},2k)$ 
and $Z_0,\ldots,Z_{4k-2}$ is a reconfiguration sequence  where $Z_0=V(P_r)$ and $Z_{4k-2}=V(P_{r+1})$. 

\medskip
\noindent{\bf Case 2(b):  Reconfiguration from $V(P_{20k})$ to $V(\widehat{T}_{i+1})$.} 
Next, we explain how to get a reconfiguration sequence from $V(P_{20k})$ to $V(\widehat{T}_{i+1})$ using Lemma~\ref{lem:reconf-tree}. 
Recall that $C_i^{20k}=\{u^{(i,20k)}_1,\ldots,u_k^{(i,20k)}\}$ and $S_i^{20k}=\{z\in V(H^{(i,20k)})\colon N_{H^{(i,20k)}}(z)\cap C_i^{20k}=2\}$. 
Let $C_{i+1}=\{u^{(i+1,1}_1,\ldots,u_k^{(i+1,1)}\}$ and $S_{i+1}=\{z\in V(H^{(i+1,1)})\colon N_{H^{(i+1,1)}}(z)\cap C_{i+1}=2\}$. 
For ease of presentation, let $s_j=s_{u_i^{(i,20k)}u_{j}^{(i,20k)}}$ for all $j\in \{1,\ldots,k\}\setminus \{i\}$. 
Also, let $s_j'=s_{u_i^{(i+1,1)}u_{j}^{(i+1,1)}}$ for all $j\in \{1,\ldots,k\}\setminus \{i+1\}$. That is, $S_i^{20k}=\{s_j \colon j\in \{1,\ldots,k\}\setminus \{i\}\}$ and 
$S_{i+1}=\{s'_j \colon j\in \{1,\ldots,k\}\setminus \{i+1\}\}$. Notice that $V(P_{20k})=C_i^{20k}\cup S_{i}^{20k}$ and  $V(\widehat{T}_{i+1})=C_{i+1}\cup S_{i+1}$. 

Towards proving the required reconfiguration sequence, we give a reconfiguration sequence from $C_i^{20k}\cup S_{i}^{20k}$ 
to $C_{i+1}\cup S_{i}^{20k}$  and then from $C_{i+1}\cup S_{i}^{20k}$ to $C_{i+1}\cup S_{i+1}$. 
The reconfiguration sequence from  $C_i^{20k}\cup S_{i}^{20k}$ 
to $C_{i+1}\cup S_{i}^{20k}$ is similar to the one in Case 1. That is, we add $u_j^{(i+1,1)}$ and delete $u_j^{(i,20k)}$ 
for $j$ in the order $1,\ldots,i-1,i+1,\ldots,k,i$. This gives a reconfiguration sequence from $C_i^{20k}\cup S_{i}^{20k}$ to $Z=C_{i+1}\cup S_{i}^{20k}$ of length $2k$.

Next we explain how to get a reconfiguration sequence from $Z=C_{i+1}\cup S_{i}^{20k}$ to $C_{i+1}\cup S_{i+1}$. 
Notice that $H[Z]$ and $\widehat{T}_{i+1}=H[C_{i+1}\cup S_{i+1}]$ are trees.   
Recall that $T^i$ is the star on $\{1,\ldots,k\}$ with vertex $i$ being the center, and $T^{i+1}$ is 
is the star on $\{1,\ldots,k\}$ with vertex $i$ being the center. 
Also, $c_j$ is a coloring on $H^j$ which is inherited from the coloring $c$ of $G$. That is, 
$c_{i+1}(u_j^{(i+1,1)})=j$ for all $1\leq j\leq k$.
Then, $H[Z]=K[C_{i+1}]\upharpoonright_{c_{i+1}} T^i$ and $\widehat{T}_{i+1}=H[C_{i+1}\cup S_{i+1}]=K[C_{i+1}]\upharpoonright_{c_{i+1}} T^{i+1}$.

Let $e_1^{i+1},\ldots,e_{k-1}^{i+1}$ be an arbitrary ordering of the the edges in $T^{i+1}$.
By Lemma~\ref{lem:reconf-tree},  we  have a sequence $e_1^i,\ldots,e_{k-1}^i$ of edges in $T^i$ such that 
for the sequence $T^i_0,T^i_1,\ldots,T^i_{k-1}$ on  vertex set $\{1,\ldots, k\}$,  where  
for each $0\leq j\leq k-2$, $T^i_{j+1}=T_j^i+e_j^{i+1}-e_j^i$ and $T^i_0=T^i$, the following holds.
\begin{itemize}
\item[(i)] $T_{j}^i$ is a tree for all $0\leq j\leq k-1$, and 
\item[(ii)] $T^i_{k-1}=T^{i+1}$. 
\end{itemize}

This implies that, from the sequences $e_{1}^i,\ldots,e_{k-1}^i$ and $e_1^{i+1},\ldots,e_{k-1}^{i+1}$, we get a sequence $f_1,\ldots,f_{k-1}'$ on $S_i^{20k}$ and a sequence $f'_1,\ldots,f'_{k-1}$ on $S_{i+1}$ such that 
the for the sequence $L_0,\ldots, L_{2(k-1)}$, where $L_0=C_{i+1}\cup \{f_1,\ldots,f_{k-1}\}$ and  for all $1\leq j\leq k-1$
$L_{2j-1}=(L_{2j-2} \cup \{f'_i\})$, $L_{2j}=L_{2j-1}\setminus \{f_i\}$ 
the following holds. 

\begin{itemize}
\item[(1)] $H[L_{i}]$ is connected for all $0\leq i\leq k-1$,  and 
\item[(2)] $L_{k-1}=S_{i+1}\cup C_{i+1}$.
\end{itemize}

Here, conditions $(1)$ and $(2)$ follow from conditions $(i)$ and $(ii)$, respectively. 
Moreover, $\widehat{c}(L_i)=[k+1]$ for all $0\leq i\leq 2(k-1)$ and $L_0=Z$. 
Thus, $L_0,\ldots, L_{2(k-1)}$ is a valid reconfiguration sequence from $Z$ to $V(\widehat{T}_{i+1})$. 
Note that the ordering on the edges implies an ordering by which we can move the subdivision vertices from 
$S_{i}$ to $S_{i+1}$ without violating connectivity. 
This implies that there is a reconfiguration sequence from $V(P_{20k})$ to $V(\widehat{T}_{i+1})$, of length $4k-2$. 
Therefore, we have a reconfiguration sequence from $V(\widehat{T}_i)$ to $V(\widehat{T}_{i+1})$ of length $\Oof(k^2)$. 

\medskip
\noindent{\bf Case 3:  Reconfiguration from $V(\widehat{T}_k)$ to $V({T}_t)$.} 
The arguments for this case are similar to those given in Case 1, we therefore omit the details.  
By summing up the lengths of reconfiguration sequences, we get that if $(G,c,k)$ is a \yes-instance of  
\textsc{Multicolored Clique} then there is a reconfiguration sequence from $Q_s$ to $Q_t$, of length $\Oof(k^3)$. 
\end{proof}

\subsection*{Proof of Lemma~\ref{lem:backward}}
\begin{proof}
For each $1\leq i\leq k+1$, let $Q_i$ be the set of vertices colored by the color $i$. 
That is, $Q_i=\widehat{c}^{-1}(i)$. First, we prove some auxiliary claims. 
The proofs of the following two claims follow from the construction of $H$ and the definition of $\widehat{c}$. 

\begin{claim}
\label{clm:Qindp}
(i) $Q_1\cup \ldots \cup Q_k$ is an independent set in $H$, and $(ii)$ every  vertex in $Q_{k+1}$ is connected to vertices of at most two distinct colors. 
\end{claim}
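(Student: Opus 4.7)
The plan is to prove both parts by a direct case analysis on the construction of $H$ and the definition of $\widehat{c}$. The one structural fact I will rely on throughout is that color $k+1$ is assigned precisely to the ``subdivision vertices'' introduced during the construction: the vertices $w_2, \ldots, w_k$ inside $H^0$, the vertices $y_1, \ldots, y_{k-1}$ inside $H^{k+1}$, and $W(H^i) = \bigcup_{r} W(H^{(i,r)})$ for every $1 \leq i \leq k$. Each such vertex was created to subdivide a single edge $\{a,b\}$ of some underlying graph (either $G$ restricted via some $T^i$, or one of the two boundary stars), and thus carries a canonical pair of colors $\{\widehat{c}(a), \widehat{c}(b)\}$.

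For part (i), I will argue that every edge of $H$ has at least one endpoint of color $k+1$, from which independence of $Q_1 \cup \cdots \cup Q_k$ is immediate. To this end, I will enumerate the edge types of $H$: edges inside $H^0$ and $H^{k+1}$ (both of which are subdivided stars, so every edge is incident to a $w_i$ or $y_j$); edges inside some $H^{(i,r)} = G\upharpoonright_c T^i$ (all retained original edges have been subdivided, so every edge of $H^{(i,r)}$ is incident to some $s_{uv} \in W(H^{(i,r)})$); the crossing edges between $H^{(i,r)}$ and $H^{(i,r+1)}$ and between $H^{(i,20k)}$ and $H^{(i+1,1)}$ (both defined to go from a subdivision vertex in one layer to a non-subdivision copy in the next); and finally the edges from $w_i$ into $H^{(1,1)}$ and from $W(H^{(k,20k)})$ into $\{x_1, \ldots, x_k\}$ (each incident to a color-$(k+1)$ endpoint by definition). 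This exhausts the edge set.

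For part (ii), I will classify each vertex of $Q_{k+1}$ and list its neighbors. The vertex $w_i$ has neighbors $v_1, v_i$ in $H^0$ together with the vertices of $H^{(1,1)}$ colored $1$ or $i$, so $w_i$ sees only the colors $\{1,i\}$. The vertex $y_j$ has only the neighbors $x_k$ and $x_j$, so it sees only colors $\{k,j\}$. For a subdivision vertex $s_{uv} \in W(H^{(i,r)})$ with $\{\widehat{c}(u), \widehat{c}(v)\} = \{i,j\}$, its in-layer neighbors in $H^{(i,r)}$ are exactly $u$ and $v$, and any crossing-edge neighbors in $H^{(i,r+1)}$, $H^{(i+1,1)}$, or in $\{x_i, x_j\}$ (the last occurring only when $i = k$ and $r = 20k$) are, by construction, copies of $u$ or $v$ and therefore have color $i$ or $j$. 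So $s_{uv}$ sees only two colors. The proof is essentially routine bookkeeping from the definitions; the only step that needs genuine care is the boundary behaviour (the layers $r = 20k$ and $i = k$, and the attachments of $H^0$ and $H^{k+1}$), since these crossing rules are specified separately from the generic rule between consecutive layers—but inspection shows that all of them are defined to preserve the color pair, so no new colors ever appear in a neighbourhood.
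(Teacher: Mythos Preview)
Your proposal is correct and is precisely the kind of argument the paper has in mind: the paper itself offers no detailed proof, simply asserting that the claim ``follows from the construction of $H$ and the definition of $\widehat{c}$,'' and your case analysis spells out exactly that construction-based verification. Your observation that every edge of $H$ is incident to a color-$(k+1)$ vertex is the cleanest way to establish part~(i), and your per-type inspection of subdivision vertices for part~(ii) is accurate, including the boundary cases.
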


\begin{claim}
\label{clm:samecopyvert}
Let $v,w\in V(H)\setminus (V(H^0)\cup V(H^{k+1}))$ be two distinct vertices such that $\widehat{c}(v)=\widehat{c}(w)$ and $\widehat{c}(v)\in \{1,\ldots,k\}$. 
If $v$ and $w$ have a common neighbor in $V(H)\setminus V(H^0)$, 
then $v$ and $w$ are copies of same vertex $z\in V(G)$. 
\end{claim}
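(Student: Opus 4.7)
The plan is to combine Claim~\ref{clm:Qindp}(i) with the restricted local structure of subdivision vertices in the construction of $H$ to force $v$ and $w$ to share a common source in $V(G)$.

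First, observe that every subdivision vertex in $H$ receives color $k+1$ under $\widehat{c}$. Combined with the hypothesis $\widehat{c}(v) = \widehat{c}(w) \in \{1,\ldots,k\}$ and $v, w \notin V(H^0) \cup V(H^{k+1})$, this forces $v$ and $w$ to be non-subdivision vertices of some $H^{(i_v, r_v)}$ and $H^{(i_w, r_w)}$, hence each is a copy in $H$ of a vertex of $G$. It now suffices to show that these two source vertices in $V(G)$ coincide.

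The main case analysis consists in classifying the common neighbor $x \in V(H) \setminus V(H^0)$. I will argue that $x$ must be a subdivision vertex of some $W(H^{(i,r)})$ with $1 \le i \le k$. Claim~\ref{clm:Qindp}(i) immediately rules out $x$ being a non-subdivision vertex of any $H^{(i,r)}$, since then $v$ and $x$ would both lie in $Q_1 \cup \ldots \cup Q_k$ and be adjacent, contradicting independence. The remaining case to exclude is $x \in V(H^{k+1})$: if $x \in \{x_1, \ldots, x_k\}$, its only external neighbors (by construction) lie in $W(H^{(k,20k)})$ and hence have color $k+1 \ne \widehat{c}(v)$; if $x \in \{y_1, \ldots, y_{k-1}\}$, its entire neighborhood lies inside $V(H^{k+1})$, contradicting $v \notin V(H^{k+1})$.

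Having located $x = s_{ab} \in W(H^{(i,r)})$, the last step exploits the very restricted neighborhood of such a vertex. Since $T^i$ is a star centered at $i$, the retained edges of $G \upharpoonright_c T^i$ satisfy $\{c(a), c(b)\} = \{i, j\}$ for some $j \ne i$. The non-subdivision neighbors of $x$ outside $V(H^0) \cup V(H^{k+1})$ are precisely the copies of $a$ and of $b$ in $H^{(i,r)}$ together with (depending on whether $r < 20k$ or $r=20k$ and $i<k$) the copies of $a, b$ in $H^{(i,r+1)}$ or in $H^{(i+1,1)}$; they partition into two color classes, one consisting of copies of $a$ (color $i$) and the other of copies of $b$ (color $j$), with $i \ne j$ since $c$ is a proper coloring of $G$ and $ab \in E(G)$. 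Therefore $v$ and $w$, being same-colored neighbors of $x$, must both be copies of the same $G$-vertex (either $a$ or $b$), as required.

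The main obstacle will be the bookkeeping in the case analysis for $x$: faithfully enumerating which types of edges cross between $H^0, H^1, \ldots, H^k, H^{k+1}$ so as to justify each subcase, particularly the $H^{k+1}$ case. Once that inspection is carried out, the rest is an immediate consequence of Claim~\ref{clm:Qindp}(i) and properness of $c$.
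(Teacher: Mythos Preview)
Your argument is correct and is precisely the intended verification: the paper does not spell out a proof of this claim but simply records that it ``follows from the construction of $H$ and the definition of $\widehat{c}$'', and your case analysis on the common neighbor $x$ (ruling out non-subdivision vertices via Claim~\ref{clm:Qindp}(i), ruling out $V(H^{k+1})$ by inspecting the cross edges, and then reading off from the neighborhood of a subdivision vertex $s_{ab}$ that same-colored neighbors must be copies of the same $G$-vertex) is exactly what that sentence unpacks to.
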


\begin{claim}
\label{clm:small}
Let $Y\subseteq V(H)$ be a vertex subset such that $\widehat{c}(Y)=\{1,\ldots,k+1\}$ and $H[Y]$ is connected. Then, $\vert Y\vert \geq 2k-1$.  
\end{claim}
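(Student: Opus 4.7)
The plan is to exploit Claim~\ref{clm:Qindp} by showing that connecting $k$ pairwise non-adjacent color classes through bridging vertices of color $k+1$, each of which can join at most two of those classes, forces at least $k-1$ bridges to be present. Write $Y_i := Y \cap Q_i$ for $i=1,\ldots,k+1$ and set $A := Y_1 \cup \cdots \cup Y_k$ and $B := Y_{k+1}$. Since $\widehat{c}(Y) = \{1,\ldots,k+1\}$, each $Y_i$ is non-empty; in particular $|A| \geq k$ and $|B| \geq 1$. Claim~\ref{clm:Qindp}(i) gives that $A$ is independent in $H$. I also need that $B$ is independent in $H$: this would be verified by a case analysis on the construction, since every color-$(k+1)$ vertex of $H$ is either a subdivision vertex inside some $H^{(i,r)}$, a $w_i$ from $H^0$, or a $y_j$ from $H^{k+1}$, and inspecting the edges within each piece as well as the crossing edges (within each $H^i$, between $H^i$ and $H^{i+1}$, and between $H^0,H^1$ and $H^k,H^{k+1}$) shows that color-$(k+1)$ vertices are adjacent only to non-subdivision vertices. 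Hence any spanning tree $T$ of $H[Y]$ is a bipartite tree with sides $A$ and $B$.

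The core of the argument is an auxiliary multigraph $M$ on vertex set $\{1,\ldots,k\}$ defined as follows. For each $b \in B$, let $C_T(b) \subseteq \{1,\ldots,k\}$ be the set of colors appearing on the $T$-neighbors of $b$. By Claim~\ref{clm:Qindp}(ii), $|C_T(b)| \leq 2$; whenever $C_T(b) = \{i,j\}$ with $i \neq j$ I add a single edge $\{i,j\}$ to $M$, and otherwise $b$ contributes no edge. Since each $b$ contributes at most one edge, $|E(M)| \leq |B|$.

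Next I will prove $M$ is connected by contradiction. Suppose $\{1,\ldots,k\} = I \sqcup J$ is a non-trivial partition with no $M$-edge between $I$ and $J$. Then every $b \in B$ has all of its $T$-neighbors on a single side, inducing a partition $B = B_I \sqcup B_J$. The sets $Y_I := \bigcup_{i \in I} Y_i \cup B_I$ and $Y_J := \bigcup_{j \in J} Y_j \cup B_J$ then partition $V(T)$. Because $T$-edges run only between $A$ and $B$, and each vertex in $B_I$ (resp.\ $B_J$) has its $T$-neighbors inside $\bigcup_{i \in I} Y_i$ (resp.\ $\bigcup_{j \in J} Y_j$), no edge of $T$ crosses between $Y_I$ and $Y_J$, contradicting the connectivity of $T$.

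Finally, a connected graph on $k$ vertices has at least $k-1$ edges, so $|B| \geq |E(M)| \geq k-1$. Combined with $|A| \geq k$, this yields $|Y| = |A| + |B| \geq 2k-1$, as required. The only step requiring genuine attention is the independence of $Q_{k+1}$: this is the one point where the concrete structure of the construction enters, and it amounts to a bookkeeping check over all edge types; everything else is a clean combinatorial use of Claim~\ref{clm:Qindp}.
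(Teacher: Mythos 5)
Your proof is correct and takes essentially the same approach as the paper's, which states the same two facts from Claim~\ref{clm:Qindp} and then simply asserts ``since $H[Y]$ is connected, the claim follows.'' You have filled in the details the paper leaves implicit — in particular, the observation that $Q_{k+1}$ is itself independent and the auxiliary multigraph/spanning-tree bookkeeping that turns the ``each color-$(k+1)$ vertex bridges at most two color classes'' observation into the bound $|Y_{k+1}|\geq k-1$.
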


\begin{proof}\renewcommand{\qedsymbol}{\ensuremath{\lrcorner}}
Let $B=Y\setminus \widehat{c}^{-1}(k+1)=Y\cap (Q_1\cup \ldots \cup Q_k)$.   
Since $\widehat{c}(Y)=\{1,\ldots,k+1\}$, $\vert B\vert \geq k$ and by Claim~\ref{clm:Qindp}$(i)$, $B$ is an independent set in $H$. By Claim~\ref{clm:Qindp}$(ii)$, each vertex in $Q_{i+1}$ is connected to vertices of at most two distinct colors. Thus, since $H[Y]$ is connected, the claim follows.  
\end{proof}

Suppose $(H,\widehat{c},Q_s,Q_t,2k)$ is a \yes-instance of  
\textsc{CCS-R}. Then, there is a reconfiguration sequence $D_1,\ldots,D_{\ell}$ for $\ell\in {\mathbb N}$, where 
$D_1=Q_s$ and $D_{\ell}=Q_t$. 
Without loss of generality, we assume that the sequence $D_1,\ldots, D_{\ell}$ is a minimal reconfiguration sequence. 
Then, by Claim~\ref{clm:small}, for each $i\in [\ell]$, $2k-1\leq \vert D_i\vert\leq 2k$. 

Moreover, since $\vert D_1\vert =\vert D_{\ell}\vert=2k-1$, we have that for each even $i$, $D_i$ is obtained from $D_{i-1}$ by a token addition, and for each odd $i$, $D_i$ is obtained from $D_{i-1}$ by a token removal. This also implies that for each even $i$, $\vert D_i\vert =2k$, for each odd $i$, $\vert D_i\vert=2k-1$, and $\ell$ is odd.

\begin{claim}
\label{clm:allcolorsol}
Let $i\in [\ell]$ and $\vert D_i\vert =2k-1$. Then, for all $1\leq j\leq k$, $\vert D_i\cap Q_j\vert=1$, and $\vert D_i\cap Q_{k+1}\vert=k-1$. Moreover, each vertex in  $D_i\cap Q_{k+1}$ will be adjacent to exactly two vertices in $H[D_i]$  and these vertices will be of different colors from $\{1,\ldots,k\}$. 
\end{claim}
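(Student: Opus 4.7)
The plan is to reduce the claim to a clean double-counting argument on the bipartite structure of $H[D_i]$. Let $a_j \coloneqq |D_i \cap Q_j|$ for $j \in [k+1]$. Since $D_i$ is a CCS-solution, it must hit every color class, so $a_j \geq 1$ for all $j$, and together with $\sum_j a_j = 2k-1$ this already forces $a_{k+1} \leq k-1$. Hence the first part of the claim reduces to establishing the matching lower bound $a_{k+1} \geq k-1$; once this holds, $a_j = 1$ for every $j \in [k]$ is automatic.

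First I will verify, by a brief inspection of the construction of $H$, that $Q_{k+1}$ is itself an independent set: every edge incident on a subdivision vertex has its other endpoint among the non-subdivision vertices of some $H^{(i,r)}$, $H^0$, or $H^{k+1}$. Combined with Claim~\ref{clm:Qindp}(i), this shows that $H[D_i]$ is bipartite between $A \coloneqq D_i \cap (Q_1 \cup \dots \cup Q_k)$ and $B \coloneqq D_i \cap Q_{k+1}$, with $|B| = a_{k+1}$, and it upgrades Claim~\ref{clm:Qindp}(ii) to the statement that each $b \in B$ has neighbors of at most two distinct colors, both lying in $\{1,\ldots,k\}$.

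Next I will introduce a color-quotient multigraph $M'$ on vertex set $[k] \cup B$, joining each $b \in B$ to every color $c \in [k]$ that appears among the $A$-neighbors of $b$. The previous paragraph yields $\deg_{M'}(b) \leq 2$, so $|E(M')| \leq 2 a_{k+1}$. On the other hand, contracting each color class $D_i \cap Q_j$ (for $j \in [k]$) into a single vertex preserves connectivity, and $H[D_i]$ is connected by assumption, so $M'$ is connected and therefore $|E(M')| \geq |V(M')| - 1 = k + a_{k+1} - 1$. Combining the two bounds yields $a_{k+1} \geq k-1$, which together with the upper bound forces $a_{k+1} = k-1$ and $a_j = 1$ for all $j \in [k]$.

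For the moreover part, equality in both edge-count inequalities forces $M'$ to be a tree in which every $b \in B$ has degree exactly $2$; that is, $b$ has neighbors in $A$ of exactly two distinct colors in $\{1,\ldots,k\}$. Since each such color class now contains exactly one vertex, $b$ has exactly one neighbor of each of those colors and therefore exactly two neighbors in $H[D_i]$, of different colors, as required. The step I expect to be the main obstacle is the very first one: Claim~\ref{clm:Qindp}(ii) in isolation would allow one of the two colors seen by $b$ to be $k+1$, and the independence of $Q_{k+1}$ (which gives the bipartite structure and rules this out) has to be read off from the construction of $H$ rather than from the previously established claims. Once this is pinned down, the rest is a clean edge-counting argument.
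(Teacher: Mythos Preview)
Your proof is correct and follows essentially the same counting argument as the paper, just made explicit via the quotient multigraph $M'$ rather than left as the paper's one-line assertion that ``we need at least $k-1$ vertices from $Q_{k+1}$ that make the connections between the vertices of $D_i$ colored with $\{1,\ldots,k\}$.'' Your explicit verification that $Q_{k+1}$ is itself independent is a genuine clarification that the paper's terse proof glosses over; it is indeed needed for the bipartite structure (and hence for the edge-count to go through), and it follows from the construction exactly as you say.
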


\begin{proof}\renewcommand{\qedsymbol}{\ensuremath{\lrcorner}}
By Claim~\ref{clm:Qindp},  $Q_1\cup \ldots \cup Q_{k}$ is independent  and
every vertex of $Q_{k+1}$ is adjacent to vertices of at most two different color classes. 
Hence, we need at least $k-1$ vertices from $Q_{k+1}$
that make the connections between the vertices of $D_i$ colored with $\{1,\ldots,k\}$. 
The above statement along with the assumption $\vert D_i\vert=2k-1$ imply the claim. 
\end{proof}

\begin{claim}\label{cl:copies}
Let $i\in \{2,\ldots\ell-1\}$. Let $v\in D_i$ and $w\in D_{i+1}$ such that $v,w\notin V(H^0)\cup V(H^{k+1})$, at most 
one vertex in $\{v,w\}$ is in $V(H^{(1,1)})$, and $\widehat{c}(v)=\widehat{c}(w)\in \{1,\ldots,k\}$. 
Then, $v$ and $w$ are copies of the same vertex in $G$. Moreover, 
$v,w\in V(H^j)\cup V(H^{j+1})$ for some $j\in [k-1]$. 
\end{claim}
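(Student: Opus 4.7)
The plan is to reduce to analyzing a single size-$2k$ configuration containing both $v$ and $w$, and then to use a short counting argument inside that configuration to identify a ``regular'' subdivision vertex whose local structure forces both conclusions. We may assume $v\neq w$, as otherwise the conclusion is trivial. Consecutive sets in the reconfiguration sequence differ by exactly one vertex and their sizes alternate between $2k-1$ and $2k$ (as noted after Claim~\ref{clm:allcolorsol}), so exactly one of $D_i,D_{i+1}$ has size $2k$; call it $D_{i^{*}}$. Applying Claim~\ref{clm:allcolorsol} to the adjacent size-$(2k-1)$ configuration and tracking the single token added or removed, we see that $D_{i^{*}}$ contains precisely the two color-$c$ vertices $v,w$, exactly one vertex of each other color in $\{1,\dots,k\}$, and exactly $k-1$ subdivision vertices of color $k+1$. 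Writing $B\coloneqq D_{i^{*}}\cap(Q_1\cup\dots\cup Q_k)$ and $S\coloneqq D_{i^{*}}\cap Q_{k+1}$, we have $|B|=k+1$ and $|S|=k-1$.

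By Claim~\ref{clm:Qindp}(i), the graph $H[D_{i^{*}}]$ is bipartite with sides $B$ and $S$, and since it is connected on $2k$ vertices it has at least $2k-1$ edges. Hence $\sum_{s\in S}d_s\geq 2k-1$, where $d_s\coloneqq |N_H(s)\cap D_{i^{*}}|$; because $|S|=k-1$, pigeonhole yields some $s^{*}\in S$ with $d_{s^{*}}\geq 3$.

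The key step is to identify the type of $s^{*}$. Call a subdivision vertex \emph{regular} if it belongs to $W(H^{(h,r)})$ for some $h\in[k]$ with $(h,r)\neq(k,20k)$; every such vertex has exactly four $H$-neighbors, namely two copies of each endpoint of the subdivided $G$-edge, lying in two adjacent layers of the routing gadget. We rule out the non-regular types. If $s^{*}=w_i\in V(H^0)$, then all its neighbors have color in $\{1,i\}$; the set $D_{i^{*}}$ has at most three vertices of these colors combined, and only when $c\in\{1,i\}$, in which case both color-$c$ vertices $v,w$ would need to be neighbors of $w_i$. But $w_i$'s color-$c$ neighbors lie in $\{v_c\}\cup\{\text{color-}c\text{ vertices of }V(H^{(1,1)})\}$, so combined with $v,w\notin V(H^0)$ this forces both $v$ and $w$ into $V(H^{(1,1)})$, contradicting the hypothesis that at most one of them lies there. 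If instead $s^{*}\in W(H^{(k,20k)})$, one similarly checks that any two same-colored $H$-neighbors of $s^{*}$ include a vertex of $V(H^{k+1})$; since $v,w\notin V(H^{k+1})$ and the only color-$c$ vertices of $D_{i^{*}}$ are $v,w$, this forces $d_{s^{*}}\leq 2$, a contradiction.

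Hence $s^{*}$ is regular. Since $d_{s^{*}}\geq 3$ and its four $H$-neighbors split into two colors (two copies each), two of these neighbors lie in $D_{i^{*}}$ and share a color; by the color distribution of $D_{i^{*}}$ this shared color must be $c$, and the two equi-colored vertices are therefore exactly $v$ and $w$. Consequently $\{v,w\}$ equals the pair of copies of the $G$-vertex of color $c$ among the two endpoints subdivided by $s^{*}$, taken in the two adjacent layers of the routing gadget. This proves both conclusions: $v$ and $w$ are copies of the same vertex of $G$, and these two adjacent layers together lie inside $V(H^h)\cup V(H^{h+1})$ in the cross-$H^h$ case, or entirely inside $V(H^h)$ in the within-$H^h$ case; taking $j=h$ when $h\leq k-1$ and $j=h-1$ when $h=k$ yields $j\in[k-1]$ with $v,w\in V(H^j)\cup V(H^{j+1})$. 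The main obstacle is the handling of the non-regular subdivision vertices, especially the $w_i\in V(H^0)$ vertices which have unbounded degree and block the naive counting; eliminating them requires the full strength of the hypotheses on $v$ and $w$.
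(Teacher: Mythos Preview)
Your argument is correct and takes a genuinely different route from the paper. The paper splits into cases according to $|D_i|$ and, in the key case $|D_i|=2k-1$, passes to $D_{i+2}$: using that both $H[D_i]$ and $H[D_{i+2}]$ are trees with the same set of subdivision vertices, it locates a subdivision vertex $b$ adjacent to $w$ in $H[D_{i+2}]$ and argues that $b$ must also be adjacent to $v$ in $H[D_i]$, whence Claim~\ref{clm:samecopyvert} yields that $v,w$ are copies of the same vertex. The ``moreover'' part is then left implicit in the paper.

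Your approach instead stays within the single size-$2k$ configuration $D_{i^{*}}$ and runs a pigeonhole count on the $k-1$ subdivision vertices to force one of them to have at least three neighbours in $D_{i^{*}}$; the structural analysis of this high-degree subdivision vertex then gives both conclusions directly. This has two advantages: you never need to look at $D_{i+2}$ or $D_{i-1}$, and you obtain the ``moreover'' clause explicitly rather than implicitly. The paper's approach is shorter because it offloads the case analysis to Claim~\ref{clm:samecopyvert}, whereas you essentially reprove a sharpened form of that claim inline. One small omission in your write-up: among the non-regular color-$(k+1)$ vertices you should also mention the $y_i\in V(H^{k+1})$, but these have degree $2$ in $H$ and are therefore ruled out immediately.
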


\begin{proof}\renewcommand{\qedsymbol}{\ensuremath{\lrcorner}}
Suppose $v$ and $w$ are not copies of the same vertex $z\in V(G)$. 
We know that $\vert D_i\vert=2k-1$ or $\vert D_i\vert=2k$. 

\medskip
\noindent
{\bf Case 1:  $\vert D_i\vert=2k-1$.} 
Since $D_i$ is a solution, 
$D_i$ induces a connected subgraph in $H$. 
By Claim~\ref{clm:allcolorsol}, $\vert D_i\cap Q_j\vert=1$ for all $j\in \{1,\ldots,k\}$ and $\vert D_i\cap Q_{k+1}\vert=k-1$. Also,  by  Claim~\ref{clm:Qindp}, $(i)$ $Q_1\cup \ldots \cup Q_k$ is an independent set in $H$, and  $(ii)$ every  vertex in $Q_{k+1}$ is connected to vertices of at most two distinct colors.
Statements $(i)$ and $(ii)$, and the fact that $\vert D_i\vert=2k-1$ imply that $(iii)$ $H[D_i]$ is a tree and each vertex in $D_i\cap Q_{k+1}$ is incident to exactly two vertices in $D_i$.
Since $\vert D_{i+1}\vert = \vert D_i\vert+1$, in reconfiguration step $i+1$, we add a vertex to obtain $D_{i+1}$. 
We know that $v\in D_i$. Since, for any color $q\in [k]$, there is exactly one vertex in $D_i$ of color $q$ 
(i.e., $\vert D_i\cap Q_q\vert=1$), we have that $D_{i+1}=D_{i} \cup \{w\}$. Moreover, in step $i+2$, the vertex removed 
from $D_{i+1}$ will be  from $\{v,w\}$ and that vertex will be $v$ (because of the minimality assumption of the length of the reconfiguration sequence). 
That is, $D_{i+2}=(D_{i}\cup \{w\})\setminus \{v\}$. 
Notice that $\vert D_i\vert=\vert D_{i+2}\vert=2k-1$. Let $b$ a vertex in $D_{i+2}$ which is adjacent to $w$ in 
$H[D_{i+2}]$. Since $Q_{k+1}\cap D_i=Q_{k+1}\cap D_{i+2}$ and $\vert D_i\vert=\vert D_{i+2}\vert=2k-1$, by Claim~\ref{clm:Qindp}, the neighbors of $b$ in $H[D_i]$ and $H[D_{i+2}]$ are of the same color. This implies that 
$b$ is adjacent to $v$ in $H[D_i]$.   
Thus, we proved that $\{b,w\},\{b,v\}\in E(H)$. 
If $b \in V(H^0)$, then $v,w\in V(H^{(1,1)})$ which is a contradiction to the assumption. Otherwise, by Claim~\ref{clm:samecopyvert}, we conclude that $v$ and $w$ are copies of same vertex.

\medskip
\noindent
{\bf Case 2:  $\vert D_i\vert=2k$. }
In this case $D_{i+1}$ is obtained by removing a vertex from $D_{i}$. Moreover, $i\geq 3$, because we have two vertices in $D_i$ from $V(H)\setminus D_1$. 
Since $\vert D_{i+1}\vert =2k-1$, because of Claim~\ref{clm:allcolorsol}, $D_{i+1}$ is obtained by removing the vertex $v$ from $D_i$. That is, $D_{i+1}=D_i\setminus \{v\}$ and $v,w\in D_{i}$. 
Then, again by Claim~\ref{clm:allcolorsol}, 
there is $v'\in \{v,w\}$ such that $D_{i-1}\uplus \{v'\}=D_{i}$. Let $w'=\{v,w\}\setminus \{v'\}$. Since $i\geq 3$, 
we now apply Case~1 with respect 
to $w'\in D_{i-1}$ and $v'\in D_i$ to complete the proof. 
%
\end{proof}



\begin{claim}
\label{claim:qinall}
For any index $j\in \{1,\ldots,k\}$ and color $q\in \{1,\ldots,k\}$, there exist an odd $i\in \{3,\ldots,\ell\}$ and $r\in \{5k,\ldots,15k\}$ such that $D_i$ contains a vertex of color $q$ from $V(H^{j,r})$. 
\end{claim}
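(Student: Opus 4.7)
The plan is to track the trajectory of the (by Claim~\ref{clm:allcolorsol}, unique) color-$q$ vertex $x^q_i \in D_i$ across odd indices $i$, and to show by a discrete intermediate-value argument that this trajectory must land in every copy $H^{(j,r)}$ with $r \in \{5k,\ldots,15k\}$. I would introduce a \emph{position function} $\sigma(v) = 0$ for $v \in V(H^0)$, $\sigma(v) = (j-1)\cdot 20k + r$ for $v \in V(H^{(j,r)})$, and $\sigma(v) = k\cdot 20k + 1$ for $v \in V(H^{k+1})$. Since $\sigma(x^q_1) = 0$ and $\sigma(x^q_\ell) = k\cdot 20k + 1$, the claim reduces to showing that every integer of the form $(j-1)\cdot 20k + r$ with $j \in [k]$ and $r \in \{5k,\ldots,15k\}$ is attained somewhere along $\sigma(x^q_1), \sigma(x^q_3), \ldots, \sigma(x^q_\ell)$.

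The first step I would carry out is to control \emph{interior} transitions. If $x^q_i \neq x^q_{i+2}$, then Claim~\ref{clm:allcolorsol} forces $D_{i+1} = D_i \cup \{x^q_{i+2}\}$ and $D_{i+2} = D_{i+1} \setminus \{x^q_i\}$, so both color-$q$ vertices coexist in $D_{i+1}$. Provided neither lies in $V(H^0) \cup V(H^{k+1})$ and not both lie in $V(H^{(1,1)})$, Claim~\ref{cl:copies} applies and yields a common subdivision neighbour $b$ together with the fact that the two vertices are copies of the same $G$-vertex. A short case analysis on the location of $b$---an ordinary subdivision in $W(H^{(j',r')})$ with $r' < 20k$ or an inter-subgraph connector in $W(H^{(j',20k)})$---shows that the two color-$q$ neighbours of $b$ always sit in consecutive copies, so $|\sigma(x^q_i) - \sigma(x^q_{i+2})| = 1$.

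The hard part will be bounding the ``boundary jumps'' that Claim~\ref{cl:copies} does not cover, namely transitions where $x^q_i$ or $x^q_{i+2}$ lies in $V(H^0)$ or $V(H^{k+1})$. Here the token budget does the work: since $D_{i+1}$ is connected with $|D_{i+1}| \leq 2k$, any two of its vertices are at $H$-distance at most $2k-1$. A direct inspection of the natural layered structure of $H$---in the sequence $V(H^0),\, W(H^0),\, V(H^{(1,1)}),\, W(H^{(1,1)}),\, V(H^{(1,2)}), \ldots, V(H^{k+1})$, every edge moves between consecutive half-layers---yields $d_H(v_q, u) \geq 2\sigma(u)$ for every color vertex $u$. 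Hence if $x^q_i = v_q$ and $x^q_{i+2} = w \notin V(H^0)$, then $2\sigma(w) \leq 2k-1$, giving $\sigma(w) \leq k-1$. A symmetric argument at the $V(H^{k+1})$ end shows that any boundary transition involving $V(H^{k+1})$ stays at position at least $k\cdot 20k - k + 2$.

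Finally, I would assemble the pieces. The walk $i \mapsto \sigma(x^q_i)$ makes only unit changes in the middle and jumps of size at most $k-1$ near the two endpoints, and every such boundary jump remains inside $[0, k-1] \cup [k\cdot 20k - k + 2, k\cdot 20k + 1]$. Consequently no transition ever crosses an integer value in the safe interval $[k,\, k\cdot 20k - k + 1]$ except via a unit step, so a standard crossing argument (the walk starts at $0$ and ends at $k\cdot 20k + 1$) forces every such value to be hit. Each target $(j-1)\cdot 20k + r$ with $j \in [k]$ and $r \in \{5k,\ldots,15k\}$ lies in $[5k,\, 20k^2 - 5k] \subseteq [k,\, k\cdot 20k - k + 1]$, so some odd $i$ (necessarily at least $3$, because $\sigma(x^q_1) = 0 < k$) satisfies $x^q_i \in V(H^{(j,r)})$, as required.
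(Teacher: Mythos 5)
Your argument is correct but takes a genuinely different route from the paper. The paper's proof is structural: it first observes that consecutive odd sets $D_i, D_{i+2}$ share a vertex (since $k\geq 2$), so $H[D_1\cup D_3\cup\cdots\cup D_\ell]$ is connected and contains $v_1$ and $x_1$; since each layer $V(H^{(j,r)})$ is a $(v_1,x_1)$-separator, some odd $D_i$ meets $V(H^{(j,10k)})$; then the bounded diameter of $D_i$ (a connected set on $2k-1$ vertices) together with the layered distance structure of $H$ confines all of $D_i$ to $\bigcup_{r=5k}^{15k} V(H^{(j,r)})$, so in particular the color-$q$ vertex of $D_i$ lies there. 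Your proof instead isolates the color-$q$ token and tracks its position $\sigma$ directly across odd indices, deriving a discrete intermediate-value conclusion. Both proofs use the same two geometric facts (the half-layer structure of $H$ and the $\leq 2k-1$ diameter bound on the $D_i$'s), but your proof concentrates on a single token rather than the whole set, avoiding the separator/connected-union step. One point to tighten: your refinement to exact unit steps on interior transitions is not a consequence of the \emph{statement} of Claim~\ref{cl:copies} (which only says $v,w$ are copies of the same $G$-vertex lying in $V(H^j)\cup V(H^{j+1})$); the common subdivision neighbour $b$ appears only inside that claim's proof. However this refinement is unnecessary: the diameter bound already forces $|\sigma(x^q_i)-\sigma(x^q_{i+2})|\leq k-1$ on \emph{every} transition, and since the target window $\{(j-1)\cdot 20k + r : r\in\{5k,\ldots,15k\}\}$ has width $10k+1 > k-1$, the first time the walk reaches or exceeds $(j-1)\cdot 20k+5k$ it must already sit inside that window. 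Replacing the unit-step/boundary-jump case split by this uniform bound makes your argument both shorter and self-contained, relying only on what is explicitly stated in Claims~\ref{clm:allcolorsol} and the $2k$ budget.
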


\begin{proof}\renewcommand{\qedsymbol}{\ensuremath{\lrcorner}}
Without loss of generality, assume that $k\geq 2$. 
Moreover, for any odd $i\in [\ell-2]$, there is a vertex common in $D_i$ 
and $D_{i+2}$ (since $k\geq 2$). This implies that $H[D_1\cup D_3\ldots D_{\ell}]$ is a connected subgraph of $H$. Notice that for any $j\in \{1,\ldots, k\}$ and $r\in [20k]$, $V(H^{(j,r)})$ is a $(v_1,x_1)$-separator in $H$. 
Therefore, since $H[D_1\cup D_3\ldots D_{\ell}]$ is connected and $v_1,x_1\in D_1\cup D_{\ell}$, 
$(i)$ for any $j\in [k]$ and $r\in [20k]$,  there is an odd $i\in [\ell]$ such that $D_i$ contains a vertex from $V(H^{(j,r)})$. Now fix an index $j\in \{1,\ldots,k\}$ and a color $q\in \{1,\ldots,k\}$. By statement $(i)$, there is an odd $i\in \{1,\ldots,\ell\}$ such that $D_i$ contains a vertex from $V(H^{(j,10k)})$. 
Since $H[D_i]$ is connected, $\vert D_i\vert =2k-1$, $D_i\cap V(H^{(j,10k)})\neq \emptyset$,
and any vertex in $V(H)\setminus \bigcup_{r=5k}^{15k}V(H^{(j,r)})$ is at distance more that $5k$ (by the construction of $H$), we have that  all the vertices in $D_i$ belong to $\bigcup_{r=5k}^{15k}V(H^{(j,r)})$. 
Moreover, by Claim~\ref{clm:allcolorsol}, $D_i$ contains a vertex colored $q$ and that will also be present in $\bigcup_{r=5k}^{15k}V(H^{(j,r)})$.  This completes the proof of the claim. 
\end{proof}

\begin{claim}
\label{clm:allequal}
For any color $q\in \{1,\ldots,k\}$, the vertices of color $q$ from $\bigcup_{i=2}^k V(H^i)$ used in the reconfiguration sequence $D_1,\ldots,D_{\ell}$ are copies of the same vertex $z\in V(G)$.  
Moreover, exactly one vertex from $V(H^j)$ of color $q$ is used in the reconfiguration for all $2\leq j\leq k$.   
\end{claim}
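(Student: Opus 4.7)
The plan is to trace the color-$q$ token through the reconfiguration sequence and to propagate the ``copies-of-the-same-$G$-vertex'' property along it via Claim~\ref{cl:copies}. By Claim~\ref{claim:qinall}, for each $j\in\{2,\ldots,k\}$ I fix an odd index $i_j\geq 3$ and some $r_j\in\{5k,\ldots,15k\}$ such that $D_{i_j}$ contains a color-$q$ vertex $v_j\in V(H^{(j,r_j)})$; let $z_j\in V(G)$ denote the $G$-vertex of which $v_j$ is a copy. The goal is to show that all $z_j$ coincide with a single $z\in V(G)$, and that every color-$q$ vertex from $\bigcup_{j=2}^k V(H^j)$ appearing in $D_1,\ldots,D_\ell$ is a copy of this $z$.

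By Claim~\ref{clm:allcolorsol}, every odd-indexed $D_i$ contains a unique color-$q$ vertex, and since consecutive configurations differ by a single token addition or removal, an even-indexed $D_i$ contains at most two color-$q$ vertices. This produces a well-defined color-$q$ trajectory $u_1,u_2,\ldots$ with $u_i\in D_i\cap Q_q$, which branches into two vertices for exactly one step around each add/remove event. Crucially, $\bigcup_{j=2}^k V(H^j)$ is disjoint from $V(H^0)\cup V(H^{(1,1)})\cup V(H^{k+1})$, so whenever $u_i$ and $u_{i+1}$ both lie in this set the hypotheses of Claim~\ref{cl:copies} are satisfied and $u_i,u_{i+1}$ are copies of the same vertex of $G$. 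Chaining this along any contiguous sub-trajectory inside $\bigcup_{j=2}^k V(H^j)$ already yields a single underlying $G$-vertex for that whole sub-trajectory.

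The remaining task is to link the witnesses $v_2,\ldots,v_k$ into one such chain. For this I would reuse the localization argument from the proof of Claim~\ref{claim:qinall}: any connected $D_i$ with $|D_i|\leq 2k$ that meets $V(H^{(j,10k)})$ lies entirely inside $\bigcup_{r=5k}^{15k}V(H^{(j,r)})$, because the complement is at distance more than $5k$ from $V(H^{(j,10k)})$ while $D_i$ has diameter at most $2k-1$. Applied at each index $i_j$, this forces the trajectory to visit $V(H^2),\ldots,V(H^k)$ in order, and any detour into $V(H^1)$ must traverse layers $V(H^{(1,r)})$ with $r\geq 2$ (since the subdivision structure makes $V(H^{(1,2)})$ the only interface between $V(H^{(1,1)})$ and the rest of $V(H^1)$). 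At every such transition Claim~\ref{cl:copies} still applies and preserves the underlying $G$-vertex, so concatenating all these identifications forces $z_2=\cdots=z_k$, which we call $z$. The ``exactly one $G$-vertex of color $q$ per $V(H^j)$'' statement then follows immediately: every color-$q$ vertex in $V(H^j)$ used in the sequence lies on this chain and is a copy of $z$.

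The main obstacle I anticipate is ruling out excursions of the trajectory into $V(H^{(1,1)})$ that could switch the underlying $G$-vertex before returning to $\bigcup_{j=2}^k V(H^j)$ with a different copy, since that is precisely the case in which Claim~\ref{cl:copies} fails to give a direct comparison. I would resolve this by combining the localization bound with the minimality of the sequence: once $D_i$ is localized in the middle layers of some $V(H^j)$ with $j\geq 2$, returning to $V(H^{(1,1)})$ requires a number of steps that a minimal reconfiguration avoids except by genuinely transporting the color-$q$ token monotonically through the $V(H^j)$'s in order, exactly as needed for the conclusion.
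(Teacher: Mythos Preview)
Your chaining strategy via Claim~\ref{cl:copies} is exactly the paper's approach; the paper's own proof is just two sentences that cite Claims~\ref{claim:qinall} and~\ref{cl:copies} without spelling out the trajectory argument at all. You are in fact more careful than the paper in isolating the one place the chain can fail: a step where two consecutive color-$q$ vertices both lie in $V(H^{(1,1)})$, so that the ``at most one in $V(H^{(1,1)})$'' hypothesis of Claim~\ref{cl:copies} is violated and the common neighbour could be $w_q\in V(H^0)$.

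The gap is in your final paragraph. Invoking minimality of the reconfiguration sequence does not, by itself, rule out an excursion of the color-$q$ token from $\bigcup_{j\ge 2}V(H^j)$ back through the layers of $V(H^1)$ into $V(H^{(1,1)})$, where it switches its underlying $G$-vertex via $w_q$, and then returns. Minimality only asserts that no strictly shorter valid sequence exists; it does not force the trajectory to be monotone in any layer coordinate, and you give no mechanism for excising such an excursion while keeping every intermediate $D_i$ a valid colourful connected set of size at most $2k$. Your localization observation (that a $D_i$ meeting $V(H^{(j,10k)})$ is confined to the middle layers of $V(H^j)$) is correct, but it constrains a single configuration, not the global shape of the trajectory across time. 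As written, the last step is a hope rather than an argument. The paper's terse proof is silent on exactly this point, so you have put your finger on a genuine subtlety that neither write-up resolves.
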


\begin{proof}\renewcommand{\qedsymbol}{\ensuremath{\lrcorner}}
Fix a color $q\in \{1,\ldots,k\}$. 
By Claim~\ref{claim:qinall}, there are vertices of color $q$ from $V(H^j)$ for all $j$ is used in the reconfiguration sequence. By  Claim~\ref{cl:copies}, all these vertices are copies of the same vertex $z\in V(G)$. 
\end{proof}

Now we define a $k$-size vertex subset $C\subseteq V(G)$ and prove that $C$ is a clique in $G$. 
We let $C=\{a_i\in V(G) \colon 1\leq i\leq k, c(a_i)=i$, and the copy of $a_i$ in $V(H^2)$ is used in $D_1,\ldots,D_{\ell}\}$. 
Because of Claim~\ref{clm:allequal}, we have that $\vert C\vert =k$ and $C$ contains a vertex of each color in $c$. 
$C=\{a_1,\ldots,a_k\}\subseteq V(G)$ and for each $q\in [k]$, $c(a_q)=q$.  
We now prove that $C$ is indeed a clique in $G$. Towards that, we need to prove that for each $1\leq q<j \leq k$, 
$\{a_q,a_j\}\in E(G)$.

\begin{claim}
Let $1\leq q<j\leq k$. Then, $\{a_q, a_j\}\in E(G)$. 
\end{claim}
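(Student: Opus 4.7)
The plan is to leverage the structural fact that every subdivision vertex living inside the ``corridor'' $\bigcup_{r=5k}^{15k} V(H^{(j,r)})$ of the gadget $H^j$ arises from an edge of $G$ incident on a color-$j$ vertex (because $T^j$ is a star centred at $j$) and is $H$-adjacent only to copies of the two endpoints of that edge in at most two consecutive layers $H^{(j,r)}, H^{(j,r+1)}$. Combined with the fact that there are no direct $H$-edges between two non-subdivision vertices inside the corridor, this will force any solution that sits entirely inside it to be a ``star tree'' whose hub is the unique color-$j$ vertex and whose $k-1$ leaves are the other colored vertices, each attached to the hub by exactly one subdivision vertex.

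First I would apply Claim~\ref{claim:qinall} with index $j$ and color $j$ to obtain an odd $i$ and $r\in\{5k,\ldots,15k\}$ such that $D_i$ contains a color-$j$ vertex in $V(H^{(j,r)})$. Reusing the distance argument from the proof of that claim (anything outside $\bigcup_{r'=5k}^{15k} V(H^{(j,r')})$ is at distance strictly more than $5k$ from $V(H^{(j,r)})$, which exceeds $|D_i| = 2k - 1$), connectivity of $H[D_i]$ forces $D_i$ to lie wholly inside the corridor. By Claim~\ref{clm:allcolorsol}, $D_i$ then consists of exactly one vertex of each color in $[k]$ together with exactly $k-1$ subdivision vertices, each of degree two in $H[D_i]$ with neighbors of two distinct colors in $[k]$. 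Since $H[D_i]$ is connected on $2k-1$ vertices and already carries $2(k-1)$ edges coming from the subdivision vertices alone, it is a tree and every edge is incident on a subdivision vertex.

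Now I invoke the corridor observation: every subdivision vertex $s \in D_i$ is of the form $s_{uv}$ with $\{u,v\}\in E(G)$ and $\{c(u),c(v)\}=\{j,q'\}$ for some $q'\neq j$, and the $H$-neighbors of $s$ are exactly copies of $u$ and $v$ in one or two consecutive layers. Hence in $H[D_i]$ each subdivision vertex has one color-$j$ neighbor and one color-$q'$ neighbor; since $|D_i\cap Q_j|=1$, all $k-1$ subdivision vertices share the same color-$j$ hub $a_j^\ast$. For the tree $H[D_i]$ to reach all $k$ colors, the associated values of $q'$ must exhaust $[k]\setminus\{j\}$ without repetition. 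In particular, for our fixed color $q<j$, some $s=s_{uv}\in D_i$ joins $a_j^\ast$ to the unique color-$q$ vertex $a_q^\ast$ of $D_i$, with $c(u)=j$, $c(v)=q$, and $a_j^\ast$, $a_q^\ast$ copies of $u$, $v$ respectively.

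Finally, since $j\geq 2$, Claim~\ref{clm:allequal} applied to colors $j$ and $q$ identifies $a_j^\ast$ as a copy of $a_j$ and $a_q^\ast$ as a copy of $a_q$, which forces $u=a_j$ and $v=a_q$. Thus $\{a_q,a_j\}=\{u,v\}\in E(G)$, completing the proof. The main obstacle is the combinatorial bookkeeping in the middle step: making fully explicit that no $H$-edge inside the corridor bypasses the subdivision vertices, that each of the $k-1$ subdivision vertices of $D_i$ must dangle off the unique color-$j$ hub, and that the resulting star tree covers each color $q'\neq j$ exactly once; once this is in place, the identification with a concrete $G$-edge via Claim~\ref{clm:allequal} is immediate.
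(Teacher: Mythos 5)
Your proposal is correct and follows essentially the same route as the paper: locate an odd $i$ with $D_i$ entirely inside the corridor of $H^j$ via Claim~\ref{claim:qinall}, use the tree/star structure of $H[D_i]$ from Claims~\ref{clm:Qindp} and \ref{clm:allcolorsol} to exhibit a subdivision vertex adjacent to both the color-$j$ hub and the color-$q$ vertex of $D_i$, and then invoke Claim~\ref{clm:allequal} to identify those two endpoints as copies of $a_j$ and $a_q$, forcing $\{a_q,a_j\}\in E(G)$ by the definition of $G\upharpoonright_c T^j$. The only stylistic difference is that you spell out the ``star tree'' structure and the exhaustion of $[k]\setminus\{j\}$ explicitly, whereas the paper reaches the same common-neighbor conclusion more tersely from connectivity plus the degree-two constraint on color-$(k+1)$ vertices.
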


\begin{proof}\renewcommand{\qedsymbol}{\ensuremath{\lrcorner}}
By Claim~\ref{claim:qinall}, we know that there exist an odd $i\in [\ell]$ and $r\in \{5k,\ldots,15k\}$ such that $D_i$ contains a vertex of color $q$ in $V(H^{(j,r)})$. Thus, by Claim~\ref{clm:allequal}, a copy of $a_j$ and a copy of $a_q$  are present in $D_i$. 
%
%
Let $u_j$ and $u_q$ be the vertices in $D_i$ colored with $j$ and $q$, respectively. By Claim~\ref{clm:allequal}, $u_j$ is a copy of $a_j$ and $u_q$ is a copy of $a_q$. 
Any vertex $b$ in $V(H^j)$ colored $k+1$ is adjacent to 
vertices of exactly two colors, out of which one color is $j$.  Moreover, by the construction of $H$, $(a)$ if $b$ is adjacent to $x$ and $y$ in $V(H^j)$, and $x$ and $y$ are copies of $x'$ and $y'$ in $G$, respectively, then $\{x',y'\}\in E(G)$. We know that $H[D_i]$ is connected, $\vert Q_s\cap D_i\vert=1$ for all $1\leq s\leq k$, 
$D_i\setminus Q_{k+1}$ is an independent set in $H$, and each  vertex in $D_i$ colored with $k+1$ is adjacent to exactly 
two vertices in $D_i\setminus Q_{k+1}$  with one of them being $u_j$ 
(see Claims~\ref{clm:Qindp} and \ref{clm:allcolorsol}). This implies that there is common neighbor $b$ for $u_q$ and $u_j$ 
and hence $\{a_q,a_j\}\in E(G)$, by statement $(a)$ above. This completes the proof of the claim.  
\end{proof}
This completes the proof of the lemma. 
\end{proof}

\section{Details omitted from Section 4}
\subsection*{Proof of Lemma~\ref{lem:reduction}}

\begin{proof}
Let $H$ be the subgraph of $G$ induced by $D(u,v)$. 
We enumerate the vertices of 
$N(u)\cap N(v)$ consecutively as $x_1,\ldots, x_t$ for
some $t> 4|C| + 3k + 1$. We let $X = \{x_1,\ldots, x_t\}$. 
Note that since we have $t$ vertex-disjoint paths between $u$ 
and $v$ in $H$, these paths define the boundaries of $t$ faces in the plane 
embedding of $H$ (after applying the reduction rule of 
Lemma~\ref{lem:reduce-diamond-edges}, $H$ has all 
the edges $\{u,x\}$ and $\{v,x\}$ for $x\in N(u)\cap N(v)$
and no other edges). Each vertex in $C\setminus\{u,v\}$ 
can be adjacent in $H$ to at most two vertices in $X$, say 
with $y$ and~$z$, and these two vertices $y$ and $z$ 
can touch at most $3$ faces of $H$. 

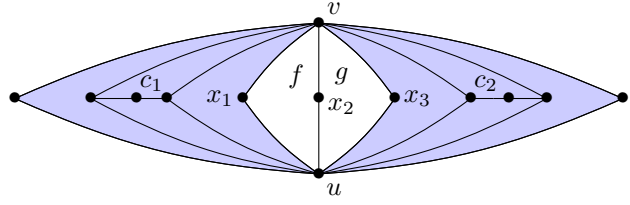
\begin{figure}[h!]
\begin{center}
\begin{tikzpicture}

\filldraw[fill=blue!20] (0,0) to[bend left=10] (-4,1) to[bend left=10] (0,2) to[bend right=10] (-1,1) to[bend right=10] cycle;
\filldraw[fill=blue!20] (0,0) to[bend right=10] (4,1) to[bend right=10] (0,2) to[bend left=10] (1,1) to[bend left=10] cycle;

\node (a1) at (0,0) {$\bullet$};
\node at (0.2,-0.2) {$u$};
\node (a2) at (0,2) {$\bullet$};
\node at (0.2,2.2) {$v$};

\node at (0,1) {$\bullet$};
\draw[-] (0,2) -- (0,1) -- (0,0);
\foreach \x in {-4cm,-3cm,-2cm,-1cm}
{
	\node at (\x,1) {$\bullet$};
	\draw[-] (0,2) to[bend right=10] (\x,1) to[bend right=10] (0,0);
}
\foreach \x in {1cm,2cm,3cm,4cm}
{
	\node at (\x,1) {$\bullet$};
	\draw[-] (0,2) to[bend left=10] (\x,1) to[bend left=10] (0,0);
}

\node at (-2.2,1.2) {$c_1$};
\node at (-2.4,1) {$\bullet$};
\draw (-2.4,1) -- (-3,1);
\draw (-2.4,1) -- (-2,1);

\node at (2.2,1.2) {$c_2$};
\node at (2.5,1) {$\bullet$};
\draw (2.3,1) -- (3,1);
\draw (2.3,1) -- (2,1);

\node at (-0.3,1.3) {$f$};
\node at (0.3,1.3) {$g$};

\node at (-1.3,1) {$x_1$};
\node at (1.3,1) {$x_3$};
\node at (0.3,0.9) {$x_2$};

\end{tikzpicture}
\end{center}
\caption{Every vertex of 
$C\setminus\{u,v\}$ can touch at most $3$ faces
of $H$. In the figure we assume the vertices
$c_1$ and $c_2$ are in $C\setminus\{u,v\}$. 
The faces that are touched by $c_1$ or $c_2$ 
are colored in blue. The uncolored 
faces $f$ and $g$ are not touched by vertices of 
$C\setminus\{u,v\}$. }
\end{figure}

This leaves $|C| + 3k + 1 > |C| + 1$ faces of $H$ that are not 
touched by a vertex of $C\setminus\{u,v\}$. 
By the pidgeonhole principle we can find $2$ adjacent 
faces~$f$ and $g$ of $H$ that are not touched by 
a vertex of $C\setminus \{u,v\}$. 

We let $x_1$ and $x_2$ denote the two vertices on the boundary of face $f$ different 
from $u$ and~$v$ and we let $x_2$ and~$x_3$ denote the two vertices on the boundary of face $g$ different from~$u$ 
and~$v$. 
Recall that, due to Lemma~\ref{lem:reduce-diamond-edges}, we know that there are no 
edges between those three vertices. Let $W$ denote the set
of all vertices contained in the face of the cycle $u,x_1,v,x_3,u$. In particular, $W$ contains $x_2$. We claim
that the vertices of~$W$ can be removed from $G$
without changing the reconfiguration properties of $G$, i.e., $W$ 
is a set of irrelevant vertices. Let $G'=G-W$. First observe that 
$W\cap (S\cup T)=\emptyset$, hence $S,T\subseteq V(G')$. 
We show that reconfiguration from $S$ to $T$ is possible
in $G$ if and only if reconfiguration from~$S$ to $T$ is 
possible in $G'$. 

Assume $S=S_1,\ldots, S_t=T$ is a reconfiguration sequence 
from $S$ to $T$ in $G$. 
Let $S_1', \ldots,S_t'$, where for $1\leq i\leq t$, $S_i'\coloneqq S_i$ if $S_i$ does not contain a vertex of $W$ 
and $S_i'\coloneqq (S_i\setminus W)\cup \{x_1\}$ otherwise. 
Note that this modification leaves $S$ and $T$ unchanged, 
hence, $S_1'=S_1$ and $S_t'=S_t$. 
We claim that $S_1', \ldots,S_t'$ is a reconfiguration sequence 
from $S$ to $T$ in $G'$. 

\setcounter{claimcounter}{0}
\begin{claim}
For $1\leq i\leq t$, $S_i'$ is a dominating set of $G$, and 
hence also of $G'$. 
\end{claim}

\begin{proof}\renewcommand{\qedsymbol}{$\lrcorner$}
No vertex of $W$ is adjacent to a vertex of 
$C\setminus\{u,v\}$ and $W \cap C = \emptyset$ by construction. Hence, the 
only vertices of $C$ that are possibly adjacent to a vertex
of $W$ are the vertices $u$ and $v$. Whenever $S_i$ contains a vertex of $W$, we have $x_1\in S_i'$, which dominates both $u$ and $v$. 
Hence, $S_i'$ dominates at least the vertices
of $C$ that $S_i$ dominates. We use 
Lemma~\ref{lem:non-adjecent-to-core-irrelevant} to
conclude that $S_i'$ is a dominating set of $G$. 
\end{proof}

\begin{claim}
For $1\leq i\leq t$, $S_i'$ is connected.
\end{claim}

\begin{proof}\renewcommand{\qedsymbol}{$\lrcorner$}
Let $s_1,s_2\in S_i\setminus W$
and let $P$ be a shortest path between $s_1$ and $s_2$ in $G[S_i]$. 
We have to show that there exists a path between $s_1$
and $s_2$ in $G[S_i']$. If $P$ does not use a vertex of $W$, then
there is nothing to show. Hence, assume $P$ uses a vertex of
$W$. By definition of $W$, both $s_1$
and $s_2$ lie outside the face $h$ of the cycle $u,x_1,v,x_3$
that contains $x_2$. Hence, $P$ must enter and leave the
face $h$, and as $P$ is a shortest path, it must enter and
leave via opposite vertices, i.e., via $u$ and $v$, or via 
$x_1$ and $x_3$ (as all other pairs are linked by an 
edge and we could find a shorter path). 
If $P$ contains $u$ and $v$, then 
we can replace the vertices of $W$ on~$P$ by $x_1$ and
we are done. 

Hence, assume $P$ uses $x_1$ and $x_3$. 
As $D(u,v)$ is a diamond of 
thickness greater than $4p + 3k + 1>3k$, according to 
Lemma~\ref{lem:diamond-ends-are-forced} at least 
one of the vertices $u$ and $v$, say $u$, is contained in~$S_i$,
and by definition also in~$S_i'$. Then 
we can replace the vertices of $W$ on~$P$ by~$u$ and we 
are again done. 
\end{proof}

Finally, the following claim is immediate from the definition of each $S_i'$. 
Combining Claim 1, 2, and 3, we conclude that $S_1',\ldots, S_t'$ is a reconfiguration sequence from $S$ to $T$ in $G'$. 

\begin{claim}
$S_{i+1}'$ is obtained from $S_i'$ by the addition or removal
of a single token for all $1\leq i<t$.  
\end{claim}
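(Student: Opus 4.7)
The plan is a straightforward case analysis on the single-token step from $S_i$ to $S_{i+1}$, tracked with respect to the set $W$. Let $v$ denote the unique vertex in $S_i \triangle S_{i+1}$, so either $S_{i+1}=S_i\cup\{v\}$ or $S_{i+1}=S_i\setminus\{v\}$. I will split according to whether $S_i$ and $S_{i+1}$ intersect $W$, and within each case according to whether $v\in W$ or $v\notin W$.

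The easy case is when neither $S_i$ nor $S_{i+1}$ meets $W$: then by definition $S_i'=S_i$ and $S_{i+1}'=S_{i+1}$, and the original single-token change is preserved. If both $S_i$ and $S_{i+1}$ meet $W$, then $S_i'=(S_i\setminus W)\cup\{x_1\}$ and $S_{i+1}'=(S_{i+1}\setminus W)\cup\{x_1\}$. When $v\notin W$ we have $S_i'\triangle S_{i+1}'=\{v\}$, so again a single-token move. When $v\in W$ we get $S_i\setminus W = S_{i+1}\setminus W$ and hence $S_i'=S_{i+1}'$; the two sets coincide.

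The asymmetric case is when exactly one of $S_i$, $S_{i+1}$ meets $W$. Then $v$ must lie in $W$ (it is the vertex being added to or removed from $W$ during the step). Without loss of generality suppose $S_i\cap W=\emptyset$ and $S_{i+1}\cap W=\{v\}$. Then $S_i'=S_i$ and $S_{i+1}'=(S_i\cup\{v\}\setminus W)\cup\{x_1\}=S_i\cup\{x_1\}$, so if $x_1\notin S_i$ we have a single-token addition of $x_1$, and if $x_1\in S_i$ then $S_i'=S_{i+1}'$ coincide. The removal direction is symmetric.

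To finish, I handle the possibility that $S_i'=S_{i+1}'$ by simply collapsing consecutive duplicates in the sequence $S_1',\ldots,S_t'$; this is permissible because $S_1=S$ and $S_t=T$ are unchanged by the modification, so the endpoints are preserved, and the resulting shorter sequence still consists of the same (connected, dominating, size at most $k$) sets established in Claims~1 and~2 of the lemma. After this collapse, every consecutive pair differs by exactly one token, yielding a valid \textsf{TAR} reconfiguration sequence from $S$ to $T$ in $G'$. The only mild subtlety is verifying in the asymmetric case that the only vertex of $W$ possibly in $S_i\cup S_{i+1}$ is $v$ itself, which is immediate since $|S_i\triangle S_{i+1}|=1$ forces $|(S_i\cap W)\triangle(S_{i+1}\cap W)|\le 1$.
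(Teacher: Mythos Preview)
Your proposal is correct and takes essentially the same approach as the paper, which merely asserts that the claim is ``immediate from the definition of each $S_i'$'' without giving any case analysis. Your treatment is in fact more careful: you explicitly identify the situations where $S_i'=S_{i+1}'$ (when the moved token lies in $W$, or when $x_1$ is already present) and resolve them by collapsing consecutive duplicates---a point the paper only makes explicit in the analogous claim within the proof of Lemma~\ref{lem:reduction-paths}, not here.
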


To prove the opposite direction, assume $S=S_1',\ldots, S_t'=T$ is a 
reconfiguration sequence from $S$ to $T$ in $G'$.
We claim that this is also a reconfiguration sequence from 
$S$ to $T$ in $G$. All we have to show is that 
$S_i'$ is a dominating set of $G$ for all $1\leq i\leq t$. 
This follows immediately from the fact that $S_i'$ is 
a dominating set of $G'$, and hence, as $W$ is not 
adjacent to $C\setminus\{u,v\}$ and $W \cap C = \emptyset$, also a dominating 
set of $C$ in $G$. Then according to 
Lemma~\ref{lem:non-adjecent-to-core-irrelevant}, 
$S_i'$ also dominates $G$. 
We conclude that there is a reconfiguration sequence from $S$ to $T$ in $G$ if and only if 
there is a reconfiguration sequence from $S$ to $T$ in $G'=G-W$. 
\end{proof}

\subsection*{Proof of Lemma~\ref{lem:reduction-paths}}

\begin{proof}
We first show that we can essentially establish the situation
depicted in Figure~\ref{fig:final-reduction}. 
We may assume that the paths of $\mathcal{P}$ are induced
paths, otherwise we may replace them by induced paths. 
Let $H$ be the graph induced on $u,v$ and the vertices 
of $\mathcal{P}$
that contains exactly the edges of the paths in $\mathcal{P}$. 
In the figure, the paths of $\Pp$ are depicted by thick
edges, while the diagonal edges do not belong to the paths. 
This situation is similar to the situation in the proof of
Lemma~\ref{lem:reduction}. Just as in the proof of 
Lemma~\ref{lem:reduction}, we find two adjacent faces
$f,g$ of $H$ that do not touch a vertex of $D\setminus \{u,v\}$. 

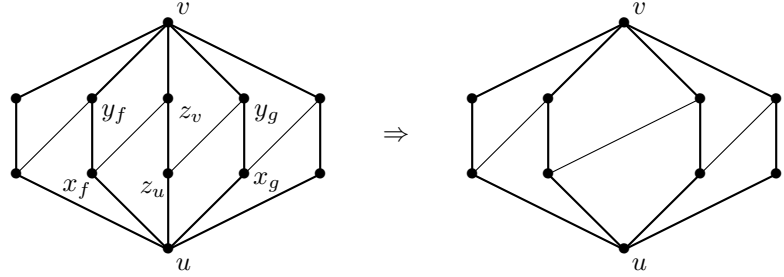
\begin{figure}[h!]
\begin{center}
\begin{tikzpicture}
\node (a1) at (0,0) {$\bullet$};
\node at (0.2,-0.2) {$u$};
\node (a2) at (0,3) {$\bullet$};
\node at (0.2,3.2) {$v$};
\foreach \x in {-2cm,-1cm,0cm,1cm,2cm}
{
	\node at (\x,1) {$\bullet$};
	\node at (\x,2) {$\bullet$};	
	\draw[-,thick] (0,3) -- (\x,2) -- (\x,1) -- (0,0);
}
\draw[-] (-2,1) -- (-1,2);
\draw[-] (-1,1) -- (0,2);
\draw[-] (0,1) -- (1,2);
\draw[-] (1,1) -- (2,2);

\node at (-1.2,0.8) {$x_f$};
\node at (-0.7,1.8) {$y_f$};
\node at (-0.2,0.8) {$z_u$};
\node at (0.3,1.8) {$z_v$};
\node at (1.3,0.9) {$x_g$};
\node at (1.3,1.8) {$y_g$};

\node at (3,1.5) {$\Rightarrow$};

\begin{scope}[xshift=6cm]
\node (a1) at (0,0) {$\bullet$};
\node at (0.2,-0.2) {$u$};
\node (a2) at (0,3) {$\bullet$};
\node at (0.2,3.2) {$v$};
\foreach \x in {-2cm,-1cm,1cm,2cm}
{
	\node at (\x,1) {$\bullet$};
	\node at (\x,2) {$\bullet$};	
	\draw[-,thick] (0,3) -- (\x,2) -- (\x,1) -- (0,0);
}
\draw[-] (-2,1) -- (-1,2);
\draw[-] (-1,1) -- (1,2);
\draw[-] (1,1) -- (2,2);
\end{scope}
\end{tikzpicture}
\end{center}
\caption{An exemplary situation handled by Lemma~\ref{lem:reduction-paths}.}
\label{fig:final-reduction}
\end{figure}

\setcounter{claimcounter}{0}
\begin{claim}
The paths bounding $f$ and $g$  have 
length $3$, i.e., they have exactly two inner vertices. 
\end{claim}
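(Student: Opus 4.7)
The plan is to prove that any path bounding $f$ or $g$ has length both at least $3$ and at most $3$.

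First, I would observe that $u, v \in C$. Since $u$ and $v$ have degree exceeding $(4|C| + 3k + 1)\cdot k$, Corollary~\ref{lem:high-degree-left} implies that $u$ and $v$ remain pebbled throughout every reconfiguration sequence from $S$ to $T$; in particular $u, v \in S \subseteq C$. Note also that $C$ is itself a dominating set of $G$, because $S \subseteq C$ and $S$ is a (connected) dominating set.

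For the lower bound, suppose some path $P \in \mathcal{P}$ has length $2$ with unique inner vertex $w$. Then $w$ is adjacent to both $u$ and $v$, so $w$ has two neighbors in $C$, and hence $w \in D$ by the definition of $D$. But the inner vertices of paths in $\mathcal{P}$ lie in $V(G)\setminus D$, a contradiction. Therefore every path in $\mathcal{P}$ has length at least $3$, and in particular so do those bounding $f$ and $g$.

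For the upper bound, suppose for contradiction that a path $P$ bounding $f$ (the case of $g$ is symmetric) can be written as $u = v_0, v_1, \ldots, v_\ell = v$ with $\ell \geq 4$, and consider $w := v_2$. Since $P$ is induced and $\ell \geq 4$, $w$ is not adjacent to $u$ or $v$. The vertex $w$ lies on the boundary of $f$, and since $f$ does not touch any vertex of $D\setminus\{u,v\}$, the definition of ``touches'' implies that $w$ has no neighbor in $D\setminus\{u,v\}$. Because $C \subseteq D$, the only possible neighbors of $w$ in $C$ are $u$ and $v$, neither of which is actually adjacent to $w$. On the other hand, $w \in V(G)\setminus D$ gives $w \notin C$, while $C$ is a dominating set, so $w$ must have a neighbor in $C$. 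This contradiction yields $\ell \leq 3$, and combining the two bounds gives $\ell = 3$.

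The main obstacle I expect is translating the geometric face-selection condition ``$f$ does not touch $D\setminus\{u,v\}$'' into the algebraic statement that the interior vertex $w$ has no dominator in $C$ apart from $u$ and $v$. This step crucially uses both that $C$ is a genuine dominating set (inherited from $S \subseteq C$) and that $P$ is induced, so that middle vertices of long paths avoid adjacency with $u$ and $v$.
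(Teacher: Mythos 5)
Your proof is correct and takes essentially the same route as the paper's: you rule out length $2$ by noting the inner vertex would lie in $D$, and rule out length $\geq 4$ by exhibiting an inner vertex $w=v_2$ that is non-adjacent to $u$ and $v$ (since $P$ is induced) yet cannot have any neighbor in $C$ because the face $f$ does not touch $D\setminus\{u,v\}$, contradicting that $C$ dominates. Your version is a touch more streamlined — you argue directly with $C$ as a dominating set rather than via the paper's auxiliary sets $M(u),M(v)$, and you usefully make explicit the implicitly used fact that $u,v\in S\subseteq C$ (which follows from Corollary~\ref{lem:high-degree-left} applied to $S$ itself).
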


\begin{proof}\renewcommand{\qedsymbol}{$\lrcorner$}
First observe that $P\in \Pp$ cannot have length 
exactly $2$, as then $P$ contains a vertex
adjacent to both $u$ and $v$. However, the vertices 
with this property lie in $D$, and hence by construction 
not on $P$. 

Assume there is $P\in \Pp$ of length greater than $3$. 
Let $M(u)$ denote the neighbors of $u$ that 
are in $V(G) \setminus D$ and are only adjacent to $u$ 
and to no other vertex of $C$. 
Similarly, let $M(v)$ denote the neighbors of $v$ that are in $V(G) \setminus D$ and are 
only adjacent to $v$ and to no other vertex of $C$. 
By construction, the faces $f$ and $g$ do not
contain vertices of $D \setminus \{u,v\}$. Furthermore, $P$ contains
exactly one vertex of $M(u)$ and exactly one vertex of
$M(v)$. It cannot contain two vertices of one of these sets, 
as otherwise $P$ is not an induced path. Hence, assume that $P$
contains another vertex $x$ that is not in $M(u)\cup M(v)$. 
Then $x$ must be dominated by a vertex different from $u$
and from $v$. However, by construction, the faces $f$ and 
$g$ do not touch a vertex of $D \setminus \{u,v\} \supseteq (S \cup T) \setminus \{u,v\}$, a contradiction.  
\end{proof}

Denote by $x_f,y_f$ the two vertices that lie on the 
boundary of $f$ and not on the boundary of $g$ and 
by $x_g,y_g$ the two vertices that lie on the 
boundary of $g$ and not on the boundary of~$f$. 
Assume that $x_f,x_g\in M(u)$ and $y_f,y_g\in M(v)$. 
Denote by $z_u,z_v$ the vertices shared by~$f$ and~$g$
different from $u$ and $v$ that are adjacent to $u$ 
and $v$, respectively.  
Denote by $W$ the set of all vertices that lie inside the 
face~$h$ of the cycle $u,x_f,y_f,v,y_g,x_g,u$ that 
contains the vertices $z_u$ and $z_v$. Hence $W$ 
contains at least the vertices $z_u$ and $z_v$. 
By Corollary~\ref{lem:high-degree-left}, we know that $u,v\in S_i$, for all $1\leq i\leq t$ (both $u$ and $v$ can never be lifted). 
Consequently, by Lemma~\ref{lem:reduce-long-paths}, we know that there are no edges with both endpoints in $N(v)$ nor 
edges with both endpoints in $N(u)$. 
Combining the previous fact with the fact that all vertices of $W$ are adjacent
to either $u$ or $v$ (but not both) and to no other vertex of~$C \supseteq S \cup T$, 
we conclude that~$W$ 
consists of exactly the two vertices~$z_u$ and $z_v$ and that there are no edges between $z_u$ and $x_g,x_f$ 
and no edges between $z_v$ and $y_g,y_f$. Note that we can safely assume that none of the degree-one neighbors of 
$u$ or $v$ are inside $W$. 
We claim that the vertices 
$z_u$ and $z_v$ are irrelevant and can be removed after
possibly introducing an additional edge to the graph. 
Recall that $S$ and $T$ do not contain the vertices 
$z_u$ and $z_v$. We define $G'$ as follows. 

\begin{itemize}
\item If $\{u,v\} \not\in E(G)$ and ($\{x_f,z_v\}\in E(G)$ or $\{y_f,z_u\}\in E(G)$) and 
($\{x_g,z_v\}\in E(G)$ or $\{y_g,z_u\}\in E(G)$) then~$G'$ is obtained from $G$ by deleting $z_u$ and $z_v$ and introducing
the edge $\{x_f, y_g\}$. 
\item Otherwise, $G'$ is obtained from $G$ by simply deleting
$z_u$ and $z_v$. 
\end{itemize} 

We claim that $(G, S,T,k)$ and $(G',S,T,k)$ are equivalent
instances of \textsc{CDS-R}. 
Assume first that there exists a
reconfiguration sequence $S=S_1,\ldots, S_t=T$ in $G$. 
We distinguish two cases. First assume that $\{u,v\}\in E(G)$. Hence, $G'$ is obtained from $G$ by simply deleting
$z_u$ and $z_v$. 
Let $S_1', \ldots,S_t'$, where for $1\leq i\leq t$, 
$S_i'=S_i\setminus \{z_u,z_v\}$. 
We claim that $S_1', \ldots,S_t'$
is a reconfiguration sequence from $S$ to $T$ in $G'$. 

\begin{claim}
For $1\leq i\leq t$, $S_i'$ is a dominating set of $G$, and 
hence also of $G'$. 
\end{claim}

\begin{proof}\renewcommand{\qedsymbol}{$\lrcorner$}
The vertices $z_u$ and $z_v$ are not adjacent to a vertex of 
$C\setminus\{u,v\}$ and $\{z_u, z_v\} \cap C = \emptyset$. Hence, the 
only vertices of $C$ that are possibly adjacent to $z_u$ 
or $z_v$ are the vertices $u$ and $v$. According to 
Lemma~\ref{lem:high-degree-left}, 
$u,v\in S_i$, and moreover $u,v\in S_i'$, for all $1\leq i\leq t$. Hence, 
$S_i'$ dominates at least the vertices
of $C$ that $S_i$ dominates. We use 
Lemma~\ref{lem:non-adjecent-to-core-irrelevant} to
conclude that $S_i'$ is a dominating set of $G$. 
\end{proof}

\begin{claim}
For $1\leq i\leq t$, $S_i'$ is connected.
\end{claim}

\begin{proof}\renewcommand{\qedsymbol}{$\lrcorner$}
Let $s_1,s_2\in S_i\setminus \{z_u,z_v\}$
and let $P$ be a shortest path between $s_1$ and $s_2$ in $G[S_i]$. 
We have to show that there exists a path between $s_1$
and $s_2$ in $G[S_i']$. If $P$ does not use $z_u$ nor  
$z_v$ then there is nothing to prove. Hence, assume $P$ uses $z_u$ or 
$z_v$ (or both). By definition of~$W$, both $s_1$
and $s_2$ lie outside the face $h$ of the cycle $u,x_f,y_f,v,y_g,x_g,u$
that contains~$z_u,z_v$. Hence, $P$ must enter and leave the
face $h$, say it enters at $u$ and leaves at~$y_f$. All other
possibilities are handled analogously. Then we can avoid the 
vertices $z_u$ and $z_v$ by walking to $v$ first, then $u$ (or $x_f$), and then to $y_f$. 
\end{proof}

The next claim follows from the definition of $S_i'$ and the fact that we can remove any duplicate consecutive 
sets in a reconfiguration sequence.

\begin{claim}
$S_{i+1}'$ is 
obtained from~$S_i'$ by the addition or removal
of a single token for all $1\leq i<t$.  
\end{claim}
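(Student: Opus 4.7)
The plan is a direct case analysis on how $S_{i+1}$ is obtained from $S_i$ in the original sequence. Since $S_1, \ldots, S_t$ is a reconfiguration sequence under \textsf{TAR} in $G$, there is a unique vertex $x_i \in V(G)$ such that $S_{i+1}$ is obtained from $S_i$ by either adding or removing $x_i$. I would distinguish two sub-cases depending on whether $x_i \in \{z_u, z_v\}$ or not, and track what happens after applying the operation $\cdot \setminus \{z_u, z_v\}$.

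If $x_i \notin \{z_u, z_v\}$, then the set operation $\cdot \setminus \{z_u, z_v\}$ commutes with the single-token modification, so $S_{i+1}' = S_i' \mathbin{\triangle} \{x_i\}$, which is a legitimate \textsf{TAR} transition in $G'$. If instead $x_i \in \{z_u, z_v\}$, then this single change is entirely absorbed by the deletion $\cdot \setminus \{z_u, z_v\}$, yielding $S_i' = S_{i+1}'$. In the second case I would simply collapse the duplicate consecutive entry. After processing all indices $i$ in this way, the resulting sequence still starts at $S_1' = S$ and ends at $S_t' = T$ (using $\{z_u, z_v\} \cap (S \cup T) = \emptyset$, which holds because $S$ and $T$ are unchanged by the reduction), each pair of consecutive sets now differs by exactly one token, and every intermediate set is a connected dominating set of $G'$ of size at most $k$ by the preceding Claims~2 and~3.

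I do not anticipate any genuine obstacle here: the argument is essentially bookkeeping once one observes that the reduction $S_i \mapsto S_i \setminus \{z_u, z_v\}$ has only two possible effects on a single token move, namely preserving it verbatim (when the moved vertex lies outside $\{z_u, z_v\}$) or erasing it entirely (when the moved vertex is $z_u$ or $z_v$). The only minor subtlety worth making explicit is that collapsing a pair of identical consecutive sets in a reconfiguration sequence is always legitimate, since it merely shortens the sequence without introducing any new transition and keeps the endpoints intact.
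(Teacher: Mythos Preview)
Your proposal is correct and matches the paper's own argument. The paper disposes of this claim in a single sentence---``follows from the definition of $S_i'$ and the fact that we can remove any duplicate consecutive sets in a reconfiguration sequence''---and your case split on whether the moved vertex lies in $\{z_u,z_v\}$ is exactly the unpacking of that remark, including the observation that duplicates created when $x_i\in\{z_u,z_v\}$ are harmlessly collapsed.
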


This finishes the proof in case $\{u,v\}\in E(G)$. Hence,  
we assume now that $\{u,v\}\not\in E(G)$ and ($\{x_f,z_v\}\in E(G)$ or $\{y_f,z_u\}\in E(G)$) and 
($\{x_g,z_v\}\in E(G)$ or $\{y_g,z_u\}\in E(G)$). That is, $G'$ is obtained from $G$ by deleting $z_u$ and $z_v$ and 
introducing the edge $\{x_f, y_g\}$. 
We now obtain $S_i'$ from $S_i$, for $1\leq i\leq t$,  
by replacing
\begin{itemize}
\item $z_u$ by $x_f$ and $z_v$ by $y_g$ if $S_i\cap \{z_u,z_v\}=\{z_u,z_v\}$, 
\item $z_u$ by $x_f$ if $S_i\cap \{z_u,z_v\}=\{z_u\}$, and 
\item $z_v$ by $y_g$ if $S_i\cap \{z_u,z_v\}=\{z_v\}$. 
\end{itemize}

We claim that $S_1', \ldots, S_t'$
is a reconfiguration sequence from $S$ to $T$ in $G'$. We need
no new arguments to prove that each $S_i'$ is a dominating set
of $G$ and hence of $G'$ and that each~$S_{i+1}'$ is obtained
from $S_i'$ by adding or removing one token. It remains
to show that each $S_i'$ is connected in $G'$. 

\begin{claim}
For $1\leq i\leq t$, $S_i'$ is connected in $G'$. 
\end{claim}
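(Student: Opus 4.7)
The plan is to exhibit an edge-preserving surjection $\phi\colon S_i \to S_i'$ and then invoke the standard fact that the image of a connected graph under such a map is connected. Define $\phi(z_u)=x_f$, $\phi(z_v)=y_g$, and $\phi(w)=w$ for every $w\in S_i\setminus\{z_u,z_v\}$. The three-case definition of $S_i'$ gives $\phi(S_i)=S_i'$ immediately: the only vertices appearing in $S_i'\setminus S_i$ are $x_f$ (exactly when $z_u\in S_i$) and $y_g$ (exactly when $z_v\in S_i$), which are precisely the images of $z_u$ and $z_v$.

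The key verification is that for every edge $\{s,t\}\in E(G[S_i])$, either $\phi(s)=\phi(t)$ or $\{\phi(s),\phi(t)\}\in E(G'[S_i'])$. Edges with neither endpoint in $\{z_u,z_v\}$ survive the deletion intact. For the remaining edges I would first pin down the possible neighbors of $z_u$ and $z_v$: by Corollary~\ref{lem:high-degree-left} together with Lemma~\ref{lem:reduce-long-paths}, no two vertices of $N(u)$ are adjacent in $G$ and similarly for $N(v)$; combined with the already-established fact that $W=\{z_u,z_v\}$ lies inside the face bounded by $u,x_f,y_f,v,y_g,x_g$ and touches no vertex of $C\setminus\{u,v\}$, this forces the neighbors of $z_u$ in $G$ to lie in $\{u,z_v,y_f,y_g\}$ and those of $z_v$ in $\{v,z_u,x_f,x_g\}$. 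A finite case check then completes the argument: the edges $\{z_u,u\}$, $\{z_u,y_f\}$, $\{z_v,v\}$, $\{z_v,x_g\}$ map to edges of the surviving paths $u$-$x_f$-$y_f$-$v$ and $u$-$x_g$-$y_g$-$v$, while the three remaining possible edges $\{z_u,z_v\}$, $\{z_u,y_g\}$, and $\{z_v,x_f\}$ all map to $\{x_f,y_g\}$, precisely the edge introduced when forming $G'$.

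With surjectivity and edge-preservation in hand, the connectivity of $G'[S_i']$ follows: given $s_1,s_2\in S_i'$, choose preimages $s_1',s_2'\in S_i$ and apply $\phi$ to a path from $s_1'$ to $s_2'$ in $G[S_i]$ to obtain a walk from $s_1$ to $s_2$ in $G'[S_i']$. The main obstacle is the last step of the case check: making sure that the single added edge $\{x_f,y_g\}$ absorbs every edge in the list $\{z_u,z_v\}$, $\{z_u,y_g\}$, $\{z_v,x_f\}$ that cannot be routed through the bounding paths. This is precisely why the hypotheses on the crossing edges in the definition of $G'$ are formulated as they are, and why no further edges beyond $\{x_f,y_g\}$ need to be introduced.
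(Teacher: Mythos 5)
Your proof is correct, and it takes a genuinely different route from the paper. The paper argues connectivity directly: it first dispatches the easy subcase where $S_i\setminus\{z_u,z_v\}$ is already connected, then observes that $X=\{u,x_f,z_u,x_g\}$ and $Y=\{v,y_f,z_v,y_g\}$ are each internally connected (via $u$ and $v$ respectively, using Corollary~\ref{lem:high-degree-left}), and finishes by a case split on $S_i\cap\{z_u,z_v\}$ showing that the token replacement re-establishes an $X$--$Y$ connection either through the new edge $\{x_f,y_g\}$ or through an existing boundary edge such as $\{x_f,y_f\}$. You instead package the whole argument as a surjective, edge-preserving (or -collapsing) map $\phi\colon S_i\to S_i'$ and invoke the fact that such a map sends a connected graph to a connected graph, reducing everything to a finite check of the seven possible edges incident to $z_u$ or $z_v$. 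Your enumeration of those neighborhoods ($N(z_u)\subseteq\{u,z_v,y_f,y_g\}$, $N(z_v)\subseteq\{v,z_u,x_f,x_g\}$) is exactly what the earlier part of the paper's proof establishes, and your check that the three ``crossing'' edges $\{z_u,z_v\}$, $\{z_u,y_g\}$, $\{z_v,x_f\}$ all map onto the single inserted edge $\{x_f,y_g\}$ while the remaining four land on surviving boundary edges is complete. The homomorphism framing is more modular---it avoids reasoning about which pieces $S_i\setminus\{z_u,z_v\}$ falls into and in particular does not need to locate which of $y_f,y_g$ actually carries a token---at the mild cost of having to verify up front that $\phi$ genuinely maps into $S_i'$ and is surjective, which your three-case definition of $S_i'$ gives directly. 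One small remark worth making explicit: $\phi$ never identifies two adjacent vertices, since the earlier part of the argument rules out the edges $\{z_u,x_f\}$ and $\{z_v,y_g\}$, so $\phi$ is in fact a graph homomorphism in the strict sense; but the weaker ``edge-preserving-or-collapsing'' condition you state is all that is needed.
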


\begin{proof}\renewcommand{\qedsymbol}{$\lrcorner$}
According to 
Lemma~\ref{lem:high-degree-left}, 
$u,v\in S_i$, and also $u,v\in S_i'$, for all $1\leq i\leq t$.
If $S_i\setminus \{z_u,z_v\}$ is connected, $S_i'$ is also connected, 
hence assume $S_i\setminus \{z_u,z_v\}$ is not connected. 
As $X=\{u, x_f, z_u, x_g\}$ is connected via $u$
and $Y=\{v, y_f, z_v,y_g\}$ is connected via $v$, it suffices to 
show that our vertex exchange creates a connection in $G'$ between 
any vertex of $X$ and any vertex of $Y$. If 
$S_i\cap \{z_u,z_v\}=\{z_u,z_v\}$
this is clear, as we shift the tokens to $x_f$ and $y_g$ and
in $G'$ we have introduced the edge $\{x_f,y_g\}$. 
If $S_i\cap \{z_u,z_v\}=\{z_u\}$, then $\{z_u,y_g\}\in E(G)$ and $y_g\in S_i$, 
or $\{z_u,y_f\}\in E(G)$ and $y_f\in S_i$. We move the 
token $z_u$ to $x_f$. In the first
case we have connectivity via the new edge 
$\{x_f,y_g\}\in E(G')$, and in the second case we have 
connectivity via the edge $\{x_f,y_f\}\in E(G)$. The case 
$S_i\cap \{z_u,z_v\}=\{z_v\}$ is symmetric. 
\end{proof}


This finishes the proof that if $(G,S,T,k)$ is a positive instance  
then $(G',S,T,k)$ is a positive instance. 
Now assume that there
exists a reconfiguration sequence $S=S_1',\ldots, S_t'=T$ in $G'$.
In case we do not introduce the new edge to obtain $G'$ from 
$G$, we do not need new arguments to see that $S_1',\ldots, S_t'$
is a reconfiguration sequence also in $G$. 
Moreover, if $G''[S_i']$ is connected for all $i$, where $G''$ is 
obtained from $G'$ by removing the edge $\{x_f, y_g\}$, then again there is nothing to prove 
as $G'$ is a subgraph of $G$ and therefore $S=S_1',\ldots, S_t'=T$ is a reconfiguration sequence in $G$. 
Hence, assume that there exists at least one contiguous subsequence $\sigma$ starting at index $s$ and ending at index $f$ (with possibly $s = f$)
such that $G''[S_s'], G''[S_{s+1}'], \ldots, G''[S_f']$ are not connected. In other words, there exists a subsequence 
of length one or more that uses the edge $\{x_f, y_g\}$ for connectivity. 
Moreover, we assume, without loss of generality (the other case is symmetric), that $S_s'$ is obtained from $S_{s-1}'$ by adding a token on vertex $y_g$, i.e., 
$S_s' = S_{s-1}' \cup \{y_g\}$, and $S_{f+1}'$ is obtained from $S_f'$ by removing the token on vertex $x_f$, i.e., $S_{f+1}' = S_{f}' \setminus \{x_f\}$. 
We also assume that $E(G)$ contains the edges $\{x_f,z_v\}$ and $\{z_u,y_g\}$ (the remaining cases are handled identically). 
It remains to show how to modify $\sigma$ so that it does not use the edge $\{x_f, y_g\}$ for connectivity and remains a valid reconfiguration sequence in $G$. 
By applying the same arguments for any such subsequence we obtain the required reconfiguration sequence in $G$. 
We modify $\sigma$ as follows. We let $S_i'' = (S_i' \setminus \{y_g\}) \cup \{z_v\}$, for $s \leq i \leq f$. 
Then we replace $S_{f+1}'$ by four new sets $A_1$, $A_2$, $A_3$, and $A_4$, where $A_1 = S_{f}' \setminus \{x_f\}$, 
$A_2 = A_1 \cup \{z_u\}$, $A_3 = A_2 \setminus \{z_v\}$, $A_3 = A_3 \cup \{y_g\}$, and $A_4 = A_3 \setminus \{z_u\}$. 
Using the fact that the vertices $x_f,y_f,x_g,y_g$ are not adjacent to vertices of $D \setminus \{u,v\}$, 
it is easy to see that this yields a valid reconfiguration sequence, as both domination and connectivity are preserved. 
This completes the proof of the lemma. 
\end{proof}

\end{document}